\newcommand{\wcost}{\omega}
\newcommand{\argmax}{\operatorname{arg\,max}}
\newcommand{\depth}{span}
\newcommand{\id}[1]{\ifmmode\mathit{#1}\else\textit{#1}\fi}
\newcommand{\const}[1]{\ifmmode\mbox{\textc{#1}}\else\textsc{#1}\fi}
\newcommand{\anp}{Asymmetric NP}
\newcommand{\ourcomp}{$k$-d grid}
\newcommand{\kdcomp}[1]{$#1$d grid}
\newcommand{\sqr}{square grid}
\newcommand{\ourcompfull}{$k$-d grid computation structure}
\newcommand{\sqrfull}{square grid computation structure}
\begin{document}
\title{Improved Parallel Cache-Oblivious Algorithms for \\Dynamic Programming and Linear Algebra}
\author{Guy Blelloch\\Carnegie Mellon University \and Yan Gu\\Massachusetts Institute of Technology}
\date{}

\maketitle

\begin{abstract}
Emerging non-volatile main memory (NVRAM) technologies provide byte-addressability, low idle power, and improved memory-density, and are likely to be a key component in the future memory hierarchy.
However, a critical challenge in achieving high performance is in accounting for the asymmetry that NVRAM writes can be significantly more expensive than NVRAM reads.

In this paper, we consider a large class of cache-oblivious algorithms for dynamic programming (DP) and linear algebra, and try to reduce the writes in the asymmetric setting while maintaining high parallelism.
To achieve that, our key approach is to show the correspondence between these problems and an abstraction for their computation, which is referred to as the $k$-d grids.
Then by showing lower bound and new algorithms for computing $k$-d grids, we show a list of improved cache-oblivious algorithms of many DP recurrences and in linear algebra in the asymmetric setting, both sequentially and in parallel.

Surprisingly, even without considering the read-write asymmetry (i.e., setting the write cost to be the same as the read cost in the algorithms), the new algorithms improve the existing cache complexity of many problems.
We believe the reason is that the extra level of abstraction of $k$-d grids helps us to better
understand the complexity and difficulties of these problems.
We believe that the novelty of our framework is of interests and leads to many new questions for future work.
\end{abstract} 

\section{Introduction}

The \emph{ideal-cache model}~\cite{Frigo99} is widely used in designing algorithms that optimize the communication between CPU and memory.
The model is comprised of an unbounded memory and a cache of size $M$.
Data are transferred between the two levels using cache lines of size $B$, and all computation occurs on data in the cache.
An algorithm is \emph{cache-oblivious} if it is unaware of both $M$ and $B$.
The goal of designing such algorithms is to reduce the \emph{cache complexity}\footnote{In this paper, we refer to it as \emph{symmetric cache complexity} to distinguish from the case when reads and writes have different costs.} (or the \emph{I/O cost} indistinguishably) of an algorithm, which is the number of cache lines transferred between the cache and the main memory assuming an optimal (offline) cache replacement policy.
Sequential cache-oblivious algorithms are flexible and portable, and adapt to all levels of a multi-level memory hierarchy.
Such algorithms are well studied~\cite{arge2004cache,brodal2004cache,demaine2002cache}, and in many cases they asymptotically match the best cache complexity for cache-aware algorithms.
Regarding parallelism, Blelloch et al.~\cite{blelloch2010low} suggest that analyzing the depth and sequential cache complexity of an algorithm is sufficient for deriving upper bounds on parallel cache complexity.

\begin{table*}[t]
  \centering
\def\arraystretch{1.4}
    \begin{tabular}{clcc}
    \toprule
    \multicolumn{1}{c}{\bf \multirow{2}[0]{*}{Dimension}} & \multicolumn{1}{c}{\bf \multirow{2}[0]{*}{Problems}} & \multicolumn{2}{c}{\bf Cache Complexity} \\
          &       & \multicolumn{1}{l}{\bf Symmetric} & \multicolumn{1}{l@{}}{\bf Asymmetric} \\
    \midrule
    $k=2$   & LWS/GAP*/RNA/knapsack recurrences      & $\displaystyle \Theta\left(\frac{C}{B{M}}\right)$      & $\displaystyle \Theta\left(\frac{\wcost{}^{1/2} C}{B{M}}\right)$ \medskip\\
    \multirow{2}[0]{*}{$k=3$}   & Combinatorial matrix multiplication,      \vspace{-.5em}  & \multirow{2}[0]{*}{$\displaystyle \Theta\left(\frac{C}{B\sqrt{M}}\right)$}      & \multirow{2}[0]{*}{$\displaystyle \Theta\left(\frac{\wcost{}^{1/3} C}{B\sqrt{M}}\right)$} \\
    &Kleene's algorithm (APSP), Parenthesis recurrence&&\vspace{.5em}\\
    \bottomrule
    \end{tabular}%
\smallskip
\caption[I/O costs of cache-oblivious algorithms]{\label{tab:overview} Cache complexity of the algorithms based on the \ourcompfull{s}.  Here $C$ is the number of algorithmic instructions in the corresponding computation.  (*) For the GAP recurrence, the upper bounds have addition terms as shown in Section~\ref{sec:gap}.}
\vspace{-1em}
\end{table*}%

Recently, emerging non-volatile main memory (NVRAM) technologies, such as Intel's Optane DC Persistent Memory, are readily available on the market, and provide byte-addressability, low idle power, and improved memory-density.
Due to these advantages, NVRAMs are likely to be the dominant main memories in the near future, or at least be a key component in the memory hierarchy.
However, a significant programming challenge arises due to an underlying asymmetry between reads and writes---reads are much cheaper than writes in terms of both latency and throughput.
This property requires researchers to rethink the design of algorithms and software, and optimize the existing ones accordingly to reduce the writes.
Such algorithms are referred to as the \defn{write-efficient algorithms}~\cite{GuThesis}.

Many cache-oblivious algorithms are affected by this challenge.
Taking matrix multiplication as an example, the cache-aware tiling-based algorithm~\cite{AggarwalV88} uses $\Theta(n^3/B\sqrt{M})$ cache-line reads and $\Theta(n^2/B)$ cache-line writes for square matrices with size $n$-by-$n$.
The cache-oblivious algorithm~\cite{Frigo99}, despite the advantages described above, uses $\Theta(n^3/B\sqrt{M})$ cache-line reads and writes.
When considering the more expensive writes, the cache-oblivious algorithm is no longer asymptotically optimal.
Can we asymptotically improve the cache complexity of these cache-oblivious algorithms?
Can they match the best counterpart without considering cache-obliviousness?
These remain to be \emph{open} problems in the very beginning of the study of write-efficiency of algorithms~\cite{BFGGS15,carson2016write}.

In this paper, we provide the answers to these questions for a large class of cache-oblivious algorithms that have computation structures similar to matrix multiplication and can be coded up in nested for-loops.
Their implementations are based on a divide-and-conquer approach that partitions the ranges of the loops and recurses on the subproblems until the base case is reached.
Such algorithms are in the scope of dynamic programming (e.g., the LWS/GAP/RNA/Parenthesis problems) and linear algebra (e.g., matrix multiplication, Gaussian elimination, LU decomposition)~\cite{Frigo99,chowdhury2006cache,chowdhury2010cache,chowdhury2016autogen,blelloch2010low,itzhaky2016deriving,tithi2015high,tang2017,womble1993beyond,toledo1997locality}.

Since we try to cover many problems and algorithms, in this paper we propose a level of abstraction of the computation in these cache-oblivious algorithms, which is referred to as the \ourcompfull{s} (short for the \ourcomp{s}).
A more formal definition is given Section~\ref{sec:kdcomp}.
This structure and similar ones are first used by Hong and Kung~\cite{HK81} (implicitly) in their seminal paper in 1981, and then by a subsequence of later work (e.g.,~\cite{BallardDHS10,AggarwalCS90,IronyTT04,BallardDHS11,chowdhury2010cache}), mostly on analyzing the lower bounds of matrix multiplication and linear algebra problems in a variety of settings.
In this paper, we show the relationship of the \ourcomp{s} and many other dynamic programming problems, and new results (algorithms and lower bounds) related to the \ourcomp{s}.

The first intellectual contribution of this paper is to draw the connection between many dynamic programming (DP) problems and the \ourcomp{s}.
Previous DP algorithms are usually designed and analyzed based on the number of nested loops, or the number of the dimensions of which the input and output are stored and organized.
However, we observe that the key underlying factor in determining the cache complexity of these computations is the \textbf{number of input entries} involved in each basic computation cell, and such relationship will be defined formally later in Section~\ref{sec:kdcomp}.
A few examples (e.g., matrix multiplication, tensor multiplication, RNA and GAP recurrences) are also provided in Section~\ref{sec:kdcomp} to illustrate the idea.
This property is reflected by the nature of the \ourcomp{s}, and the correspondence between the problems and the \ourcomp{s} is introduced in Section~\ref{sec:numerical} and~\ref{sec:dp}.
We note that such relationship can be much more complicated than the linear algebra algorithms, and in many cases the computation of one algorithm consists of many (e.g., $O(n)$) \ourcomp{s} associated with restrictions of the order of computing.

The second intellectual contribution of this paper is a list of new results for the \ourcomp{s}.
We first discuss the lower bounds to compute such \ourcomp{s} considering the asymmetric cost between writes and reads (in Section~\ref{sec:lower}).
Based on the analysis of the lower bounds, we also propose algorithms with the matching bound to compute a \ourcomp{} (in Section~\ref{sec:algorithm}).
Finally, we also show how to parallelize the algorithm in Section~\ref{sec:para}.
We note that the approach for parallelism is independent of the asymmetric read-write cost, so the parallel algorithms can be applied to both symmetric and asymmetric algorithms.

In summary, we have shown the correspondence between the problems and the \ourcomp{s}, new lower and upper cache complexity bounds for computing the \ourcomp{s} in asymmetric setting, and parallel algorithms in both symmetric and asymmetric settings.
Putting all pieces together, we can show lower and upper cache complexity bounds of the original problems in both the symmetric and asymmetric settings, as well as spans (length of dependence) for the algorithms.
The cache complexity bounds are summarized in Table~\ref{tab:overview}, and the results for the asymmetric setting answer the open problem in~\cite{BFGGS15}.
The span bound is analyzed for each specific problem and given in Section~\ref{sec:dp},~\ref{sec:numerical} and appendices.

Surprisingly, even without considering the read-write asymmetry (i.e., setting the write cost to be the same as the read cost in the algorithms), the new algorithms proposed in this paper improve the existing cache complexity of many DP problems.
We believe the reason is that the extra level of abstraction of \ourcomp{s} helps us to better understand the complexity and difficulties of these problems.
Since \ourcomp{s} are used as a tool for lower bounds, they decouple the computation structures from the complicated data dependencies, which exposes some techniques to improve the bounds that were previously obscured.
Also, \ourcomp{s} reveal the similarities and differences between these problems, which allows the optimizations in some algorithms to apply to other problems.

\medskip
In summary, we believe that the framework for analyzing cache-oblivious algorithms based on \ourcomp{s} provides a better understanding of these algorithms.  In particular, the new theoretical results in this paper include:
\begin{itemize}
  \item We provide write-efficient cache-oblivious algorithms (i.e., in the asymmetric setting) for all problems we discussed in this paper, including matrix multiplication and several linear algebra algorithms, all-pair shortest-paths, and a number of dynamic programming recurrences.  If a write costs $\wcost{}$ times more than a read (the formal computational model shown in Section~\ref{sec:prelim}), the asymmetric cache complexity is improved by a factor of $\Theta(\wcost{}^{1/2})$ or $\Theta(\wcost{}^{2/3})$ on each problem compared to the best previous results~\cite{BFGGS15arxiv}.  In some cases, we show that this improvement is optimal under certain assumptions (the CBCO paradigm, defined in Section~\ref{sec:asym-lower-bound}).
  \item We show algorithms with improved symmetric cache complexity on many problems, including the GAP recurrence, protein accordion folding, and the RNA recurrence.  We show that the previous cache complexity bound $O(n^3/B\sqrt{M})$ for the GAP recurrence and protein accordion folding is not optimal, and we improve the bound to $O(n^2/B\cdot(n/M+\log \min\{n/\sqrt{M},\sqrt{M}\}))$ and $\Theta(n^2/B\cdot(1+n/M))$ respectively\footnote{The improvement is $O(\sqrt{M})$ from an asymptotic perspective (i.e., $n$ approaching infinity).  For smaller range of $n$ that $O(\sqrt{M})\le n\le O(M)$, the improvement is $O(n/\sqrt{M}/\log(n/\sqrt{M}))$ and $O(n/\sqrt{M})$ respectively for the two cases. (The computation fully fit into the cache when $n<O(\sqrt{M})$.)}.  For RNA recurrence, we show an optimal cache complexity of $\Theta(n^4/BM)$, which improves the best existing result by $\Theta(M^{3/4})$.
  \item We show the first race-free linear-\depth{} cache-oblivious algorithms solving all-pair shortest-paths, LWS recurrences, and protein accordion folding.  Some previous algorithms~\cite{tang2015cache,dinh2016extending} have linear span, but they are not race-free and rely on a much stronger model (discussion in Section~\ref{sec:prelim}).  Our approaches are under the standard nested-parallel model, race-free, and arguably simpler.  Our algorithms are not in-place, but we  discuss in Section~\ref{sec:para-sym} about the extra storage needed.
\end{itemize}

We believe that the analysis framework is concise.  In this single paper, we discuss the lower bounds and parallel algorithms on a dozen or so computations and DP recurrences, which can be further applied to dozens of real-world problems\footnote{Like in this paper we abstract the ``2-knapsack recurrence'', which fits into our \ourcompfull{} and applies to many algorithms.}.  The results are shown in both settings with or without considering the asymmetric cost between reads and writes.

\section{Preliminaries and Related Work}\label{sec:prelim}

\myparagraph{Ideal-cache model and cache-oblivious algorithms}
In modern computer architecture, a memory access is much more expensive compared to an arithmetic operation due to larger latency and limited bandwidth (especially in the parallel setting).
To capture the cost of an algorithm on memory access, the \emph{ideal-cache model}, a widely-used cost model, is a {two-level memory model} comprised of
an unbounded memory and a cache of size $M$.\footnote{In this paper, we often assume the cache size to be $O(M)$ since it simplifies the description and only affects the bounds by a constant factor.}
Data are transferred between the two levels using cache lines of size $B$, and all computation occurs on data in the cache.
The {cache complexity} (or the {I/O cost} indistinguishably) of an algorithm is the number of cache lines transferred between the cache and the main memory assuming an optimal (offline) cache replacement policy.
An algorithm on this model is \defn{cache-oblivious} with the additional feature that it is not aware of the value of $M$ and $B$.
In this paper, we refer to this cost as the \defn{symmetric cache complexity} (as opposed to asymmetric memory as discussed later).
Throughout the paper, we assume that the input and output do not fit into the cache since otherwise the problems become trivial.
We usually make the tall-cache assumption that $M=\Omega(B^2)$, which holds for real-world hardware and is used in the analysis in Section~\ref{sec:rna}.

\myparagraph{The nested-parallel model and work-\depth{} analysis}
In this paper, the parallel algorithms are within the standard nested-parallel model, which is a computation model and provides an easy analysis of the work-efficiency and parallelism.
In this model, a computation starts and ends with a single root task.
Each task has a constant number of registers and runs a standard instruction set from a random access machine, except it has one additional instruction called FORK, which can create two independent tasks one at a time that can be run in parallel.
When the two tasks finish, they join back and the computation continues.

A computation can be viewed as a (series-parallel) DAG in the standard way.
The cost measures on this model are the work and \depth{}---\emph{work} $W$ to be the total number of operations in this DAG and span (depth) $D$ equals to the longest path in the DAG.
The randomized work-stealing scheduler can execute such a computation on the PRAM model with $p$ processors in $W/p+O(D)$ time with high probability~\cite{blumofe1999scheduling}.
All algorithms in this paper are \emph{race-free}~\cite{feng1999efficient}---no logically parallel parts of an algorithm access the same memory location and one of the accesses is a write. 
Here we do not distinguish the extra write cost for asymmetric memory on $W$ and $D$ to simplify the description of the results, and we only capture this asymmetry using cache complexity.


Regarding parallel cache complexity, Blelloch et al.~\cite{blelloch2010low} suggest that analyzing the \depth{} and sequential cache complexity of an algorithm is sufficient for deriving upper bounds on parallel cache complexity.
In particular, let $Q_1$ be the sequential cache complexity.
Then for a $p$-processor shared-memory machine with private caches (i.e., each processor has its own cache) using a work-stealing scheduler, the total number of cache misses $Q_p$ across all processors is at most $Q_1 + O(pDM/B)$ with high
probability~\cite{Acar02}.
For a $p$-processor shared-memory machine with a shared cache of size $M + pBD$ using a parallel-depth-first (PDF)
scheduler, $Q_p\le Q_1$~\cite{BlGi04}.
We can extend these bounds to multi-level hierarchies of private or shared caches, respectively~\cite{blelloch2010low}.

\myparagraph{Parallel and cache-oblivious algorithms for dynamic programming and linear algebra}
Dynamic Programming (DP) is an optimization strategy that decomposes a problem into subproblems with optimal substructure.
It has been studied for over sixty years~\cite{bellman1957dynamic,aho1974design,CLRS}.
For the problems that we consider in this paper,
the parallel DP algorithms were already discussed by a rich literature in the eighties and nighties (e.g., \cite{galil1989speeding,galil1994parallel,eppstein1989parallel,huang1992sublinear,huang1994parallel,rytter1988efficient}).
Later work not only considers parallelism, but also optimizes symmetric cache complexity~(e.g., \cite{Frigo99,chowdhury2006cache,chowdhury2010cache,chowdhury2016autogen,blelloch2010low,itzhaky2016deriving,tithi2015high,tang2017,tang2015cache,dinh2016extending,solomonik2013minimizing,Chowdhurythesis}).
The algorithms in linear algebra that share the similar computation structures (but with different orders in the computation) are also discussed~(e.g., \cite{chowdhury2010cache,dinh2016extending,womble1993beyond,toledo1997locality,blumofe96,demmel2013communication,ballard2014communication,koanantakool2016communication}).

\myparagraph{Problem definitions}
Since we are showing many optimal cache-oblivious algorithms, we need formal problem definitions.
It is hard to show general lower bounds that any type of operations is allowed.
For example, for matrix multiplication on a semiring,
the only known lower bound of operations is just $\Omega(n^{3-o(1)})$ for Boolean matrix multiplication assuming SETH (more details of fine-grain complexity in~\cite{Williams2018some}).
Here we make no assumptions of the set of the ring other than ``$+$'' and ``$\times$''  to be atomic using unit cost and unable to be decomposed or batched (i.e., using integer tricks).
We borrow the term \emph{combinatorial matrix multiplication} to indicate this specific problem.
Such an algorithm requires $\Theta(n^3)$ operations on square matrices of size $n$ ($n^2$ inner products).
Regarding dynamic programming in Section~\ref{sec:dp}, we discuss the recurrences rather than the problems, and make assumptions shown in Section~\ref{sec:kdcomp} and Section~\ref{sec:dp}.

\myparagraph{Algorithms with asymmetric read and write costs}
Intel has already announced the new
product of the Optane DC Persistent Memory, which can be
bought from many retailers. The new memories sit on the
main memory bus and are byte-addressable. As opposed to
DRAMs, the new memories are persistent, so we refer to
them as non-volatile RAMs (NVRAMs). In addition, compared
to DRAMs, NVRAMs require significantly lower energy,
and have good read latencies and higher density. Due
to these advantages, NVRAMs are likely to be the dominant
main memories in the near future, or at least be a key component
in the memory hierarchy. However, a new property
of NVRAMs is the asymmetric read and write cost---write operations
are more expensive than reads regarding energy,
bandwidth, and latency (benchmarking results in~\cite{van2019persistent}). This property requires researchers
to rethink the design of algorithms and software, and motivates the need for \emph{write-efficient algorithms}~\cite{GuThesis}
that reduce the number of writes compared to existing algorithms.

Blelloch et al.~\cite{BBFGGMS16,BFGGS15,blelloch2016efficient}
formally defined and analyzed several sequential and parallel computation models that take
asymmetric read-write costs into account.
The model Asymmetric RAM (ARAM) extends the two-level memory model and contains a parameter $\wcost$, which corresponds to the cost of a write relative to a read to the non-volatile main memory.
In this paper, we refer to the \defn{asymmetric cache complexity} $Q$ as the number of write transfers to the main memory multiplied by $\wcost$, plus the number of read transfers.
This model captures different system consideration (latency, bandwidth, or energy) by simply plugging in a different value of $\wcost$, and also allows algorithms to be analyzed theoretically and practically.
Similar scheduling results (upper bounds) on parallel running time and cache complexity are discussed in~\cite{BBFGGMS16,blelloch2016efficient} based on work $W$, \depth{} $D$ and asymmetric cache complexity $Q$ of an algorithm.
Based on this idea, many interesting algorithms and lower bounds are designed and analyzed by various recent works~\cite{BBFGGMS16,BFGGS15,blelloch2016efficient,blelloch2016parallel,Jacob17,BBFGGMS18implicit,blelloch2018parallelfull,blelloch2018PPMM,Gu2018}.

\hide{
We assume that the input and output size is larger than $\wcost{}M$ in the asymmetric setting (e.g.,  the input is not smaller or close to the cache size), since it is more theoretically and practically interesting.
Otherwise the asymmetric cache complexity for all algorithms in this paper is $\Theta(\wcost{}n^{k-1}/B)$ trivially, the cost for output.
}
In the analysis, we always assume that the input size is much larger than the cache size (which is usually the case in practice).
Otherwise, both the upper and the lower bounds on cache complexity also include the term for output---$\wcost$ times the output size.
For simplicity, this term is ignored in the asymptotic analysis. 

Carson et al.~\cite{carson2016write} also discussed algorithms using less writes.
Their results are under some different assumptions that disallow the use of more reads, and we discuss how the assumptions affect the algorithms in the Appendix~\ref{sec:carson}.

\begin{figure*}[tp]
\centering
  \includegraphics[width=.6\linewidth]{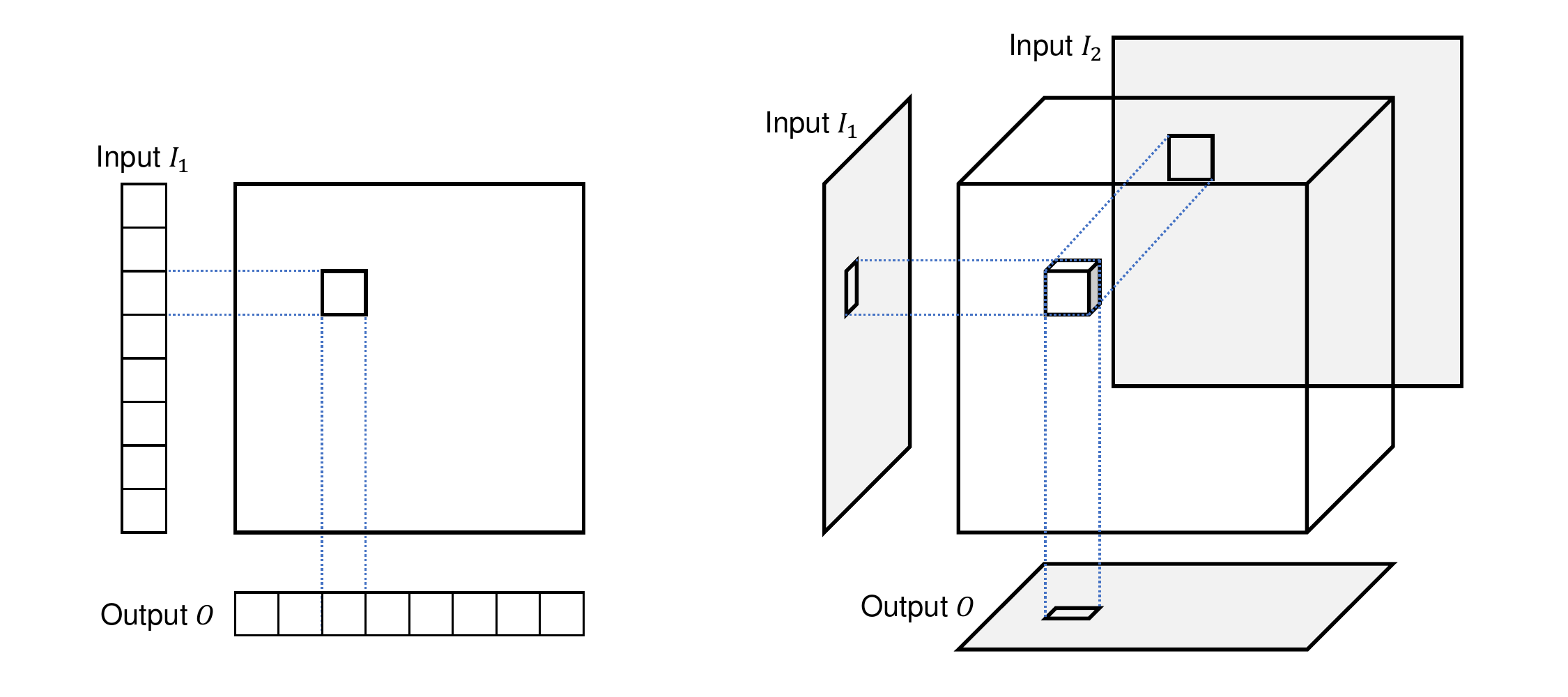}
  \vspace{-1em}
  \caption{An illustration of a $2$d and a \kdcomp{3}.  The left figure shows the $2$d case where the input $I_1$ and output $O$ are $1$d arrays, and each computation cell $g(\cdot)$ requires exactly one entry in $I_1$ as input, and update one entry in $O$.  For the $3$d case on the right, the inputs and output are $2$d arrays, and each computation cell $g(\cdot)$ requires one entry from input $I_1$ and one from input $I_2$. The input/output entries of each cell are the projections of this cell on different $2$d arrays. }\label{fig:grid}
\end{figure*}

\myparagraph{Discussions of previous work}
We now discuss several possible confusions of this paper.

The \ourcompfull{} in this paper is similar to the structure in Hong and Kung~\cite{HK81}, and some subsequence work in linear algebra (e.g.,~\cite{BallardDHS10,AggarwalCS90,IronyTT04,BallardDHS11,chowdhury2010cache}).
Several recent papers on DP algorithms are also based on grid structure (e.g., \cite{chowdhury2016autogen,itzhaky2016deriving,tithi2015high}), but the definitions in those paper are different from the \ourcomp{s} in this paper.
In the \ourcomp{}, the dimension of the grid is related to the number of entries per basic computation unit (formal definition in Section~\ref{sec:kdcomp}), not the dimension of the input/output arrays.
However, in the special case when the number of input entries per basic computation cell is the same as the dimension of the input/output arrays, the analysis based on \ourcomp{} provides the \textbf{\emph{same}} sequential symmetric cache complexity as the previous work.
One such example is matrix multiplication~\cite{Frigo99},
\hide{
 \guy{I'm not sure the rest of this sentence is necessary}, and we recommend the reader to use it as an instantiation to understand our improved parallel and asymmetric algorithms in Section~\ref{sec:algorithm} and~\ref{sec:para}.
\guy{the rest of this paragraph does not make much sense to me.}
This is also the case for other linear-algebra algorithms in the full version of this paper, and the LWS and Parenthesis recurrences in Section~\ref{sec:lws} and~\ref{sec:parenthesis}.}
and in these cases we still provide new lower and upper asymmetric cache bounds as well as parallel approaches (new span bounds).
For other problems (GAP, RNA, protein accordion folding, knapsack), the bounds in the symmetric setting are also improved.


\hide{
Since we believe that the abstraction and definition are new, the proof of lower bounds in Section~\ref{sec:lower} focuses on the simplest sequential setting with a limited-size cache, which is different from the parallel or distributed settings with infinite-size local memory discussed in previous work (e.g., \cite{BallardDHS10,AggarwalCS90,IronyTT04,BallardDHS11}).
It is an interesting future work to extend the results in this paper to the more complicated settings, especially in the asymmetric setting.}

Some previous work~\cite{tang2015cache,dinh2016extending} achieves the linear \depth{} in several problems.
We note that they assume a much stronger model to guarantee the sequential and parallel execution order, so their algorithms need specially designed schedulers~\cite{dinh2016extending,chowdhury2017provably}.
Our algorithms are much simpler and under the nested-parallel model.
Also, all algorithms in this paper are race-free, while previous algorithms heavily rely on concurrent-writes to improve span.
The space issue of algorithms is discussed in Section~\ref{sec:para-sym}.

The dynamic programming recurrences discussed in this paper have non-local dependencies (definition given in~\cite{galil1994parallel}), and we point out that they are pretty different from the problems like edit distance or stencil computations (e.g., \cite{chowdhury2010cacheb,frigo2005cache,hirschberg1975linear,landau1986introducing}) that only have local dependencies.
We did not consider other types of dynamic programming approaches like rank convergence or hybrid $r$-way DAC algorithms~\cite{maleki2016efficient,maleki2016low,chowdhury2008cache} that cannot guarantee processor- and cache-obliviousness simultaneously.

\section{$k$-d Grid Computation Structure}\label{sec:kdcomp}

The \ourcompfull{} (short for the \ourcomp{}) is defined as a $k$-dimensional grid $C$ of size $n_1\times n_2\times \cdots\times n_k$.
Here we consider $k$ to be a small constant greater than 1.
This computation requires $k-1$ input arrays $I_1,\cdots,I_{k-1}$ and generates one output array $O$.
Each array has dimension $k-1$ and is the projection of the grid removing one of the dimensions.
Each \defn{cell} in the grid represents some certain computation that requires $k-1$ inputs and generates a temporary value.
This temporary value is ``added'' to the corresponding location in the output array using an associative operation $\oplus$. 
The $k-1$ inputs of this cell are the projections of this cell removing each (but not the last) dimensions, and the output is the projection removing the last dimension.
They are referred to as the input and output \defn{entries} of this cell.
Figure~\ref{fig:grid} illustrates such a computation in 2 and 3 dimensions.
This structure (mostly the special case for 3d as defined below) is used implicitly and explicitly by Hong and Kung~\cite{HK81} and some subsequence works (e.g.,~\cite{BallardDHS10,AggarwalCS90,IronyTT04,BallardDHS11,chowdhury2010cache}).
In this paper, we will use it as a building block to prove lower bounds and design new algorithms for dynamic programming problems.
When showing the cache complexity, we assume the input and output entries must be in the cache when computing each cell.

We refer to a \ourcompfull{} as a \defn{\sqrfull{}} (short for a \sqr{}) of size $n$ if it has size $n_1=\cdots=n_k=n$.
More concisely, we say a \ourcomp{} has size $n$ if it is square and of size $n$.

A formal definition of a square \kdcomp{3} of size $n$ is as follows:
$$O_{i,j}=\sum_k{g\left((I_{1})_{i,k}(I_{2})_{k,j},i,j,k\right)}$$
where $1\le i,j,k\le n$.
$g(\cdot)$ computes a value based on the two inputs $(I_{1})_{i,k}$ and $(I_{2})_{k,j}$ the indices, and some constant amount of data that is associated to the indices.
We assume that computing $g(\cdot)$ takes unit cost.
Each application of $g(\cdot)$ corresponds to a cell, and $(I_{1})_{i,k}$, $(I_{2})_{k,j}$ and $O_{i,j}$ are entries associated with this cell.
The sum $\sum$ is based on the associative operator $\oplus.$
Similarly, the definition for the 2d case is:
$$O_{i}=\sum_j{g\left(I_{j},i,j\right)}$$
and we can extend it to non-square cases and for $k>3$ accordingly.

We allow the output $O$ to be the same array as the input array(s) $I$.   This is used for all DP recurrences.
In these algorithms, some of the cells are empty to avoid cyclic dependencies.
For example, in a \kdcomp{2}, we may want to restrict $1\le j<i$.
In these cases, a constant fraction of the grid cells are empty.
We call such a grid an \defn{$\alpha$-full} grid for some constant $0<\alpha<1$ if at least an $\alpha\pm o(1)$ fraction of the cells are non-empty.
We will show that all properties we show for a \ourcomp{} also work for the $\alpha$-full case, since the constant $\alpha$ affects neither the lower bounds nor the algorithms.

We now show some examples that can be matched to \ourcomp{s}.
Multiplying two matrices of size $n$-by-$n$ on a semiring $(\cdot,+)$ (i.e., $O_{i,j}=\sum_k{(I_{1})_{i,k}(I_{2})_{k,j}}$) exactly matches a 3d \sqr{}.
A corresponding 2d case is when computing a matrix-vector multiplication $O_i = \sum_j I_j\cdot f(i,j)$ where $f(i,j)$ does not need to be stored.
Such applications are commonly seen in dynamic programming algorithms.
For example, the widely used LWS recurrence (Section~\ref{sec:lws}) that computes $D_j=\min_{0\le i<j}\{D_i+w(i,j)\}$ is a \kdcomp{2}, and the associative operator $\oplus$ is $\min$.  In this case the input is the same array as the output.
These are the simple cases, so even without using the \ourcomp{}, the algorithms for them in the symmetric setting are already studied in~\cite{chowdhury2006cache,Frigo99}.


However, not all DP recurrences can be viewed as \ourcomp{s} straightforwardly.
As shown above, the key aspect of deciding the dimension of a computation is the number of inputs that each basic cell $g(\cdot)$ requires.
For example, when multiplying two dense tensors, although each tensor may have multiple dimensions, each multiplication operation is only based on two entries and can be written in the previous 3d form, so the computation is a \kdcomp{3}.
Another example is the RNA recurrence that computes a 2D array
\(\displaystyle D_{i,j}=\min_{\substack{0\le p<i,0\le q<j}}\{D_{p,q}+w(p,q,i,j)\}\).
Assuming $w(p,q,i,j)$ can be queried on-the-fly, the computation is the simplest \kdcomp{2}.
Despite that the DP table has size $O(n^2)$ and $O(n^4)$ updates in total, the computation is no harder than the simplest LWS recurrence mentioned in the previous paragraph.
Similarly, in the GAP recurrence in Section~\ref{sec:gap}, each element in the DP table is computed using many other elements similar to matrix multiplication.
However, each update only requires the value of one input element and can be represented by a set of \kdcomp{2}s, unlike matrix multiplication that is a \kdcomp{3} and uses the values of two input elements in each update.
The exact correspondence between the \ourcomp{} and the DP recurrences are usually more sophisticated than their classic applications in linear algebra problems, as shown in Section~\ref{sec:dp},~\ref{sec:numerical} and appendices.
The cache-oblivious algorithms discussed in this paper are based on \ourcomp{s} with $k=2$ or $3$, but we can also find applications with larger $k$ (e.g., a Nim game with some certain rules on multiple piles~\cite{bouton1901nim}).

We note that our definition of the \ourcomp{} cannot abstract all possible recurrences and computations, but it is sufficient to analyze the DP recurrences and algorithms shown in Section~\ref{sec:dp},~\ref{sec:numerical} and appendices.
Also the \ourcomp{} is designed to analyze computations with non-local dependencies~\cite{galil1994parallel}, so it is not useful to problems such as the classic edit distance and matrix addition.


\section{Lower Bounds}\label{sec:lower}

We first discuss the lower bounds of the cache complexities for a \ourcompfull{}, which sets the target to design the algorithms in the following sections.
In Section~\ref{sec:trivial-lower-bound} we show the symmetric cache complexity.
This is a direct extension of the classic result by Hong and Kong~\cite{HK81} to an arbitrary dimension.
Then in Section~\ref{sec:asym-lower-bound} we discuss the asymmetric cache complexity when writes are more expensive than reads, which is more interesting and has a more involved analyses.

\subsection{Symmetric Cache Complexity}\label{sec:trivial-lower-bound}

The symmetric cache complexity of a \ourcomp{} is simple to analyze, yielding to the following result:
\begin{theorem}[\cite{HK81}]\label{thm:sym-lower}
The symmetric cache complexity of a \ourcompfull{} with size $n$ is $\displaystyle\Omega\left(\frac{n^k}{M^{1/(k-1)}B}\right)$.
\end{theorem}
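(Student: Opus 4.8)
The plan is to prove the bound by the classical two-step strategy underlying Hong and Kung's pebbling lower bound: first partition any execution into \emph{phases} that each perform a fixed amount of I/O, then bound the number of grid cells that can possibly be computed within a single phase by a combinatorial projection inequality. Concretely, I would fix any valid execution of an algorithm computing the \ourcompfull{} and let $Q$ be its total cache complexity (number of cache-line transfers). I cut the linear sequence of memory transfers into consecutive phases, each containing exactly $M/B$ transfers (the final phase possibly fewer), so that the number of phases is $\lceil Q/(M/B)\rceil = \Theta(QB/M)$. The goal is then to show that each phase computes at most $O(M^{k/(k-1)})$ cells; since a full (or $\alpha$-full) grid has $\Theta(n^k)$ non-empty cells, this forces $\Theta(QB/M) = \Omega(n^k/M^{k/(k-1)})$, and rearranging (using $k/(k-1)-1 = 1/(k-1)$) yields $Q = \Omega(n^k/(M^{1/(k-1)}B))$.

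The heart of the argument is the per-phase cell bound. Within one phase the set of memory words ever available to the processor is small: at most $M$ words reside in the cache at the start of the phase, and at most $(M/B)\cdot B = M$ further words are brought in (or, for writes, sent out) by the $M/B$ transfers of the phase, so at most $2M$ distinct words are touched. By the standing assumption that every cell must have all $k-1$ of its input entries and its output entry resident in cache at the moment it is computed, the entries of each of the $k$ arrays that participate in cells of this phase lie among these available words. Letting $V\subseteq[n]^k$ be the set of cells computed in the phase, each of the $k$ coordinate projections $\pi_j(V)$ — the $k-1$ projections onto the input arrays $I_1,\dots,I_{k-1}$ and the one onto the output array $O$ — therefore has size at most $2M$. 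The Loomis--Whitney inequality $|V|^{k-1}\le\prod_{j=1}^{k}|\pi_j(V)|$ then gives $|V|\le(2M)^{k/(k-1)} = O(M^{k/(k-1)})$, exactly as required.

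The step I expect to be most delicate is bounding the output projection $|\pi_k(V)|$, because output entries are not read-only: each cell \emph{accumulates} its temporary value into $O$ through the associative operator $\oplus$, and a partially accumulated output entry may be evicted and later reloaded across phase boundaries. To keep $|\pi_k(V)|\le 2M$ I would argue that every output entry receiving a contribution during the phase is, by the end of the phase, either still resident in cache (at most $M$ such entries) or has already been transferred out by one of the phase's writes (at most $M$ words), so the two cases together total at most $2M$ entries; crucially this counts only contributions made \emph{in this phase}, which is precisely what the projection of $V$ records. The remaining points are routine: the input projections are bounded directly by availability, and the $\alpha$-full case changes only the constant in $\Theta(n^k)$ and hence only the constant hidden in $\Omega(\cdot)$. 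Since the inequalities are independent of the specific values of $B$ and $M$ and of the algorithm's knowledge of them, the resulting bound holds in the cache-oblivious setting, matching Theorem~\ref{thm:sym-lower}.
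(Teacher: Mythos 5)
Your proof is correct, and it is the same Hong--Kung-style partitioning argument as the paper's, but built on the dual decomposition. The paper partitions the sequence of \emph{cells} into blocks of a fixed size $S=2^kM^{k/(k-1)}$ and uses Loomis--Whitney to show that each block must touch at least $kS^{(k-1)/k}\ge 4M$ distinct entries, hence incurs $\Omega(M/B)$ transfers beyond the at most $M$ entries already resident when the block begins; you instead partition the sequence of \emph{transfers} into phases of $M/B$ transfers and use Loomis--Whitney in the contrapositive direction to cap the number of cells computable per phase at $(2M)^{k/(k-1)}$. The two formulations are equivalent and yield identical bounds. Yours is closer to the original red-blue pebble-game presentation and forces you to account explicitly for the words available in a phase --- including the one genuinely delicate point, that a dirty (partially accumulated) output entry evicted mid-phase must be charged against one of that phase's write transfers, which you handle correctly. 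The paper's cell-block version sidesteps that bookkeeping, since it only needs the weaker observation that at most $M$ entries can be cached at the start of a block, at the cost of choosing the block size $S$ so that the required entry count comfortably exceeds $M$. Both arguments are oblivious to how the algorithm uses $M$ and $B$, so both are valid lower bounds for arbitrary (not just cache-oblivious) algorithms, and both extend to the $\alpha$-full case with only a change of constants.
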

The following proof is an extension of the proof by Hong and Kong~\cite{HK81} and we show it here again for the completeness and to help to understand the proof in Section~\ref{sec:asym-lower-bound} which has a similar outline.
\begin{proof}
In a \ourcompfull{} with size $n$ there are $n^k$ cells.
Let's sequentialize these cells in a list and consider each block of cells that considers $S=2^kM^{{k}/(k-1)}$ consecutive cells in the list.
The number of input entries required of each block is the projection of all cells in this block along one of the first $k-1$ dimensions (see Figure~\ref{fig:grid}), and this is similar for the output.
Loomis-Whitney inequality~\cite{loomis1949inequality,BallardDHS11} indicates that the overall number of input and output entries is minimized when the cells are in a square $k$-d cuboid, giving a total of  $k S^{(k-1)/k} = k\cdot \left(2M^{1/(k-1)}\right)^{k-1}\ge 4M$ input and output entries.
Since only a total of $M$ entries can be held in the cache at the beginning of the computation of this block, the number of cache-line transfer for the input/output during the computation for such a block is $\Omega(M/B)$.
Since there are $n^k/S=\Theta(n^kM^{-k/(k-1)})$ such blocks, the cache complexity of the entire computation is $\Omega(M/B)\cdot n^k/S=\Omega(n^k/(M^{1/(k-1)}B))$.
\end{proof}

Notice that the proof does not assume cache-obliviousness, but the lower bound is asymptotically tight by applying a sequential cache-oblivious algorithm that is based on $2^k$-way divide-and-conquer~\cite{Frigo99}.

\subsection{Asymmetric Cache Complexity}\label{sec:asym-lower-bound}

We now consider the asymmetric cache complexity of a \ourcompfull{} assuming writes are more expensive.
Unfortunately, this case is significantly harder than the symmetric setting.
Again for simplicity we first analyze the \sqr{} of size $n$, which can be extended to the more general cases similar to~\cite{Frigo99}.

Interestingly, there is no specific pattern that a cache-oblivious algorithm has to follow.
Some existing algorithms use ``buffers'' to support cache-obliviousness (e.g., \cite{Arge03}), and many others use a recursive divide-and-conquer framework.
For the recursive approaches, when the cache complexity of the computation is not leaf-dominated (like various sorting algorithms~\cite{Frigo99,BFGGS15}), a larger fan-out in the recursion is more preferable (usually set to $O(\sqrt{n})$).
Otherwise, when it is leaf-dominated, existing efficient algorithms all pick a constant fan-out in the recursion in order to reach the base case and fit in the cache with maximal possible subproblem size.
All problems we discuss in this paper are in this category, so we make our analysis under the following constraints.
More discussion about this constraint in given in Section~\ref{sec:conclusion}.

\begin{definition} [CBCO paradigm]
We say a divide-and-conquer algorithm is under the \emph{constant-branching cache-oblivious (CBCO) paradigm} if it has an input-value independent computational DAG, such that each task has constant\footnote{It can exponentially depend on $k$ since we assume $k$ is a constant.} fan-outs of its recursive subtasks until the base cases, and the partition of each task is decided by the ratio of the ranges in all dimensions of the (sub)problem and independent of the cache parameters ($M$ and $B$).   
\end{definition}

Notice that $\wcost{}$ is a parameter of the main memory, instead of a cache parameter, so the  algorithms can be aware of it.  One can define resource-obliviousness~\cite{cole2010resource} so that the value of $\wcost{}$ is not exposed to the algorithms, but this is out of the scope of this paper.

\hide{
For divide-and-conquer CO algorithms, the overall memory footprints in different recursive levels can either be root-dominated, balanced, or leaf-dominated.
For existing CO-MM algorithms on symmetric memories that have optimal cache complexity, each task has a constant number of subtasks.  This is because the recurrence of the I/O cost in matrix multiplication is leaf-dominated, and constant branching can fit the subtasks into the cache when the memory footprint is between $M/c'$ and $M$ for $c'$ is no more than the number of branches.  A non-constant branching on the other hand, can delay this in the worst case, and thus increases the overall cache complexity.  Notice that the CBCO paradigm is not required for root-dominated or balanced divide-and-conquer computation.
}

We now prove the (sequential) lower bound on the asymmetric cache complexity of a \ourcomp{} under the CBCO paradigm.
The constant branching and the partition based on the ratio of the ranges in all dimensions restrict the computation pattern and lead to the ``scale-free'' property of the cache-oblivious algorithms: the structure or the ``shape'' of each subproblem in the recursive levels is similar, and only the size varies.
The proof references this property when it is used.
The CBCO paradigm also restricts the shape of the computation, which is a stronger assumption than the Loomis-Whitney inequality used in the previous proof.

\begin{theorem}\label{thm:asym-lower}
The asymmetric cache complexity of \ourcomp{} is $\displaystyle \Omega\left(n^k\wcost^{1/k}\over M^{1/(k-1)}B\right)$ under the CBCO paradigm.
\end{theorem}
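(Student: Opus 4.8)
The plan is to mirror the proof of Theorem~\ref{thm:sym-lower}: sequentialize the $n^k$ cells in execution order, cut this list into consecutive blocks, lower-bound the cache-line transfers forced inside each block, and sum over blocks. The new ingredient is that I would track reads and writes \emph{separately} and weight the writes by $\wcost$, and I would use the scale-free property of the CBCO paradigm---rather than the Loomis--Whitney inequality alone---to pin down the \emph{shape} of each block. By the scale-free property, every maximal run of consecutively executed cells is, up to constant factors, a scaled copy of one fixed box shape; for a \sqr{} the relevant degree of freedom is the aspect ratio $\rho$ between the extent of the accumulation dimension (the one projected away to form the output) and the extent of each output dimension. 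I would then consider the blocks at the scale where the footprint---the number of distinct input and output entries touched---is $\Theta(\memsize)$. Loomis--Whitney only asserts that this footprint is minimized by a cube; the CBCO constraint says more, namely that the block actually \emph{is} a fixed-aspect box, and this extra rigidity is what exposes the asymmetric cost.

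Second, for each block I would bound the forced reads and the forced writes separately. Since the footprint is $\Theta(\memsize)$ while only $\memsize$ entries fit in cache, a constant fraction of the input faces must be (re)read and a constant fraction of the output face must be evicted, i.e.\ written. Elongating the box along the accumulation dimension decreases the number of distinct output entries per block (cheaper writes) but shrinks the admissible box, increasing the number of blocks and hence the total reads. Writing both totals as functions of $\rho$ and balancing the read term against $\wcost$ times the write term, the optimum sits at the scale where the weighted write cost equals the read cost; this is the step that manufactures the factor $\wcost^{1/k}$. Equivalently, the write penalty behaves like shrinking the usable cache from $\memsize$ to an effective size $\Theta(\memsize/\wcost^{(k-1)/k})$, and feeding this effective cache into the symmetric bound of Theorem~\ref{thm:sym-lower} reproduces exactly
\[
\Omega\!\left(\frac{n^k\,\wcost^{1/k}}{M^{1/(k-1)}B}\right),
\]
since $(\memsize/\wcost^{(k-1)/k})^{1/(k-1)} = \memsize^{1/(k-1)}/\wcost^{1/k}$.

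The hard part, and the step I expect to be the main obstacle, is the \emph{write} lower bound. Output entries are partial sums combined by the associative operator $\oplus$, so a block that touches many output entries need not write them all out: the algorithm may legitimately keep an output region cache-resident and accumulate many contributions before committing a single write. A naive per-block write count therefore overestimates the writes and yields the wrong exponent (one obtains $\wcost^{1/(k-1)}$ rather than $\wcost^{1/k}$, because it ignores this merging). The real content of the theorem is to show that, under the scale-free constraint, an algorithm cannot simultaneously keep the input-reuse region large (which is what makes reads cheap) \emph{and} keep the per-output accumulation runs long (which is what makes writes cheap): the single fixed aspect ratio couples these two quantities, and it is precisely this coupling---unavailable from Loomis--Whitney, and the reason the CBCO paradigm is assumed---that forces the trade-off.

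Concretely, I would expect the bulk of the work to be a careful potential/charging argument certifying this coupling. The delicate point is that the accumulation segments feeding one output region are interleaved, by the scale-free recursion, with work on sibling output regions, so between two successive visits to a fixed output region its cache-resident image is displaced and must be re-materialized. Quantifying this re-materialization---charging each forced write against the portion of the cache the algorithm was obliged to devote to input reuse at the same scale, and showing the two cannot both be made cheap---is where the asymmetric argument genuinely departs from the symmetric one, and it is the part I would develop most carefully before optimizing $\rho$ to obtain the stated bound.
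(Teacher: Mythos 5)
Your plan is essentially the paper's own proof: the same Hong--Kung block decomposition with separate read/write accounting, the same use of the CBCO scale-free property to fix a single aspect ratio $r$ between the accumulation dimension and the output dimensions, and the same optimization $r=\wcost^{(k-1)/k}$ that balances the read term against $\wcost$ times the write term to produce the $\wcost^{1/k}$ factor (your effective-cache-size reformulation is an equivalent view of the same calculation). The obstacle you anticipate in the write bound is handled in the paper without any charging argument: the per-block write count is simply $\max\{a^{k-1}-M,0\}$, i.e., the number of distinct output entries touched in the block minus the at most $M$ accumulators that can remain cache-resident across the block boundary, which already credits the algorithm for all in-cache merging of partial sums.
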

\begin{proof}
We prove the lower bound using the same approach in Section~\ref{sec:trivial-lower-bound}---putting all operations (cells) executed by the algorithm in a list and analyzing blocks of $S$ cells.
The cache can hold $M$ entries as temporary space for the computation.
For the lower bound, we only consider the computation in each cell without considering the step of adding the calculated value back into the output array, which only makes the problem easier.
Again when applying the computation of each cell, the $k$ input and output entries have to be in the cache.

For a block of cells with size $S$, the cache needs to hold the entries in $I_1,\cdots,I_{k-1}$ and $O$ corresponding to the cells in this block at least once during the computation.
Similar to the symmetric setting discussed above, the number of entries is minimized when the sequence of operations are within a $k$-d cuboid of size $S=a_1\times a_2\times \cdots \times a_k$ where the projections on $I_i$ and $O$ are $(k-1)$-d arrays with sizes $a_1\times \cdots \times a_{i-1} \times a_{i+1} \times \cdots \times a_k$ and $a_1\times\cdots\times a_{k-1}$.
Namely, the number of entries is at least $S/B\cdot 1/a_i$ for the corresponding input or output array.

Note that the input arrays are symmetric to each other regarding the access cost, but in the asymmetric setting storing the output entries is more expensive since they have to be written back to the asymmetric memory.
As a result, the cache complexity is minimized when $a_1=\cdots=a_{k-1}=a$, and let's denote $a_k=ar$ where $r$ is the ratio between $a_k$ and other $a_i$.
Here we assume $r\ge 1$ since reads are cheaper.
Due to the scale-free property that $M$ and $n$ are arbitrary, $r$ should be fixed (within a small constant range) for the entire recursion.

Similar to the analysis for Theorem~\ref{thm:sym-lower}, for a block of size $S$, the read transfers required by the cache is $\displaystyle\Omega\left({n^k\over SB}\cdot \max\{a^{k-1}r-M,0\}\right)$, where $n^k/S$ is the number of such blocks, and $\max\{a^{k-1}r-M,0\}/B$ lower bounds the number of reads per block because at most $M$ entries can be stored in the cache from the previous block.
Similarly, the write cost is $\displaystyle\Omega\left({\wcost{}n^k\over SB}\cdot \max\{a^{k-1}-M,0\}\right)$.
In total, the cost is:
\begin{eqnarray*}
  Q&=&\Omega\left({n^k\over SB}\cdot \left(\max\{a^{k-1}r-M,0\}+\wcost{}\max\{a^{k-1}-M,0\}\right)\right)\\
  &=&\Omega\left({n^k\over SB}\left(\max\{S^{(k-1)/k}r^{1/k}-M,0\}+\wcost{}\max\left\{{S^{(k-1)/k}\over r^{(k-1)/k}}-M,0\right\}\right)\right)
\end{eqnarray*}
The second step is due to $S=\Theta(a^kr)$.

The cost decreases as the increase of $S$, but it has two discontinuous points $S_1=M^{k/(k-1)}/r^{1/(k-1)}$ and $S_2=M^{k/(k-1)}r$.
Therefore,
\begin{eqnarray*}
  Q&=&\Omega\left({n^k\over S_1B}S_1^{(k-1)/k}r^{1/k}+{n^k\over S_2B}\left(S_2^{(k-1)/k}r^{1/k}+{\wcost{} S_2^{(k-1)/k}\over r^{(k-1)/k}}\right)\right)\\
  &=&\Omega\left({n^k\over S_1^{1/k}B}r^{1/k}+{n^k\over S_2^{1/k}B}\left(r^{1/k}+{\wcost{} \over r^{(k-1)/k}}\right)\right)  \\
  &=&\Omega\left({n^k\over M^{1/k}B}\left(r^{1/k}+{\wcost{}\over r}\right)\right)
\end{eqnarray*}
Setting $r=\wcost^{(k-1)/k}$ minimizes ${n^k\over M^{1/k}B}\left(r^{1/k}+{\wcost{}\over r}\right)$.
In this case, the lower bound of the asymmetric cache complexity $Q$ is  $\displaystyle\Omega\left(n^k\wcost^{1/k}\over M^{1/(k-1)}B\right)$, and this leads to the theorem.
\end{proof}


\section{A Matching Upper Bound on Asymmetric Memory}\label{sec:algorithm}

In the sequential and symmetric setting, the classic cache-oblivious divide-and-conquer algorithms to compute the \ourcomp{} (e.g., 3D case shown in~\cite{Frigo99}) is optimal.
In the asymmetric setting, the proof of Theorem~\ref{thm:asym-lower} indicates that the classic algorithm is not optimal and off by a factor of $\wcost{}^{(k-1)/k}$.
This gap is captured by the balancing factor $r$ in the proof, which leads to more cheap reads and less expensive writes in each sub-computation.

We now show that the lower bound in Theorem~\ref{thm:asym-lower} is tight by a (sequential) cache-oblivious algorithm with such asymmetric cache complexity.
The algorithm is given in Algorithm~\ref{algo:comp}, which can be viewed as a variant of the classic approach with minor modifications on how to partition the computation.
Notice that in line~\ref{line:partition1} and~\ref{line:sym-start}, ``conceptually'' means the partitions are used for the ease of algorithm description.
In practice, we can just pass the ranges of indices of the subtask in the recursion, instead of actually partitioning the arrays.

\begin{algorithm}[t]
\caption{$\mf{Asym-Alg}(I_1,\cdots,I_{k-1},O)$}
\label{algo:comp}
\KwIn{$k-1$ input arrays $I_1,\cdots,I_{k-1}$, read/write asymmetry $\wcost{}$}
\KwOut{Output array $O$}
The computation has size $n_1\times n_2\times \cdots\times n_k$
    \vspace{0.5em}

    \uIf {$I_1,\cdots,I_{k-1},O$ are small enough} {
        Solve the base case and \Return
    }
    $i\gets \argmax_{1\le i\le k}\{n_ix_i\}$ where $x_k=\wcost^{-(k-1)/k}$ and $x_j=1$ for $1\le j<k$\label{line:pick-dim}\\
    \uIf {$i=k$\label{line:asym}} {
        (Conceptually) equally partition $I_1,\cdots,I_{k-1}$ into $\{I_{1,a},I_{1,b}\},\cdots,\{I_{k-1,a},I_{k-1,b}\}$ on $k$-th dimension\label{line:partition1}\\
        $\mf{Asym-Alg}(I_{1,a},\cdots,I_{k-1,a},O)$\\
        $\mf{Asym-Alg}(I_{1,b},\cdots,I_{k-1,b},O)$\label{line:asym-end}\\
    } \uElse { 
        (Conceptually) equally partition $I_1,\cdots,I_{i-1},I_{i+1},\cdots,I_{k-1},O$ into $\{I_{1,a},I_{1,b}\},\cdots,\{I_{k-1,a},I_{k-1,b}\},\{O_a,O_b\}$ on $i$-th dimension\label{line:sym-start}\\
        $\mf{Asym-Alg}(I_{1,a},\cdots,I_{i-1,a},I_{i},I_{i+1,a},\cdots,I_{k-1,a},O_a)$\\
        $\mf{Asym-Alg}(I_{1,b},\cdots,I_{i-1,b},I_{i},I_{i+1,b},\cdots,I_{k-1,b},O_b)$\label{line:sym-end}
    }
\end{algorithm}

Compared to the classic approaches (e.g., \cite{Frigo99}) that partition the largest input dimension among $n_i$, the only underlying difference in the new algorithm is in line~\ref{line:pick-dim}---when partitioning the dimension not related to the output array $O$ (line~\ref{line:partition1}--\ref{line:asym-end}), $n_k$ has to be $\wcost{}^{(k-1)/k}$ times larger than $n_1,\cdots,n_{k-1}$.
This modification introduces an asymmetry between the input size and output size of each subtask, which leads to fewer writes in total and an improvement in the cache efficiency.

For simplicity, we show the asymmetric cache complexity for square grids (i.e., $n_1=\cdots=n_k$) and $n=\Omega(\wcost^{(k-1)/k}M)$, and the general case can be analyzed similar to~\cite{Frigo99}.

\begin{theorem}
Algorithm~\ref{algo:comp} computes the \ourcomp{} of size $n$ with asymmetric cache complexity $\displaystyle\Theta\left(n^k\wcost^{1/k}\over M^{1/(k-1)}B\right)$.
\end{theorem}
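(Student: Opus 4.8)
The plan is to establish a matching upper bound by analyzing the recursion tree of Algorithm~\ref{algo:comp} and summing the cache costs across recursive levels, mirroring the lower-bound structure of Theorem~\ref{thm:asym-lower} so that the balancing factor $r = \wcost^{(k-1)/k}$ reappears naturally. First I would set up the recurrence for the asymmetric cache complexity $Q(\cdot)$. Because the algorithm partitions along the largest \emph{weighted} dimension (line~\ref{line:pick-dim}, with weight $x_k = \wcost^{-(k-1)/k}$ on the last dimension and unit weights elsewhere), the subproblems are not square: the scale-free property forces every subproblem to maintain the same aspect ratio, namely side length $a$ in the first $k-1$ dimensions and $a r$ in the last, with $r = \wcost^{(k-1)/k}$. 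I would track subproblems by this shape rather than by a single size parameter.

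The key step is to identify the base case of the recursion, i.e.\ the largest subproblem that fits in cache. A subproblem fits when its total footprint—the input projections of size $\Theta(a^{k-1} r)$ each plus the output projection of size $\Theta(a^{k-1})$—is $\Theta(M)$. With the fixed ratio $r$, the dominant constraint is the input footprint $a^{k-1} r = \Theta(M)$, giving base-case side length $a = \Theta\bigl((M/r)^{1/(k-1)}\bigr)$ and base-case volume $S = \Theta(a^k r)$. At the base case the cost is just loading/storing the footprint: $\Theta(a^{k-1} r / B)$ reads plus $\Theta(\wcost \cdot a^{k-1}/B)$ writes; substituting $r = \wcost^{(k-1)/k}$ makes these two terms balance to $\Theta(M/B)$ each, so each base case costs $\Theta(M/B)$ in asymmetric units. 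Then I would count the number of base-case leaves as $\Theta(n^k / S)$ and multiply, obtaining
\begin{equation*}
Q = \Theta\!\left(\frac{n^k}{S} \cdot \frac{M}{B}\right)
  = \Theta\!\left(\frac{n^k}{a^k r B/M}\right)
  = \Theta\!\left(\frac{n^k \wcost^{1/k}}{M^{1/(k-1)} B}\right),
\end{equation*}
where the last equality comes from substituting $a^k r = \Theta\bigl(M^{k/(k-1)} / r^{1/(k-1)}\bigr)$ and $r = \wcost^{(k-1)/k}$, exactly matching Theorem~\ref{thm:asym-lower}.

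To make this rigorous rather than leaf-counting heuristics, I would verify that the recurrence is \emph{leaf-dominated}, so that the total cost is dominated by the base-case level and the internal-node costs form a geometrically decreasing series. This amounts to checking that when a problem is split into two subproblems the per-level cost does not grow as we descend—which holds precisely because the cache cost of the \ourcomp{} is leaf-dominated (the same reason the CBCO paradigm fixes constant branching). I would confirm that partitioning the weighted-largest dimension keeps every subproblem's shape within a bounded range of the target ratio $r$, so the base cases are all $\Theta(M)$-footprint and the analysis above applies uniformly.

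The main obstacle I expect is the bookkeeping around the \emph{rounding} of the aspect ratio: since partitioning halves one dimension at a time, the ratio $n_k / n_1$ oscillates within a constant factor of $r$ rather than staying exactly at $r$, and I must confirm that this oscillation changes the base-case footprint—and hence the final bound—only by a constant factor. A secondary subtlety is justifying that each cell's three (or $k$) entries genuinely reside in cache simultaneously at the base case and that the write cost is charged correctly under the assumption $n = \Omega(\wcost^{(k-1)/k} M)$, which guarantees the problem does not fit entirely in cache and the leaf-dominated regime is the relevant one. Both are constant-factor arguments, so I expect the asymptotic bound to go through cleanly once the shape invariant is pinned down.
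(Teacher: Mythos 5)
There is a genuine gap in your per-leaf cost accounting, and it is not a constant-factor issue. With $r=\wcost^{(k-1)/k}$ and the base case set where the \emph{input} footprint fits, i.e.\ $a^{k-1}r=\Theta(M)$, the write term you charge per leaf is $\wcost\cdot a^{k-1}/B=\wcost\cdot(M/r)/B=\wcost^{1/k}M/B$, not $\Theta(M/B)$ as you claim (the read and write terms would balance only if $r=\wcost$). Carrying your own accounting through honestly, each leaf costs $\Theta(\wcost^{1/k}M/B)$ and the total becomes $\Theta(n^k\wcost^{2/k}/(M^{1/(k-1)}B))$ --- too large by a factor of $\wcost^{1/k}$; your stated final formula matches the theorem only because of this algebra slip. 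The deeper problem is that charging a full write-back of the output at every leaf overcounts: the $\wcost^{1/k}$ leaves stacked along the $k$-th dimension share the same output block of size $a^{k-1}=M/r\le M$, which stays resident in cache across all of them and is written back only once. Reads and writes therefore saturate at \emph{different} levels of the recursion --- writes at the coarser scale where the output projection alone fits ($a^{k-1}=\Theta(M)$), reads at the finer scale where the inputs fit ($a^{k-1}r=\Theta(M)$) --- and no single leaf level charges both correctly.

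The paper's proof handles exactly this by running two separate leaf-dominated recurrences with two different base cases: $W'(M^{1/(k-1)})=O(M/B)$ for writes and $R'(M^{1/(k-1)}/\wcost^{1/k})=O(M/B)$ for reads, then combining $Q(n)=R(n)+\wcost\, W(n)$. Your shape invariant, the identification of $r=\wcost^{(k-1)/k}$, and the leaf-domination and rounding remarks are all sound and reusable; the necessary fix is to split the accounting into these two recurrences (equivalently, charge output write-backs at the coarser level and only input reads at the leaves), after which both contributions come out to $O(n^k\wcost^{1/k}/(M^{1/(k-1)}B))$ as claimed.
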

\begin{proof}
We separately analyze the numbers of reads and writes in Algorithm~\ref{algo:comp}.
In the sequential execution of Algorithm~\ref{algo:comp}, each recursive function call only requires $O(1)$ extra temporary space. 
Also, our analysis ignores rounding issues since they will not affect the asymptotic bounds.

When starting from the \sqr{} at the beginning, the algorithm first partitions in the first $k-1$ dimensions (via line~\ref{line:sym-start} to \ref{line:sym-end}) into $\wcost^{(k-1)^2/k}$ subproblems (referred to as \emph{second-phase} subproblems) each with size $(n/\wcost^{(k-1)/k})\times \cdots\times (n/\wcost^{(k-1)/k})\times n$, and then partition $k$ dimensions in turn until the base case is reached.

The number of writes of the algorithm $W(n)$ (to array $O$) follows the recurrences:
$$W'(n)=2^k W'(n/2)+O(1)$$
$$W(n)=(\wcost^{(k-1)/k})^{k-1}\cdot \left(W'(n/\wcost^{(k-1)/k})+O(1)\right)$$
where $W'(n)$ is the number of writes of the second-phase subproblems with the size of $O$ being $n\times\cdots\times n$.
The base case is when $W'(M^{1/(k-1)})=O(M/B)$.
Solving the recurrences gives $\displaystyle W'(n/\wcost^{(k-1)/k})=O\left(\frac{n^k\wcost^{1-k}}{M^{1/(k-1)}B}\right)$, and $\displaystyle W(n)=O\left(\frac{n^k\wcost^{(1-k)/k}}{M^{1/(k-1)}B}\right)$.

We can analyze the reads similarly by defining $R(n)$ and $R'(n)$.
The recurrences are therefore:
$$R'(n)=2^kR'(n/2)+O(1)$$
and
$$R(n)=(\wcost^{(k-1)/k})^{k-1}\cdot \left(R'(n/\wcost^{(k-1)/k})+O(1)\right)$$
The difference from the write cost is in the base case since the input fits into the cache sooner when $n=M^{1/(k-1)}/\wcost^{1/k}$.
Namely, $R'(M^{1/(k-1)}/\wcost^{1/k})=O(M/B)$.
By solving the recurrences, we have $\displaystyle R'(n/\wcost^{(k-1)/k})=O\left(\frac{n^k\wcost{}^{2-k}}{M^{1/(k-1)}B}\right)$ and $\displaystyle R(n)=O\left(\frac{n^k\wcost^{1/k}}{M^{1/(k-1)}B}\right)$.

The overall (sequential) asymmetric cache complexity for Algorithm~\ref{algo:comp} is:
$$Q(n)=R(n)+\wcost{}W(n)=O\left(n^k\wcost^{1/k}\over M^{1/(k-1)}B\right)$$
and combining with the lower bound of Theorem~\ref{thm:asym-lower} proves the theorem.
\end{proof}

Comparing to the classic approach, the new algorithm improves the asymmetric cache complexity by a factor of $O(\wcost{}^{(k-1)/k})$, since the classic algorithm requires $\Theta(n^k/(M^{1/(k-1)}B))$ reads and writes.
Again here we assume $n^{k-1}$ is much larger than $M$.
Otherwise, the lower and upper bounds should include $\Theta(\wcost{}n^{k-1}/B)$ for storing the output $O$ on the memory.

\hide{
\begin{corollary}
When the size of the matrix multiplication is $n\times m\times k$, Algorithm~\ref{algo:AsymMM} has cache complexity $$\displaystyle\Theta\left(mk\wcost^{1/3}+nm\wcost^{1/3}+nk\wcost+{nmk\wcost^{1/3}\over B\min\{n,m/\wcost^{2/3},k,\sqrt{M}\}}\right)$$
\end{corollary}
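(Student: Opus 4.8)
The plan is to instantiate the \kdcomp{3} of Section~\ref{sec:kdcomp} for the rectangular product and then run the read/write recursion analysis of the preceding square-grid theorem with the three side lengths tracked separately. Writing the product as $O_{i,j}=\sum_{l}(I_1)_{i,l}(I_2)_{l,j}$ with $1\le i\le n$, $1\le j\le k$, $1\le l\le m$, the three grid dimensions have sizes $n$, $k$, and $m$, where $m$ is the summation (last) dimension. Thus the output $O$ is the $n\times k$ projection, $I_1$ is the $n\times m$ projection, and $I_2$ is the $m\times k$ projection. With the partition weights of Algorithm~\ref{algo:comp} ($x=1$ on the two output dimensions and $x=\wcost^{-2/3}$ on the summation dimension), Algorithm~\ref{algo:AsymMM}, the matrix-multiplication specialization of Algorithm~\ref{algo:comp}, always halves whichever of $\{n,\,k,\,m\wcost^{-2/3}\}$ is largest, so throughout the recursion these three weighted lengths stay within a constant factor of one another.

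First I would reproduce the read/write recurrences of the square-grid proof but keep the three side lengths distinct, separating the number of writes to $O$ from the number of reads of $I_1,I_2$. As in that proof the write base case is reached when $O$ fits in cache (output side $\sqrt M$), and the read base case sooner, when the inputs fit: the skew $r=\wcost^{2/3}$ makes the summation side $\sqrt M\,\wcost^{1/3}$ and the output sides $\sqrt M/\wcost^{1/3}$ at that point, so each input has $M$ entries. Solving the recurrences in the ``generic'' regime, where all three weighted lengths exceed their base-case scale, reproduces the square-grid bound with the effective base-case side length generalized: recursion along a dimension stops either at cache-fit or when that dimension is exhausted, so the side length in the denominator becomes $\min\{n,\,k,\,m\wcost^{-2/3},\,\sqrt M\}$. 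This yields the main term $nmk\,\wcost^{1/3}/\bigl(B\min\{n,k,m/\wcost^{2/3},\sqrt M\}\bigr)$, with the reads of the inputs and $\wcost$ times the writes of $O$ both contributing at this order exactly as in the square case.

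The three additive terms are the unavoidable ``transfer-at-least-once'' costs of the three arrays, and I would obtain them from the degenerate regimes in which the dimension orthogonal to a given array is small. When $n$ is small, $I_2$ ($m\times k$) is not re-read across the $n$-direction and its full scan dominates, giving the $mk\wcost^{1/3}$ term; symmetrically $k$ small gives the $I_1$ term $nm\wcost^{1/3}$, and $m$ small gives the $O$ term $nk\wcost$. The read terms inherit the factor $\wcost^{1/3}$ from the skewed base case of the input recursion, while the $O$ term carries the full $\wcost$ because every output entry must be written to the asymmetric memory at least once; these play the same role as the linear $m+n+p$ terms in the symmetric rectangular bound of~\cite{Frigo99}. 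Summing the generic and degenerate contributions (each additive term dominates precisely once its orthogonal dimension falls below the main term's break-even point) gives the stated four-term expression, and pairing the algorithm's cost with the trivial per-array transfer lower bounds and the rectangular generalization of Theorem~\ref{thm:asym-lower} upgrades the bound to $\Theta$.

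The main obstacle will be the degenerate-case bookkeeping: pinning down the exact $\wcost$-exponents on the additive terms and verifying the sub-cache-line behavior that removes the $1/B$ factor from them. This needs a careful multi-way case split on which of $n$, $k$, $m\wcost^{-2/3}$ bottoms out first, tracking reads and writes separately under the fixed skew ratio $r=\wcost^{2/3}$, and checking that in each regime the generic term together with the relevant transfer term dominates while the remaining terms are absorbed.
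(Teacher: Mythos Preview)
The paper does not actually prove this corollary. In the source it is wrapped in \verb|\hide{...}| and is followed only by the remark ``To be verified.'' So there is no proof in the paper to compare against; your plan of carrying the square-grid analysis of Theorem~5.1 through with the three side lengths tracked separately is exactly the natural route and almost certainly what the authors intended.

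Your derivation of the leading term $nmk\,\wcost^{1/3}/\bigl(B\min\{n,k,m/\wcost^{2/3},\sqrt{M}\}\bigr)$ is fine: once the weighted lengths $\{n,k,m\wcost^{-2/3}\}$ are balanced, the read and write recurrences are identical to the square case with the base-case side length replaced by the first dimension to bottom out.

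One point to tighten is your justification of the $\Theta$. You appeal to ``trivial per-array transfer lower bounds'' and a rectangular version of Theorem~\ref{thm:asym-lower}. But the corollary is a statement about the cost \emph{of Algorithm~1}, not about the problem. The input-array additive terms in the claimed bound carry a factor $\wcost^{1/3}$, and no problem lower bound can force that: reading $I_2$ once costs $\Theta(mk/B)$, with no $\wcost$. The extra $\wcost^{1/3}$ (and the missing $1/B$) must come from the fact that this particular algorithm, with its fixed skew $r=\wcost^{2/3}$, re-reads the inputs across the short dimension and hits sub-cache-line subarrays in the degenerate regimes. So the lower-bound half of the $\Theta$ for those terms has to be an \emph{algorithm-specific} tightness argument --- exhibiting input shapes on which Algorithm~1 actually pays $mk\,\wcost^{1/3}$, etc.\ --- not an information-theoretic one. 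Your case split on which of $n$, $k$, $m\wcost^{-2/3}$ bottoms out first is the right place to do this, but the direction of the argument needs to be reversed for the lower bound.
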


To be verified.
}

\section{Parallelism}\label{sec:para}

We now show the parallelism in computing the \ourcomp{s}.
The parallel versions of the cache-oblivious algorithms only have polylogarithmic \depth{}, indicating that they are highly parallelized.

\subsection{The Symmetric Case}\label{sec:para-sym}

We first discuss how to parallelize the classic algorithm on symmetric memory.
For a \sqr{}, the algorithm partitions the $k$-dimensions in turn until the base case is reached.

Notice that in every $k$ consecutive partitions, there are no dependencies in $k-1$ of them, so we can fully parallelize these levels with no additional cost.
The only exception is during the partition in the $k$-th dimension, whereas both subtasks share the same output array $O$ and cause write concurrence.
If such two subtasks are sequentialized (like in~\cite{Frigo99}), the \depth{} is $D(n)=2D(n/2)+O(1)=O(n)$.

We now introduce the algorithm with logarithmic depth.
As just explained, to avoid the two subtasks from modifying the same elements in the output array $O$, our algorithm works as follows when partitioning the $k$-th dimension:
\begin{enumerate}
  \item Allocating two stack-allocated temporary arrays with the same size of the output array $O$ before the two recursive function calls.
  \item Applying computation for the \ourcomp{} in two subtasks using different output arrays that are just allocated (no concurrency to the other subtask).
  \item When both subtasks finish, the computed values are merged (added) back in parallel, with work proportional to the output size and $O(\log n)$ \depth{}.
  \item Deallocating the temporary arrays.
\end{enumerate}

Notice that the algorithm also works if we only allocate temporary space for one of the subtasks, while the other subtask still works on the original space for the output array.
This can be a possible improvement in practice, but in high dimensional case ($k>2$) it requires complicated details to pass the pointers of the output arrays to descendant nodes, aligning arrays to cache lines, etc.
Theoretically, this version does not change the bounds except for the stack space in Lemma~\ref{lemma:co-seq-space} when $k=2$.

We first analyze the cost of \sqr{s} of size $n$ in the \textbf{symmetric} setting, and will discuss the asymmetric setting later.

\begin{lemma}\label{lemma:co-seq-space}
The overall stack space for a subtask of size $n$ is $O(n^{k-1})$.
\end{lemma}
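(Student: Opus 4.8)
The plan is to bound the total size of the live stack-allocated temporary arrays during a (sequential, depth-first) execution of the algorithm, since that is the natural meaning of ``stack space'' here and the parallel footprint then follows from the standard $O(p\cdot S_1)$ work-stealing space guarantee. The first thing I would observe is that the procedure allocates temporary arrays \emph{only} when it partitions the $k$-th dimension (the $i=k$ branch): this is the unique case that induces write concurrence on $O$, and it allocates two arrays each of size equal to the current output array, i.e.\ $2\prod_{j<k} n_j$. Every other partition (dimensions $1,\dots,k-1$) splits $O$ into disjoint halves and allocates nothing, so only the dim-$k$ partitions contribute.

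Next I would track sizes along a single root-to-leaf path of the recursion. In the symmetric setting the algorithm repeatedly halves the current largest dimension, so along any path each of the $k$ dimensions is halved exactly once per ``round'' of $k$ consecutive partitions, after which the square subgrid has its side length halved. Hence dimension $k$ is partitioned exactly once per round along the path, and in a round beginning at side length $m$ the output array has size at most $m^{k-1}$ (this is largest when dimension $k$ is split first, since splitting dimension $k$ does not shrink $O$). Because the temporaries allocated at an ancestor dim-$k$ node remain live until that entire subtree completes, the live space along the path accumulates \emph{additively}, giving a total of at most $\sum_{t\ge 0} 2\,(n/2^{t})^{k-1}$. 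Equivalently, this is the recurrence $S(n)\le 2n^{k-1}+S(n/2)$ with base case $O(1)$ once the subproblem fits in cache.

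The final step is to sum the geometric series: its ratio is $2^{-(k-1)}\le 1/2<1$ because $k\ge 2$, so it converges to $2n^{k-1}/(1-2^{-(k-1)})=O(n^{k-1})$, which is exactly the claimed bound, and the recurrence $S(n)\le 2n^{k-1}+S(n/2)$ solves to the same thing.

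The one subtlety I expect to be the main point to get right is the \emph{additive} (as opposed to multiplicative) accumulation together with a careful choice of execution model. Since the two sibling subtasks of a dim-$k$ split each use a full-sized temporary ($T_a$ and $T_b$, both of size $|O|$), naively summing over all simultaneously active parallel branches would overcount and introduce a spurious $\log n$ factor; the clean statement is about the depth-first stack footprint, where at most one branch is descended into at a time while the two temporaries of each ancestor dim-$k$ split stay allocated. Confirming this is the intended measure, and bounding every dim-$k$ output size by $m^{k-1}$ to cover the worst-case ordering of partitions within a round, are the only places needing care.
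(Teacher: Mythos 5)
Your proof is correct, and its skeleton matches the paper's: temporary arrays are allocated only at the dimension-$k$ splits, each such split costs twice the current output size, and these sizes decay geometrically down the recursion. The genuine difference is the recurrence. You measure the depth-first live footprint, $S(n)\le 2n^{k-1}+S(n/2)$, whose geometric sum gives $O(n^{k-1})$ uniformly for all $k\ge 2$. The paper instead writes $S(n)=2S(n/2)+O(n^{k-1})$, charging both children of a dim-$k$ split; this is root-dominated only for $k>2$, and for $k=2$ it yields $O(n\log n)$, which the paper repairs by invoking the variant that allocates a temporary for only one of the two subtasks (dropping the branching constant to $1$). Your single-branch accounting is in fact the measure needed downstream: the corollary that a subtask of size $n\le M^{1/(k-1)}$ fits in an $O(M)$ cache, and the per-processor charging in Lemma~\ref{lemma:sym-space}, both concern the footprint of one processor executing a subtask depth-first, where stack space freed by the first sibling is reused by the second --- so your route buys a cleaner treatment that makes the $k=2$ special case unnecessary. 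Two minor imprecisions, neither affecting correctness: the allocation site is the dimension-$k$ partition of the symmetric parallel algorithm of Section~\ref{sec:para-sym} rather than literally the $i=k$ branch of Algorithm~\ref{algo:comp} (the asymmetric variant behaves analogously), and the ``spurious $\log n$ factor'' you warn about when charging both branches actually arises only in the $k=2$ case.
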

\begin{proof}
The parallel algorithm allocates memory only when partitioning the output ($k$-th) dimension.
In this case, it allocates and computes two subtasks of size $n/2$ where $n$ is the size of the output dimension.
This leads to the following recurrence:
\[S(n)=2S(n/2)+O(n^{k-1})\]
The recurrence solves to $S(n)=O(n^{k-1})$ when $k>2$ since the recurrence is root-dominated.
When $k=2$, we can apply the version that only allocates temporary space for one subtask, which decreases the constant before $S(n/2)$ to 1, and yields $S(n)=O(n)$.
Note that we only need to analyze one of the branches, since the temporary spaces that are not allocated in the direct ancestor of this subtask have already been deallocated, and will be reused for later computations for the current branch.
\end{proof}
With the lemma, we have the following corollary:
\begin{corollary}
A subtask of size $n\le M^{1/(k-1)}$ can be computed within a cache of size $O(M)$.
\end{corollary}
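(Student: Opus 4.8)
The plan is to bound the entire working set of a size-$n$ subtask and show it is $O(\memsize)$ exactly when $n\le\memsize^{1/(k-1)}$. The working set has two sources, which I would account for separately and then add.

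First I would count the data arrays themselves. For a \sqr{} of size $n$, each of the $k-1$ input arrays $I_1,\dots,I_{k-1}$ and the single output array $O$ is a $(k-1)$-dimensional array with side length $n$, hence occupies exactly $n^{k-1}$ entries. Since there are $k$ such arrays and $k$ is a constant, the total footprint of the input and output data is $O(k\cdot n^{k-1})=O(n^{k-1})$. Second I would invoke Lemma~\ref{lemma:co-seq-space}, which already bounds the total stack space allocated for the temporary arrays during the recursion by $O(n^{k-1})$; this is the only place the parallel algorithm consumes space beyond the data arrays themselves.

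Summing the two contributions, the full working set of a size-$n$ subtask fits in $O(n^{k-1})$ space. Substituting the hypothesis $n\le\memsize^{1/(k-1)}$ yields $n^{k-1}\le\memsize$, so the footprint is $O(\memsize)$ and therefore resides entirely within a cache of size $O(\memsize)$, which is the claim.

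There is essentially no hard step: the corollary is immediate once Lemma~\ref{lemma:co-seq-space} is available. The only points requiring minor care are to remember to include \emph{both} the input/output arrays and the temporary stack space in the tally, and to observe that the constant number $k$ of arrays, together with the constant hidden inside the $O(n^{k-1})$ bound of the lemma, are absorbed into the $O(\memsize)$ notation precisely because $k$ is treated as a fixed constant throughout.
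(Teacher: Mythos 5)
Your proposal is correct and follows exactly the reasoning the paper leaves implicit: the corollary is stated as an immediate consequence of Lemma~\ref{lemma:co-seq-space}, with the input/output arrays contributing $O(n^{k-1})$ and the temporary stack space another $O(n^{k-1})$, so $n\le M^{1/(k-1)}$ gives a total footprint of $O(M)$. Nothing is missing.
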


This corollary indicates that this modified parallel algorithm has the same sequential cache complexity since it fits into the cache in the same level as the classic algorithm (the only minor difference is the required cache size increases by a small constant factor).
Therefore we can apply the a similar analysis in~\cite{Frigo99} ($k=3$ in the paper) to show the following lemma:
\begin{lemma}
The sequential symmetric cache complexity of the parallel cache-oblivious algorithm to compute a \ourcomp{} of size $n$ is $O(n^k/M^{1/(k-1)}B)$.
\end{lemma}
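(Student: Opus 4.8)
The plan is to reduce the analysis to the classic leaf-dominated recurrence of~\cite{Frigo99}, with the base case pinned down by the preceding corollary. First I would fix the base case: by Lemma~\ref{lemma:co-seq-space} the stack space used by the whole subtree rooted at a subtask of size $m$ is $O(m^{k-1})$, so for $m\le M^{1/(k-1)}$ the $k-1$ input arrays, the output array, and all temporary arrays together occupy $O(M)$ entries and fit in a cache of size $O(M)$ (this is exactly the corollary above). Hence once the recursion reaches a subproblem of size $M^{1/(k-1)}$, its entire computation---including every descendant allocation, recursive call, and merge---incurs only $O(M/B)$ cache misses to load its data once, and nothing more.

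Next I would account for the cost above the base case. The algorithm halves one dimension per recursive step and cycles through the $k$ dimensions, so over one full cycle a square subgrid of side $m$ produces $2^k$ square subgrids of side $m/2$. The only memory cost absent from the classic algorithm is the merge performed when the output ($k$-th) dimension is partitioned: this reads the two temporary copies of $O$ and adds them back, touching $O(m^{k-1})$ entries, i.e.\ $O(m^{k-1}/B)$ cache lines, at a node whose current subgrid has side $m$. Regardless of the position of the $k$-th-dimension split within the cycle, the merge cost summed over one cycle is $\Theta(m^{k-1}/B)$ (either one node with output size $m^{k-1}$, or $2^{k-1}$ nodes each of output size $(m/2)^{k-1}$). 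This yields the recurrence
\[
Q(m) = 2^k\,Q(m/2) + O\!\left(\frac{m^{k-1}}{B}\right),
\qquad
Q\!\left(M^{1/(k-1)}\right) = O\!\left(\frac{M}{B}\right).
\]

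I would then solve this recurrence. Since the branching factor $2^k$ strictly exceeds the growth $2^{k-1}$ of the additive term, the recurrence is leaf-dominated, so $Q(n)$ is at most a constant times the number of base-case leaves times the per-leaf cost. There are $\left(n/M^{1/(k-1)}\right)^k = n^k/M^{k/(k-1)}$ leaves, each costing $O(M/B)$, giving $Q(n)=O\!\left(n^k/(M^{1/(k-1)}B)\right)$ as claimed; equivalently, one checks directly that the geometric sum of the merge terms is dominated by its last (base-case-scale) term, which already equals the leaf total.

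The main obstacle---and the only place the argument departs from the classic analysis, which carries an $O(1)$ rather than $O(m^{k-1}/B)$ additive term---is verifying that the extra merges do not inflate the bound. This reduces to the strict inequality $k-1<k$ ensuring leaf-dominance; the remainder is the standard cache-oblivious recursion, enabled by the corollary, which guarantees that the base case is reached at the same scale $M^{1/(k-1)}$ as the classic algorithm (up to the constant-factor enlargement of the required cache).
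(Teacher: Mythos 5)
Your proof is correct and follows essentially the same route as the paper: use the stack-space lemma/corollary to conclude that a subtask of side $M^{1/(k-1)}$ fits in a cache of size $O(M)$, so the base case is reached at the same scale as the classic algorithm, and then solve the standard leaf-dominated recurrence. You are in fact somewhat more careful than the paper, which simply defers to the analysis of Frigo et al.; your explicit check that the $O(m^{k-1}/B)$ merge terms form a geometric sum dominated by the leaf cost is a worthwhile detail the paper leaves implicit.
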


Assuming that we can allocate a chunk of memory in constant time, the \depth{} of this approach is simply $O(\log^2 n)$---$O(\log n)$ levels of recursion, each with $O(\log n)$ \depth{} for the additions~\cite{blelloch2010low}.

We have shown the parallel \depth{} and symmetric cache complexity.
By applying the scheduling theorem in Section~\ref{sec:prelim}, we have the following result for parallel symmetric cache complexity.
\begin{corollary}
The \ourcomp{} of size $n$ can be computed with the parallel symmetric cache complexity of $O(n^k/M^{1/(k-1)}B+pM\log^2n)$ with private caches, or $O(n^k/M^{1/(k-1)}B)$ with a share cache of size $M+pB\log^2n$.
\end{corollary}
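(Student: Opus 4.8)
The plan is to derive both bounds mechanically from the two facts just established for the parallel algorithm---its \depth{} is $D=O(\log^2 n)$ and its sequential symmetric cache complexity is $Q_1=O(n^k/M^{1/(k-1)}B)$---by feeding them into the two scheduler-based transfer theorems recalled in Section~\ref{sec:prelim}. No new combinatorial argument is needed; the content of the proof is entirely in checking that the hypotheses of those theorems hold for our algorithm and then substituting $D$ and $Q_1$.

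First I would verify the preconditions. The algorithm is a nested-parallel (fork--join) computation: every split either forks two independent subtasks on a non-output dimension, or, on the output ($k$-th) dimension, forks two subtasks that write into freshly allocated disjoint temporary arrays and then performs a logarithmic-\depth{} parallel merge. Because the two output-dimension subtasks never touch a common location and the merge assigns a single writer to each output entry, the computation is race-free in the sense of Section~\ref{sec:prelim}. Hence both the private-cache work-stealing bound of~\cite{Acar02} and the shared-cache PDF bound of~\cite{BlGi04} apply directly.

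For private caches, the work-stealing scheduler guarantees $Q_p\le Q_1+O(pDM/B)$ with high probability. Substituting $D=O(\log^2 n)$ and $Q_1=O(n^k/M^{1/(k-1)}B)$ produces the first claimed bound, where the additive $O(pDM/B)$ term becomes the $p\memsize\log^2 n$ overhead. For the shared cache, the PDF bound gives $Q_p\le Q_1$ provided the cache has size $M+pBD$; with $D=O(\log^2 n)$ this is exactly a shared cache of size $M+pB\log^2 n$, and the resulting cost is therefore $O(n^k/M^{1/(k-1)}B)$, as claimed.

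The only delicate step---and the one I would spend the most care on---is justifying the two inputs $D=O(\log^2 n)$ and $Q_1=O(n^k/M^{1/(k-1)}B)$ that feed the schedulers, since both rest on the idealization that a temporary array is allocated in constant time. The cache-complexity input is supplied by Lemma~\ref{lemma:co-seq-space} together with the subsequent corollary that a size-$M^{1/(k-1)}$ subtask fits in a cache of size $O(M)$: these ensure the parallel variant reaches its base case at the same recursion level as the classic algorithm and thus inherits its sequential cache bound. The \depth{} input follows from the $O(\log n)$ recursion levels, each contributing $O(\log n)$ \depth{} for the parallel merge. Once these two quantities are in hand, the corollary is immediate from the cited scheduling theorems.
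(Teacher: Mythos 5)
Your proposal is correct and follows essentially the same route as the paper: the paper likewise obtains this corollary by plugging the \depth{} bound $O(\log^2 n)$ and the sequential cache complexity $O(n^k/M^{1/(k-1)}B)$ into the two scheduling theorems from Section~\ref{sec:prelim} (the work-stealing bound $Q_1+O(pDM/B)$ for private caches and the PDF bound $Q_p\le Q_1$ for a shared cache of size $M+pBD$). Your additional care in checking race-freedom and the constant-time-allocation assumption only makes the argument more explicit; note also that the substitution actually yields $O(pM\log^2 n/B)$ for the additive term, which is within the stated $O(pM\log^2 n)$.
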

\hide{
\begin{proof}
The total number of steals is upper bounded by $O(pD)$~\cite{Acar02}.
For private-cache case, the extra I/O cost by a stolen task is no more than the cost to refill the cache from the original thread, which is $O(M)$.
Plugging in the \depth{} gives the stated parallel cache complexity.
(Discuss share cache case.)
\end{proof}}

We now analyze the overall space requirement for this algorithm.
Lemma~\ref{lemma:co-seq-space} shows that the extra space required is $S_1=O(n^{k-1})$ for sequentially running the parallel algorithm.
Na\"ively the parallel space requirement is $pS_1$, which can be very large.
We now show a better upper bound for the extra space.
\begin{lemma}\label{lemma:sym-space}
The overall space requirement of the parallel algorithm to compute the \ourcomp{} is $O(p^{1/k}n^{k-1})$.
\end{lemma}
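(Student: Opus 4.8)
The plan is to account for the peak simultaneously-allocated temporary space directly, rather than applying the generic $p\,S_1$ work-stealing bound, which (since $S_1=O(n^{k-1})$) would only give the weaker $O(p\,n^{k-1})$. The only place the algorithm allocates memory is when it partitions the $k$-th (output) dimension, so I would first classify these \emph{allocating nodes} of the computation DAG by scale. Grouping the recursion into rounds of $k$ consecutive partitions, after $\ell$ such rounds every subproblem has all dimensions of length $n/2^\ell$, there are $\Theta(2^{k\ell})$ such subproblems, and each performs a dimension-$k$ partition allocating two arrays of total size $\Theta((n/2^\ell)^{k-1})$. Hence the allocating nodes come in levels $\ell=0,1,2,\dots$ with $N_\ell=\Theta(2^{k\ell})$ nodes per level, each of size $A_\ell=\Theta((n/2^\ell)^{k-1})$; a node's allocation is live from the moment it forks its two recursive calls until both calls and the subsequent merge complete.

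The key step is to bound how many allocating nodes can be live \emph{simultaneously} at each level. Here I would invoke the busy-leaves property of the work-stealing scheduler of Section~\ref{sec:prelim}~\cite{blumofe1999scheduling}: at every point in time, every deepest live task is being executed by some processor. Consequently any live allocating node has a busy descendant, i.e. one of the $p$ processors is working inside its subtree; since distinct level-$\ell$ allocating nodes have disjoint subtrees, at most $p$ of them can be live at once. Combined with the trivial bound $N_\ell$, at most $\min(p,N_\ell)$ nodes are live at level $\ell$, and each live node counts its allocation exactly once --- allocations shared by several processors at a common ancestor are not double counted, which is precisely where we beat $p\,S_1$.

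Summing over levels then bounds the peak space by
\[
\sum_{\ell \ge 0} \min(p, N_\ell)\, A_\ell \;=\; \sum_{\ell \ge 0} \min\!\big(p,\ \Theta(2^{k\ell})\big)\cdot \Theta\!\big((n/2^\ell)^{k-1}\big).
\]
I would split the sum at $\ell^\star=(\log_2 p)/k$, where $2^{k\ell}=p$. For $\ell\le\ell^\star$ the term is $\Theta(n^{k-1}2^{\ell})$, a geometric series dominated by $\ell=\ell^\star$ and summing to $\Theta(p^{1/k}n^{k-1})$; for $\ell>\ell^\star$ the term is $\Theta(p\,n^{k-1}2^{-(k-1)\ell})$, again geometric and dominated by $\ell=\ell^\star$, summing to $\Theta(p\cdot n^{k-1}p^{-(k-1)/k})=\Theta(p^{1/k}n^{k-1})$. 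Both halves match, giving the claimed $O(p^{1/k}n^{k-1})$.

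The main obstacle I anticipate is making the ``live allocating nodes'' count rigorous: I must define liveness of an allocation precisely (allocated-but-not-yet-freed, attached to the suspended parent frame waiting at its join) and argue that such a suspended frame always has a busy leaf in its subtree, so that the $\min(p,N_\ell)$ bound is legitimate rather than an artifact of double counting. The rounding of dimension sizes, the exact placement of the dimension-$k$ partition within a round, and the $k=2$ single-array optimization of Lemma~\ref{lemma:co-seq-space} only affect constants and can be folded into the $\Theta(\cdot)$ bookkeeping.
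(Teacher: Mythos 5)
Your proof is correct and arrives at the bound by the same level-by-level decomposition and the same geometric sum (peaking where $2^{k\ell}=p$) as the paper, but the accounting mechanism is genuinely different in one respect. The paper charges space \emph{per processor}: it applies Lemma~\ref{lemma:co-seq-space} to the root of each processor's stolen subtask and then asserts that distributing the $p$ processors as $2^k$ roots at level $1$, $(2^k)^2$ at level $2$, \dots, $(2^k)^{\log_{2^k}p}$ at the last level maximizes the total, summing $\sum_h (2^k)^h\cdot O\bigl(n^{k-1}/(2^{k-1})^h\bigr)=O(p^{1/k}n^{k-1})$; the claim that this configuration is worst-case is stated without a detailed justification. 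You instead charge space \emph{per live allocating node}, capping the number of simultaneously live level-$\ell$ allocations by $\min(p,2^{k\ell})$ via the busy-leaves property of work stealing~\cite{blumofe1999scheduling} (each live allocation sits above a living leaf, distinct same-level allocating nodes have disjoint subtrees, and at most $p$ leaves are busy). This buys you a cleaner justification of the per-level cap --- you never need to identify or argue optimality of a worst-case processor placement --- at the cost of depending explicitly on a scheduler property rather than only on the sequential space bound of Lemma~\ref{lemma:co-seq-space}. Both routes reduce to the same two geometric series and the same $O(p^{1/k}n^{k-1})$ bound, so I consider your argument a valid (and arguably tighter-reasoned) alternative proof.
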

\begin{proof}
We analyze the total space allocated for all processors.
Lemma~\ref{lemma:co-seq-space} indicates that if the root of the computation on one processor has the output array of size $(n')^{k-1}$, then the space requirement for this task is $O((n')^{k-1})$.
There are in total $p$ processors.
There can be at most $2^k$ processors starting with their computations of size $n^{k-1}/2^{k-1}$, $(2^k)^2$ of size $n^{k-1}/(2^{k-1})^2$, until $(2^k)^q$ processors of size $n^{k-1}/(2^{k-1})^q$ where $q=\log_{2^k} p$.
This case maximizes the overall space requirement for $p$ processors, which is:
\begin{align*}
    \sum_{h=1}^{\log_{2^k} p}{O\left(\frac{n^{k-1}}{(2^{k-1})^h}\right)\cdot (2^k)^h}
    = p\cdot O\left(\frac{n^{k-1}}{(2^{k-1})^{\log_{2^k}p}}\right)=O(p^{1/k}n^{k-1})
\end{align*}
This shows the stated bound.
\end{proof}

Combining all results gives the following theorem:
\begin{theorem}\label{thm:sym-cost}
There exists a cache-oblivious algorithm to compute a \ourcomp{} of size $n$ that requires $\Theta(n^k)$ work, $\displaystyle \Theta\left(n^k\over M^{1/(k-1)}B\right)$ symmetric cache complexity, $O(\log^2 n)$ \depth{}, and $O(p^{1/k}n^{k-1})$ main memory size.
\end{theorem}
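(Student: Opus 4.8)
The plan is to treat this theorem as a consolidation of the pieces already established above: the symmetric cache complexity follows from the sequential cache complexity lemma together with the lower bound of Theorem~\ref{thm:sym-lower}, the \depth{} from the recursion structure of the parallel algorithm, and the main memory bound from Lemma~\ref{lemma:sym-space}. The only quantity not yet argued in isolation is the $\Theta(n^k)$ work, so I would begin there and then stitch the remaining three bounds together with a one-line citation each.

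For the work, the lower bound is immediate: a \ourcomp{} of size $n$ has $n^k$ cells and each must be evaluated at least once, giving $\Omega(n^k)$. For the upper bound I would split the work into the cell evaluations, which contribute exactly $\Theta(n^k)$, and the overhead of allocating and merging the temporary output arrays. Merging occurs only when the $k$-th dimension is partitioned; in one full cycle of $k$ partition steps the $k$-th dimension is split once inside each of the $2^{k-1}$ subproblems produced by the $k-1$ symmetric splits, and at that point each output has size $\Theta((n/2)^{k-1})$, so the merges in one cycle cost $2^{k-1}\cdot O((n/2)^{k-1})=O(n^{k-1})$. Letting $T_m(n)$ denote the total merge work, this yields
\[
  T_m(n) = 2^k\,T_m(n/2) + O(n^{k-1}),
\]
which is leaf-dominated (the per-level cost grows geometrically and the bottom level of size $\Theta(n\cdot n^{k-1})$ dominates) and solves to $O(n^k)$. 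Hence the total work is $\Theta(n^k)$.

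For the cache complexity, the upper bound $O(n^k/M^{1/(k-1)}B)$ is exactly the sequential cache complexity lemma above, whose validity rests on the corollary that a subtask of size $n\le M^{1/(k-1)}$ fits in a cache of size $O(M)$, so the parallel variant reaches its base case at the same recursion level as the classic algorithm of~\cite{Frigo99}; the matching lower bound is Theorem~\ref{thm:sym-lower}, giving $\Theta(n^k/M^{1/(k-1)}B)$. For the \depth{} I would use the recurrence $D(n)=D(n/2)+O(\log n)$: the two recursive calls run fully in parallel because each writes to its own output array, and merging the temporaries at a $k$-th-dimension split adds $O(\log n)$ \depth{}, so across the $O(\log n)$ recursion levels $D(n)=O(\log^2 n)$. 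The $O(p^{1/k}n^{k-1})$ main memory bound is precisely Lemma~\ref{lemma:sym-space}.

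The main obstacle is not any single bound in isolation but confirming that the pieces compose without hidden conflict—specifically, that redirecting the two $k$-th-dimension subtasks to fresh temporary arrays (the modification that buys logarithmic \depth{}) inflates neither the work nor the sequential cache complexity. The work side is controlled by the leaf-dominated recurrence above, and the cache side relies on the corollary that the temporary allocations only enlarge the required cache by a constant factor, so the base case still fits and the leaf-dominated cache analysis of~\cite{Frigo99} carries over unchanged. A secondary subtlety, already handled, is that the naive parallel space bound $pS_1$ is too weak; this is exactly why Lemma~\ref{lemma:sym-space} argues via the distribution of subtask sizes across processors rather than multiplying the single-processor stack space by $p$.
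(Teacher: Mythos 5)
Your proposal matches the paper's treatment: the theorem is stated there as a direct combination of the sequential cache-complexity lemma, the $O(\log^2 n)$ \depth{} from $O(\log n)$ recursion levels each costing $O(\log n)$ for the parallel additions, and Lemma~\ref{lemma:sym-space} for the space bound. Your explicit recurrence $T_m(n)=2^kT_m(n/2)+O(n^{k-1})$ verifying that the merge overhead stays within $O(n^k)$ work is the one piece the paper leaves implicit, and it is correct (the per-level cost is $n^{k-1}2^i$ at level $i$, so the geometric sum is dominated by the bottom level at $O(n^k)$).
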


\myparagraph{Additional space required}
The following discussion is purely on the practical side and does not affect the theoretical analysis of all the theorems in this paper.

We believe the space requirement for the parallel cache-oblivious algorithm is acceptable since it is asymptotically the same as the most intuitively (non-cache-oblivious) parallel algorithm that partitions the computation into $p$ square subtasks each with size $n/p^{1/k}$.
In practice nowadays it is easy to fit several terabyte main memory onto a single powerful machine such that the space requirement can usually be satisfied.
For example, a Dell PowerEdge R940 has about $p=100$ and the main memory can hold more than $10^{12}$ integers, while the new NVRAMs will have even more capacity (up to 512GB per DIMM).
On such machines, when $k=2$, the grid needs to contain more than $10^{22}$ cells to exceed the memory size---such computation takes too long to run on a single shared-memory machine.
For $k=3$, we need about $10^{17}$ cells to exceed the main memory size, which will take weeks to execute on a highest-end shared-memory machine.
Hence, throughout the paper we focus on cache complexity and span.
Even if one wants to run such a computation, we can use the following approach to slightly change the algorithm to bound the extra space as a practical fix.

We can first partition the input dimensions for $\log_2 p$ rounds to bound the largest possible output size to be $O(n^{k-1}/p)$ (similar to the case discussed in Section~\ref{sec:para-asym}).
Then the overall extra space for all $p$ processors is limited to $O(n^{k-1})$, the same as the input/output size.
If needed, the constant in the big-O can also be bounded.
Such a change will not affect the cache complexity and the \depth{} as long as \emph{the main memory size is larger than $pM$} where $M$ is the cache size.
This is because the changes of partition order do not affect the recurrence depth, and the I/O cost is still dominated by when the subproblems fitting the cache.
In practice, DRAM size is \emph{always} several orders of magnitude larger than $pM$.

\subsection{The Asymmetric Case}\label{sec:para-asym}

Algorithm~\ref{algo:comp} considers the write-read asymmetry, which involves some minor changes to the classic cache-oblivious algorithm.
Regarding parallelism, the changes in Algorithm~\ref{algo:comp} only affect the order of the partitioning of the $k$-d grid in the recurrence, but not the parallel version and the analysis in Section~\ref{sec:para-sym}.
As a result, the \depth{} of the parallel variant of Algorithm~\ref{algo:comp} is also $O(\log^2 n)$.
The extra space upper bound is actually reduced, because the asymmetric algorithm has a higher priority in partitioning the input dimensions that does not requires allocation temporary space.
\begin{lemma}\label{lemma:asym-space}
The space requirement of Algorithm~\ref{algo:comp} on $p$ processors is $O(n^{k-1}(1+p^{1/k}/\wcost^{(k-1)/k}))$.
\end{lemma}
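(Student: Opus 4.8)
The plan is to reuse the worst-case processor-counting argument of Lemma~\ref{lemma:sym-space}, modified by the single structural feature that distinguishes Algorithm~\ref{algo:comp} from the symmetric algorithm: by line~\ref{line:pick-dim}, the $k$-th (output-sharing) dimension is split only after each of the other $k-1$ dimensions has been shrunk by a factor $\wcost^{(k-1)/k}$. Because temporary arrays are allocated only when the $k$-th dimension is partitioned (line~\ref{line:partition1}), the entire initial ``input-partitioning'' phase is allocation-free --- it only subdivides the shared output $O$ into disjoint pieces across independent subtasks. First I would formalize this as a scale-free statement: running Algorithm~\ref{algo:comp} on a square sub-grid of side $m$, the first temporary array it ever allocates already sits on a block whose output footprint is a $\wcost^{-(k-1)/k}$ fraction of $m^{k-1}$, so the analogue of Lemma~\ref{lemma:co-seq-space} gives a sequential stack bound smaller than the symmetric $O(m^{k-1})$ by this factor.

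Then I would repeat the level-by-level summation of Lemma~\ref{lemma:sym-space}, charging each of the $p$ processors the stack space of the subtask it is currently executing and summing over recursion levels the product of the number of active subtasks and the per-subtask allocation. Subtasks that are still inside the allocation-free phase contribute nothing beyond the single shared $O(n^{k-1})$ term; this term dominates exactly when $p$ is small enough that all processors can be absorbed before any of them crosses into the allocating phase, which accounts for the ``$1$'' inside the bound. For larger $p$, the dominant term of the (geometric) sum is the deepest allocating level, and threading the $\wcost^{-(k-1)/k}$ footprint reduction through the same telescoping computation that produced $O(p^{1/k}n^{k-1})$ in the symmetric case is meant to yield the second term $O(p^{1/k}n^{k-1}/\wcost^{(k-1)/k})$.

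The step I expect to be the main obstacle is precisely the cross-phase accounting, because the two phases do not branch uniformly: the allocation-free phase fans out on only $k-1$ dimensions (and produces many independent subproblems sharing no output), whereas the later phase fans out on all $k$. The clean ``$2^k$ children per super-level'' bookkeeping of Lemma~\ref{lemma:sym-space} therefore has to be reorganized, e.g.\ by indexing levels by the current output footprint rather than by recursion depth, and I would have to argue carefully about how the $p$ processors that have already spread across the independent allocation-free subproblems redistribute once they begin to allocate. Summing the symmetric per-subproblem bound over all of those independent subproblems naively inflates the $\wcost$-exponent, so locating the genuine worst-case distribution --- and with it the crossover between the ``$1$'' term and the ``$p^{1/k}/\wcost^{(k-1)/k}$'' term --- is the delicate point on which the stated exponent ultimately hinges, and the part of the argument I would want to verify most carefully.
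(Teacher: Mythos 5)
Your proposal follows essentially the same route as the paper's proof: an allocation-free input-partitioning phase producing $q=O(\wcost^{(k-1)^2/k})$ independent subtasks, a case split on whether $p$ is below or above $q$ (yielding the ``$1$'' term and the $p^{1/k}/\wcost^{(k-1)/k}$ term respectively), and an application of Lemma~\ref{lemma:sym-space} to each second-phase subtask. The cross-phase accounting you flag as the delicate point is handled in the paper simply by taking the worst case to be an even split of $O(p/q)$ processors per subtask and summing the per-subtask bound $O((p/q)^{1/k}\cdot n^{k-1}/q)$ over the $q$ subtasks, so your outline is correct and no reorganization of the level bookkeeping is needed.
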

\begin{proof}
Algorithm~\ref{algo:comp} first partition the input dimensions until $q=O(\wcost{}^{(k-1)^2/k})$ subtasks are generated.
Then the algorithm will partition $k$ dimensions in turn.
If $p<q$, then each processor requires no more than $O(n^{k-1}/q)$ extra space at any time, so the overall extra space is $O(p\cdot n^{k-1}/q)=O(n)$.
Otherwise, the worst case appears when $O(p/q)$ processors work on each of the subtasks.
Based on Lemma~\ref{lemma:sym-space}, the extra space is bounded by $O((p/q)^{1/k}\cdot q\cdot n^{k-1}/q)=O(p^{1/k}n^{k-1}/\wcost^{(k-1)/k})$.
Combining the two cases gives the stated bounds.
\end{proof}

Lemma~\ref{lemma:asym-space} indicates that Algorithm~\ref{algo:comp} requires extra space no more than the input/output size asymptotically when $p=O(\wcost^{k-1})$, which should always be true in practice.

The challenge arises in scheduling this computation.
The scheduling theorem for the asymmetric case~\cite{BBFGGMS16} constraints on the non-leaf stack memory to be a constant size.
This contradicts the parallel version in Section~\ref{sec:para-sym}.
This problem can be fixed based on Lemma~\ref{lemma:co-seq-space} that upper bounds the overall extra memory on one task.
Therefore the stack-allocated array can be moved to the heap space.
Once a task is stolen, the first allocation will annotate a chunk of memory with size order of $|O|$ where $O$ is the current output.
Then all successive heap-based memory allocation can be simulated on this chunk of memory.
In this manner, the stack memory of each node corresponding to a function call is constant, which allows us to apply the scheduling theorem in~\cite{BBFGGMS16}.

\begin{theorem}\label{thm:asym-cost}
Algorithm~\ref{algo:comp} with input size $n$ requires $\Theta(n^k)$ work, $\displaystyle \Theta\left(n^k\wcost^{1/k}\over M^{1/(k-1)}B\right)$ asymmetric cache complexity, and $O(\log^2 n)$ \depth{} to compute a \ourcomp{} of size $n$.
\end{theorem}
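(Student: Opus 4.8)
The plan is to establish the three quantities separately, exploiting the observation that the parallel variant of Algorithm~\ref{algo:comp} differs from its sequential counterpart \emph{only} in the output ($k$-th) dimension partitions, where it introduces two stack-allocated temporary arrays and a parallel merge exactly as described in Section~\ref{sec:para-sym}, and differs from the symmetric parallel algorithm \emph{only} in the dimension-selection rule of line~\ref{line:pick-dim}. Because these two modifications are orthogonal---one touches the parallel structure, the other touches the partition order---I can reuse the symmetric parallel analysis of Section~\ref{sec:para-sym} wholesale for the span, and the sequential asymmetric cache analysis following Algorithm~\ref{algo:comp} wholesale for the cache complexity.

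For the work bound, the $n^k$ cell evaluations dominate. Each partition of the $k$-th dimension contributes an additional parallel merge whose cost is proportional to the current output size $O(n^{k-1})$, and since the merges are identical to those in the symmetric parallel algorithm they sum to a lower-order term, exactly as in Theorem~\ref{thm:sym-cost}; hence the total work is $\Theta(n^k)$. For the \depth{}, the reordering of partition dimensions in line~\ref{line:pick-dim} changes neither the number of recursion levels, which is $O(\log n)$, nor the $O(\log n)$ depth of each parallel merge, so the span remains $O(\log^2 n)$, identical to the symmetric case.

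For the asymmetric cache complexity, I would argue that parallelization does not inflate the \emph{sequential} asymmetric cache complexity beyond a constant factor, and then simply inherit the sequential bound. The temporary arrays increase the cache footprint of a size-$n$ subtask by only a constant factor---Lemma~\ref{lemma:co-seq-space} bounds its extra stack space by $O(n^{k-1})$---so the analogue of the symmetric corollary holds: a subtask fits into a cache of size $O(M)$ at the same recursion level as the sequential algorithm. Consequently the parallel algorithm reaches the cache exactly at the read base case $n=M^{1/(k-1)}/\wcost^{1/k}$ and the write base case $n=M^{1/(k-1)}$ used in the sequential proof, and therefore inherits the sequential asymmetric cache complexity $\Theta\!\left(\frac{n^k\wcost^{1/k}}{M^{1/(k-1)}B}\right)$, whose matching lower bound is supplied by Theorem~\ref{thm:asym-lower}.

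The main obstacle I anticipate is the cache-complexity step: one must verify that the asymmetric partition order---which makes the output dimension $\wcost^{(k-1)/k}$ times longer than the input dimensions before the ``second phase'' begins---still lets every subtask reach its intended base case inside an $O(M)$ cache once the temporary arrays are accounted for, so that the delicate separation between the distinct read and write base cases from the sequential proof is preserved. A secondary subtlety, handled in the discussion preceding the theorem rather than in the three quantities themselves, is that the asymmetric scheduler requires constant non-leaf stack memory; this is achieved by moving each subtask's temporary allocation into a single heap chunk of size $O(|O|)$ reserved when a task is stolen, consistent with the space bound of Lemma~\ref{lemma:asym-space}.
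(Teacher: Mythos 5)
Your proposal is correct and follows essentially the same route as the paper: the span and work are inherited from the symmetric parallel analysis of Section~\ref{sec:para-sym} because line~\ref{line:pick-dim} only reorders the partitions, the asymmetric cache bound is inherited from the sequential analysis of Algorithm~\ref{algo:comp} because the temporary arrays only enlarge each subtask's footprint by a constant factor (so the read and write base cases are hit at the same recursion levels), and the scheduling issue is resolved exactly as in the paper by moving the per-task temporary storage into a single heap chunk of size $O(|O|)$ so that non-leaf stack frames stay constant-size. No gaps.
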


\section{Dynamic Programming Recurrences}\label{sec:dp}

In this section we discuss a number of new results on dynamic programming (DP).
To show lower and upper bounds on parallelism and cache efficiency in either symmetric and asymmetric setting, we focus on the specific DP recurrences instead of the problems.
We assume each update in the recurrences takes unit cost, just like the \ourcomp{} in Section~\ref{sec:kdcomp}.

The goal of this section is to show how the DP recurrences can be viewed as and decomposed into the \ourcomp{s}.
Then the lower and upper bounds discussed in Section~\ref{sec:lower} and~\ref{sec:algorithm}, as well as the analysis of parallelism in Section~\ref{sec:para}, can be easily applied to the computation of these DP recurrences.
When the dimension of the input/output is the same as the number of entries in each grid cell, then the sequential and symmetric versions of the algorithms in this section are the same as the existing ones discussed in~\cite{Frigo99,chowdhury2006cache,chowdhury2010cache,chowdhury2016autogen,tithi2015high}, but the others are new.
Also, the asymmetric versions and most parallel versions are new.
We improve the existing results on symmetric/asymmetric cache complexity, as well as parallel \depth{}.

\myparagraph{Symmetric cache complexity}
We show improved algorithms for a number of problems when the number of entries per cell differs from the dimension of input/output arrays.
Such algorithms are for the GAP recurrence, protein accordion folding, and the RNA recurrence.
We show that the previous cache bound $O(n^3/B\sqrt{M})$ for the GAP recurrence and protein accordion folding is not optimal, and we improve the bounds in Theorem~\ref{thm:gap} and~\ref{thm:paf}.
For the RNA recurrence, we show an optimal cache complexity of $\Theta(n^4/BM)$ in Theorem~\ref{thm:gap}, which improves the best existing result by $O(M^{3/4})$.

\myparagraph{Asymmetric cache complexity}
By applying the asymmetric version for the \ourcomp{} computation discussed in Section~\ref{sec:algorithm}, we show a uniform approach to provide write-efficient algorithms for all DP recurrences in this section.
We also shown the optimality of all these algorithms regarding asymmetric cache complexity, expect for the one for the GAP recurrence.

\myparagraph{Parallelism}
The parallelism of these algorithms is provided by the parallel algorithms discussed in Section~\ref{sec:para}.
Polylogarithmic \depth{} can be achieved in computing the 2-knapsack recurrence, and linear \depth{} in LWS recurrence and protein accordion folding.
The linear \depth{} for LWS can be achieved by previous work~\cite{tang2015cache,dinh2016extending}, but they are not race-free and in the nested-parallel model.
Meanwhile, our algorithms are arguably simpler.

\subsection{LWS Recurrence}\label{sec:lws}

We start with the simple example of the LWS recurrence where optimal sequential upper bound in the symmetric setting is known~\cite{chowdhury2006cache}.
We show new results for lower bounds, write-efficient cache-oblivious algorithms, and new span bound.

The LWS (least-weighted subsequence) recurrence~\cite{hirschberg1987least} is one of the most commonly-used DP recurrences in practice.
Given a real-valued function $w(i,j)$ for integers $0\le i<j\le n$ and $D_0$, for $1\le j\le n$,
$$D_j=\min_{0\le i<j}\{D_i+w(i,j)\}$$
This recurrence is widely used as a textbook algorithm to compute optimal 1D clustering~\cite{kleinberg2006algorithm}, line breaking~\cite{knuth1981breaking}, longest increasing sequence, minimum height B-tree, and many other practical algorithms in molecular biology and geology~\cite{galil1992dynamic,galil1994parallel}, computational geometry problems~\cite{aggarwal1990applications}, and more applications in~\cite{kunnemann2017fine}.
Here we assume that $w(i,j)$ can be computed in constant work based on a constant size of input associated to $i$ and $j$, which is true for all these applications.
Although different special properties of the weight function $w$ can lead to specific optimizations, the study of recurrence itself is interesting, especially regarding cache efficiency and parallelism.

We note that the computation of this recurrence is a standard \kdcomp{2}.
Each cell $g(D_i,i,j)=D_i+w(i,j)$ and updates $D_j$ as the output entry, so Theorem~\ref{thm:sym-lower} and~\ref{thm:asym-lower} show lower bounds on cache complexity on this recurrence (the grid is (1/2)-full).

We now introduce cache-oblivious implementation considering the data dependencies.
Chowdhury and Ramachandran~\cite{chowdhury2006cache} solves the recurrence with $O(n^2)$ work and $O(n^2/BM)$ symmetric cache complexity.
The algorithm is simply a divide-and-conquer approach and we describe and extend it based on \ourcomp{s}.
A task of range $(p,q)$ computes the cells $(i,j)$ such that $p\le i<j\le q$.
To compute it, the algorithm generates two equal-size subtasks $(p,r)$ and $(r+1,q)$ where $r=(p+q)/2$, solves the first subtask $(p,r)$ recursively, then computes the cells corresponding to $w(i,j)$ for $p\le i\le r<j \le q$, and lastly solves the subtask $(r+1,q)$ recursively.
Note that the middle step also matches a \kdcomp{2} with no dependencies between the cells, which can be directly solved using the algorithms in Section~\ref{sec:algorithm}.
This leads to the cache complexity and \depth{} to be:
\[Q(n)=2Q(n/2)+Q_{2C}(n/2)\]
\[D(n)=2D(n/2)+D_{2C}(n/2)\]
Here $2C$ denotes the computation of a \kdcomp{2}.
The recurrence is root-dominated with base cases $Q(M)=\Theta(M/B)$ and $D(1)=1$.
This solves to the following theorem.

\begin{theorem}\label{thm:lws}
The LWS recurrence can be computed in $\Theta(n^2)$ work, $\displaystyle \Theta\left(\frac{n^2}{BM}\right)$ and $\displaystyle \Theta\left(\frac{\wcost{}^{1/2} n^2}{BM}\right)$ optimal  symmetric and asymmetric cache complexity respectively, and $O(n)$ \depth{}.
\end{theorem}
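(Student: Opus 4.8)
The plan is to verify correctness of the divide-and-conquer scheme described above, then solve the two recurrences, and finally appeal to the lower bounds of Section~\ref{sec:lower} to establish optimality. First I would confirm that the recursion respects the data dependencies of $D_j=\min_{0\le i<j}\{D_i+w(i,j)\}$. The invariant is that a call on range $(p,q)$ is issued only after every $D_i$ with $i\le p$ is final, and the call finalizes $D_{p+1},\dots,D_q$. The left recursion on $(p,r)$ finalizes the left half; the middle phase evaluates exactly the cells $(i,j)$ with $p\le i\le r<j\le q$, each reading the now-final $D_i$ and contributing a candidate to $D_j$ through $\oplus=\min$. Since these cells share no output-to-input dependency among themselves, they constitute a genuine dependency-free \kdcomp{2} and may be solved directly by Algorithm~\ref{algo:comp}; the right recursion on $(r+1,q)$ then finalizes the right half. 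Counting non-empty cells gives $\binom{n}{2}=\Theta(n^2)$, each of unit cost, hence $\Theta(n^2)$ work, and the grid is $(1/2)$-full.

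Next I would solve $Q(n)=2Q(n/2)+Q_{2C}(n/2)$. Substituting the matching upper bounds for a size-$m$ \kdcomp{2} from Theorem~\ref{thm:sym-cost} and Theorem~\ref{thm:asym-cost} in the case $k=2$, namely $Q_{2C}(m)=\Theta(m^2/BM)$ symmetrically and $Q_{2C}(m)=\Theta(\wcost^{1/2}m^2/BM)$ asymmetrically, the additive term is $\Theta(n^2/BM)$ (resp.\ $\Theta(\wcost^{1/2}n^2/BM)$). Because $2\cdot(n/2)^2=n^2/2$, the per-level contributions decrease geometrically, so the recurrence is root-dominated and totals $\Theta(n^2/BM)$ (resp.\ $\Theta(\wcost^{1/2}n^2/BM)$); one checks that the base case $Q(M)=\Theta(M/B)$ contributes only $\Theta(n/B)$ over all $\Theta(n/M)$ leaves, which the root term dominates whenever $n\ge M$.

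For the span the key observation is that the three phases cannot overlap: the right subtask reads values finalized by the middle phase, which in turn reads the output of the left subtask. Thus they are strictly sequential and the span obeys $D(n)=2D(n/2)+D_{2C}(n/2)$ with $D_{2C}(m)=O(\log^2 m)$ from Theorem~\ref{thm:sym-cost}. This recurrence is leaf-dominated, so its $\Theta(n)$ leaves outweigh the polylogarithmic per-node overhead and it solves to $D(n)=O(n)$.

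Finally, optimality follows by viewing the entire computation as the single $(1/2)$-full \kdcomp{2} of size $n$ identified in the first step. Theorem~\ref{thm:sym-lower} and Theorem~\ref{thm:asym-lower} with $k=2$ yield $\Omega(n^2/BM)$ and $\Omega(\wcost^{1/2}n^2/BM)$, matching the upper bounds, and the $\alpha$-fullness with $\alpha=1/2$ changes neither bound as noted in Section~\ref{sec:kdcomp}. I expect this last step to be the only delicate point: the asymmetric lower bound is stated under the CBCO paradigm, so I must observe that the algorithm above---constant branching, range-ratio partitioning, and cache-obliviousness---indeed falls under that paradigm, so that the asymmetric complexity is optimal within it.
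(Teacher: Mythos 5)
Your proposal is correct and follows essentially the same route as the paper: identify the recurrence as a $(1/2)$-full \kdcomp{2} for the lower bounds, use the Chowdhury--Ramachandran divide-and-conquer with the middle phase solved as a dependency-free \kdcomp{2}, and solve the recurrences $Q(n)=2Q(n/2)+Q_{2C}(n/2)$ (root-dominated) and $D(n)=2D(n/2)+D_{2C}(n/2)$ (leaf-dominated, giving $O(n)$). You supply more detail than the paper does---in particular the explicit dependency argument, the geometric-decay calculation, and the observation that the asymmetric optimality claim is conditional on the CBCO paradigm---all of which is consistent with the paper's terser treatment.
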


\hide{
Notice that Galil and Park~\cite{galil1994parallel} denoted that a task $(p,q)$ can be solved in $O(n'^3)$ work and $O(\log^2 n')$ \depth{} for $n'=q-p$, based on recursive matrix multiplication.
Since matrix multiplication is a \kdcomp{3}, plugging in the result from Section~\ref{sec:mm} gives the cache complexity as $O(n'^2(1+n'/\sqrt{M})/B)$.
Thus the cache-oblivious algorithm to compute LWS recurrence can switch to this matrix-multiplication-based approach when a task has size less or equal to $t=O(\sqrt{N})$, which does not include extra work, decreases the \depth{}, but may potentially increase the I/O cost.
More specifically, we have:
\begin{corollary}
LWS recurrence can be computed in $\Theta(n^2)$ work, $\displaystyle \Theta\left(\frac{nt}{B}+\frac{nt^2}{B\sqrt{M}}+\frac{n^2}{BM}\right)$ and $\displaystyle \Theta\left(\frac{\wcost{}nt}{B}+\frac{\wcost^{1/3}nt^2}{B\sqrt{M}}+\frac{\wcost^{1/2}n^2}{BM}\right)$ cache complexity in symmetric and asymmetric memory respectively, and $O(n/t\cdot\log^2 n)$ and $O(\wcost{}n/t\cdot\log^2 n)$ \depth{}, for $t=O(\sqrt{N})$.
\end{corollary}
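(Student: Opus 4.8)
The plan is to analyze the \emph{hybrid} algorithm hinted at just before the statement: run the divide-and-conquer scheme of Theorem~\ref{thm:lws} — split a task $(p,q)$ at its midpoint $r$, recurse on $(p,r)$, compute the cross \kdcomp{2} over $p\le i\le r<j\le q$, then recurse on $(r+1,q)$ — but halt the recursion once a subtask reaches size $t$ and solve each such size-$t$ leaf directly with the Galil--Park matrix-multiplication/closure solver~\cite{galil1994parallel}, which runs in $O(t^3)$ work and $O(\log^2 t)$ \depth{}. Correctness is inherited from Theorem~\ref{thm:lws}: the recursion still visits subtasks in the left-before-cross-before-right order, so when a leaf $(p,p+t)$ is reached its entries already incorporate every external contribution from indices $i<p$ (added by ancestor cross-grids), and the leaf solver need only resolve the internal dependencies $p\le i<j\le p+t$.

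For the work, the cross \kdcomp{2}s summed over the recursion tree contribute $\Theta(n^2)$ exactly as in Theorem~\ref{thm:lws}, while the $n/t$ leaves contribute $(n/t)\cdot O(t^3)=O(nt^2)$. Since $t=O(\sqrt{n})$ by hypothesis (reading $N=n$), the leaf work is $O(n^2)$ and the total stays $\Theta(n^2)$; this is precisely the role of the constraint $t=O(\sqrt{N})$.

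For the cache complexity I would set up $Q(n)=2Q(n/2)+Q_{2C}(n/2)$ for $n>t$ with the leaf base case $Q(t)=Q_{MM}(t)$. The cross term is root-dominated, so summing $\sum_i 2^i\,\Theta\!\big((n/2^i)^2/(BM)\big)$ down to the truncation level reproduces the $\Theta(n^2/(BM))$ term symmetrically and $\Theta(\wcost^{1/2}n^2/(BM))$ asymmetrically (Theorem~\ref{thm:asym-lower} with $k=2$), independent of where the recursion stops. Each leaf is a size-$t$ \kdcomp{3} (matrix multiplication), so by the $k=3$ bounds its cache is $O\!\big(t^2(1+t/\sqrt{M})/B\big)$ symmetric (Theorem~\ref{thm:sym-cost}) and $O\!\big(\wcost t^2/B+\wcost^{1/3}t^3/(\sqrt{M}B)\big)$ asymmetric (Theorem~\ref{thm:asym-cost}), where the $t^2/B$ resp.\ $\wcost t^2/B$ summand is the output-array cost that dominates once $t\lesssim\sqrt{M}$. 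Multiplying by the $n/t$ leaves yields the first two summands $nt/B+nt^2/(B\sqrt{M})$ and $\wcost nt/B+\wcost^{1/3}nt^2/(B\sqrt{M})$; adding the cross-grid term gives exactly the two claimed bounds.

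The \depth{} follows from $D(n)=2D(n/2)+O(\log^2 n)$ for $n>t$ — the two recursive subtasks are sequentialized by the dependency while the cross \kdcomp{2} has $O(\log^2)$ \depth{} (Theorem~\ref{thm:sym-cost}) — together with the leaf base case $D(t)=O(\log^2 t)$. Unrolling, the leaf coefficient is $2^{\log(n/t)}=n/t$ and the cross terms sum to $O((n/t)\log^2 n)$, giving $O((n/t)\log^2 n)$; charging each critical-path write by $\wcost$ inflates this to $O(\wcost\, n/t\cdot\log^2 n)$ in the asymmetric setting. I expect the main obstacle to be the leaf cache accounting: one must verify the size-$t$ \kdcomp{3} bound across both regimes $t\gtrless\sqrt{M}$ (and, asymmetrically, across the threshold $t\gtrless\wcost^{2/3}\sqrt{M}$ where the $\wcost^{1/3}t^3/(\sqrt{M}B)$ term switches on), and argue that truncating the outer recurrence at $t$ rather than at $M$ leaves its root-dominated $\Theta(n^2/(BM))$ total intact — the one place where a careless reuse of Theorem~\ref{thm:lws}'s base case could go wrong.
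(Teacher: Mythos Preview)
Your proposal is correct and follows essentially the same approach as the paper: the paper's justification (the few sentences immediately preceding the corollary) sketches exactly the hybrid you describe --- run the divide-and-conquer of Theorem~\ref{thm:lws}, switch to the Galil--Park matrix-multiplication solver on subtasks of size $\le t$, and charge each size-$t$ leaf the \kdcomp{3} cache bound $O(t^2(1+t/\sqrt{M})/B)$ --- and your write-up simply fleshes out the work, cache, and \depth{} recurrences that the paper leaves implicit. Your identification of the delicate point (leaf cache accounting across the $t\gtrless\sqrt{M}$ threshold and the effect of truncating the outer recurrence above its natural base case) is apt and matches where the paper's terse argument would need care.
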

}

\subsection{GAP Recurrence}\label{sec:gap}

We now consider the GAP recurrence that the analysis of the lower bounds and the new algorithm make use of multiple grid computation.
The GAP problem~\cite{galil1989speeding,galil1994parallel} is a generalization of the edit distance problem that
has many applications in molecular biology, geology, and speech recognition.
Given a source string $X$ and a target string $Y$, other than changing one character in the string, we can apply a sequence of consecutive deletes that corresponds to a gap in $X$, and a sequence of consecutive inserts that corresponds to a gap in $Y$.
For simplicity here we assume both strings have length $n$, but the algorithms and analyses can easily be adapted to the more general case.
Since the cost of such a gap is not necessarily equal to the sum of the costs of each individual deletion (or insertion) in that gap,
we define $w(p,q)$ $(0\le p < q\le n)$ as the cost of deleting the substring of $X$ from $(p+1)$-th to $q$-th character, $w'(p,q)$ for inserting the substring of $Y$ accordingly, and
$r(i,j)$ as the cost to change the $i$-th character in $X$ to $j$-th character in $Y$.

Let $D_{i,j}$ be the minimum cost for such transformation from the prefix of $X$ with $i$ characters to the prefix of $Y$ with $j$ characters, the recurrence for $i,j>0$ is:
\[\displaystyle D_{i,j}=\min\left\{\begin{matrix}\min_{0\le q<j}\{D_{i,q}+w'(q,j)\}\\\min_{0\le p<i}\{D_{p,j}+w(p,i)\}\\D_{i-1,j-1}+r(i,j)\end{matrix}\right.\]
corresponding to either replacing a character, inserting or deleting a substring.
The boundary is set to be $D_{0,0}=0$, $D_{0,j}=w(0,j)$ and $D_{i,0}=w'(0,i)$.
The diagonal dependency from $D_{i-1,j-1}$ will not affect the asymptotic analysis since it will at most double the memory footprint, so it will not show up in the following analysis.

The best existing algorithms on GAP Recurrence~\cite{chowdhury2006cache,tang2017} have symmetric cache complexity of $O(n^3/B\sqrt{M})$.
This upper bound seems to be reasonable, since in order to compute $D_{i,j}$, we need the input of two vectors $D_{i,q}$ and $D_{p,j}$, which is similar to matrix multiplication and other algorithms in Section~\ref{sec:numerical}.
However, as indicated in Section~\ref{sec:kdcomp},
each update in GAP only requires one entry, while matrix multiplication has two.
Therefore, if we ignore the data dependencies, the first line of the GAP recurrence can be viewed as $n$ LWS recurrences, independent of the dimension of $i$ (similarly for the second line).
This derives a lower bound on cache complexity to be that of an LWS recurrence multiplied by $2n$, which is $\Omega(n^3/BM)$ (assuming $n>M$).
Hence, the gap between the lower and upper bounds is $\Theta(\sqrt{M})$.

We now discuss an I/O-efficient algorithm to close this gap.
This algorithm is not optimal, but reduce it to $1+o(1)$.
How to remove the low-order term remains as an open problem.
The new algorithm is similar to Chowdhury and Ramachandran's approach~\cite{chowdhury2006cache} based on divide-and-conquer to compute the output $D$. 
The algorithm recursively partitions $D$ into four equal-size quadrants $D_{00}$, $D_{01}$, $D_{10}$ and $D_{11}$, and starts to compute $D_{00}$ recursively.
After this is done, it uses the computed value in $D_{00}$ to update $D_{01}$ and $D_{10}$.
Then the algorithm computes $D_{01}$ and $D_{10}$ within their own ranges, updates $D_{11}$ using the results from $D_{01}$ and $D_{10}$, and solves $D_{11}$ recursively at the end.
The high-level idea is shown in Figure~\ref{fig:gap}.

\begin{figure}[tp]
\centering
  \includegraphics[width=.8\columnwidth]{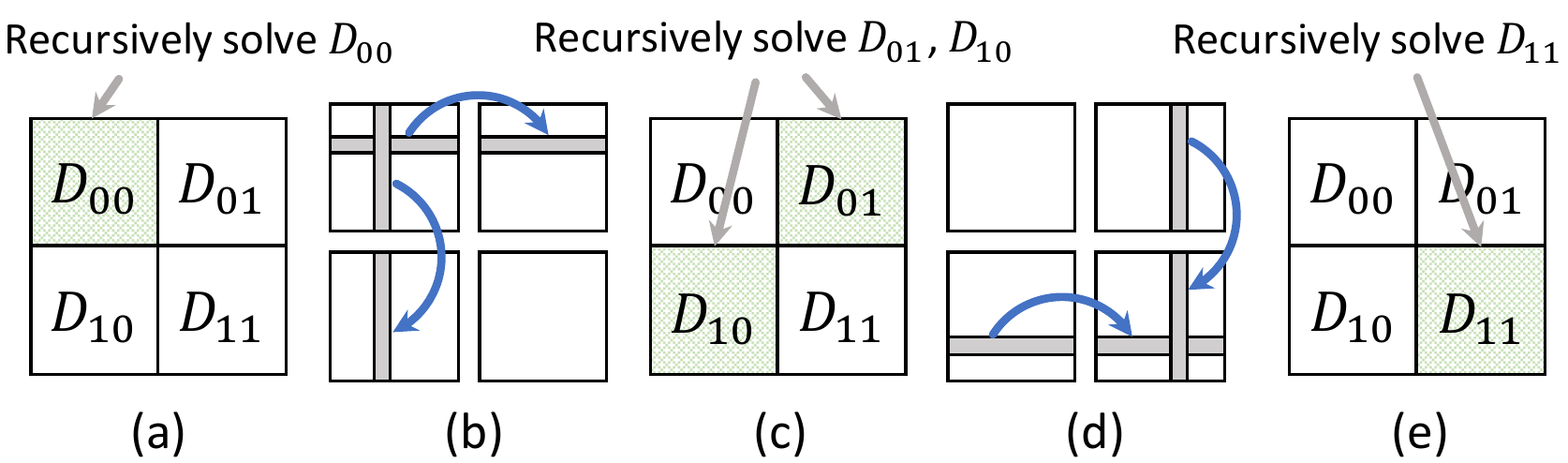}
  \caption{The new cache-oblivious algorithm for GAP recurrences ($n$ is the input size).  The algorithm has five steps.
  Step (a) first recursively solves the $D_{00}$ quadrant, then Step (b) apply $n/2$ inter-quadrant column updates and $n/2$ row updates, each corresponding to a \kdcomp{2}.
  After that, Step (c) recursively solves $D_{01}$ and $D_{10}$, Step (d) applies another $n$ inter-quadrant updates, and finally Step (e) recursively solves $D_{11}$.  More details about maintaining cache-efficiency is described in Section~\ref{sec:gap} in details.}\label{fig:gap}
\end{figure}

We note that in Steps (b) and (d), the inter-quadrant updates compute $2\times (n'/2)$ LWS recurrences (with no data dependencies) each with size $n'/2$ (assuming $D$ has size $n'\times n'$).
Therefore, our new algorithm reorganizes the data layout and the order of computation to take advantage of our I/O-efficient and parallel algorithm on \kdcomp{2}s.
Since the GAP recurrence has two independent sections one in a column and the other in a row, we keep two copies of $D$, one organized in column major and the other in row major.
Then when computing on the inter-quadrant updates as shown in Steps (b) and (d), we start $2\times (n'/2)$ parallel tasks each with size $n'/2$ and compute a \kdcomp{2} on the corresponding row or column, taking the input and output with the correct representation.
These updates require work and cache complexity shown in Theorem~\ref{thm:lws}.
We also need to keep the consistency of the two copies.
After the update of a quadrant $D_{01}$ or $D_{10}$ is finished, we apply a matrix transpose~\cite{blelloch2010low} to update the other copy of this quadrant by taking a $\min$ as the associative operator $\oplus$, so that the two copies of $D$ are consistent before Steps~(c) and~(e).
The cost of the transpose is a lower-order term.
For the quadrant $D_{11}$, we wait until the two updates from $D_{01}$ and $D_{10}$ finish, and then apply the matrix transpose to update the values in each other.
It is easy to check that by induction, the values in both copies in a quadrant are update-to-date at the beginning of each recursion in Step~(c) and~(e).  

Our new algorithm still requires $\Theta(n^3)$ work since it does not require extra asymptotic work.
The cache complexity and \depth{} satisfy:
\[Q(n)=4Q(n/2)+4(n/2)\cdot Q_{2C}(n/2)\]
\[D(n)=3D(n/2)+2D_{2C}(n/2)\]
The coefficients are easily shown by Figure~\ref{fig:gap}.
We first discuss the symmetric setting.
The base cases are $Q(\sqrt{M})=O(M/B)$ and $Q_{2C}(m)=O(m/B)$ for $m\le M$.  
This is a ``balanced'' recurrence with $O(M/B)$ I/O cost per level for $\log_2 \sqrt{M}$ levels.
This indicates $Q(M)=O((M/B)\log_2\sqrt{M})$.
The top-level computation is root dominated since the overall number of cells in a level decreases by a half after every recursion.  
Therefore, if $n>M$, $Q(n)=O(n^2Q(M)/M)+O(n)\cdot Q_{2C}(n)=O(n^2/B\cdot(n/M+\log_2 \sqrt{M}))$, which is the base-case cost plus the top-level cost.  
Otherwise, all input/output for each 2d grid in the inter-quadrant update fit in the cache, so we just need to pay $O(n^2/B)$ I/O cost for $\log_2(n/\sqrt{M})$ rounds of recursion, leading to $Q(n)=O(n^2\log_2(n/\sqrt{M})/B)$.
Similarly we can show the asymmetric results by plugging in different base cases.

\begin{theorem}\label{thm:gap}
The GAP recurrence can be computed in $\Theta(n^3)$ work, $O(n^{\log_23})$ \depth{}, symmetric cache complexity of $$\displaystyle O\left(\frac{n^2}{B}\cdot\left(\frac{n}{M}+\log_2\min\left\{\frac{n}{\sqrt{M}},\sqrt{M}\right\}\right)\right)$$ and asymmetric cache complexity of $$\displaystyle O\left(\frac{n^2}{B}\cdot\left(\frac{\wcost^{1/2}n}{M}+\wcost{}\log_2\min\left\{\frac{n}{\sqrt{M}},\sqrt{M}\right\}\right)\right)$$
\end{theorem}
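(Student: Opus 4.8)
The plan is to turn the divide-and-conquer structure of Figure~\ref{fig:gap} directly into recurrences for the three quantities and solve each, reducing everything to the already-established bounds for a single \kdcomp{2}. First I would record the \emph{work}: the algorithm performs the same $\Theta(n^3)$ cell updates as the standard GAP algorithm, and the reorganization into a row-major and a column-major copy, together with the matrix transposes, adds only $O(n^2)$ lower-order work per level, so the total work is $\Theta(n^3)$. The coefficients in the recurrences come from counting the steps of Figure~\ref{fig:gap}: three recursive calls lie on the critical path (Step~(a) on $D_{00}$, Step~(c) on $D_{01},D_{10}$ which run in parallel and count once, Step~(e) on $D_{11}$), while all four quadrants are counted for work and cache; the inter-quadrant updates of Steps~(b) and~(d) together amount to $4(n/2)$ independent \kdcomp{2}s of size $n/2$.

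For the \emph{span} I would solve $D(n)=3D(n/2)+2D_{2C}(n/2)$ with $D(1)=O(1)$, where $D_{2C}(m)=O(\log^2 m)$ is the span of a \kdcomp{2} from Theorem~\ref{thm:sym-cost}/\ref{thm:asym-cost}. Since the three-way branching dominates the polylogarithmic additive term, this recurrence is leaf-dominated and solves to $O(n^{\log_2 3})$.

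The heart of the argument is the \emph{cache complexity}, governed by $Q(n)=4Q(n/2)+4(n/2)\cdot Q_{2C}(n/2)$. In the symmetric setting I would substitute $Q_{2C}(m)=\Theta(m^2/(MB))$ when $m>M$ and $Q_{2C}(m)=O(m/B)$ once the $O(m)$-size input/output of one grid fits in cache ($m\le M$), with base case $Q(\sqrt M)=O(M/B)$, and then split into two regimes. While subproblems exceed size $M$ the update term $4(n/2)Q_{2C}(n/2)=\Theta(n^3/(MB))$ dominates $4Q(n/2)$, so this part is root-dominated and contributes $O(n^3/(MB))$. Once subproblems shrink below $M$ the inner grids fit in cache, the update term is $O(n^2/B)$ per level, and the recurrence is balanced over $\log_2(n/\sqrt M)$ levels; scaling the $(n/M)^2$ size-$M$ base subproblems by their $O((M^2/B)\log_2\sqrt M)$ cost gives $O((n^2/B)\log_2\sqrt M)$. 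Adding the two regimes, and noting that when $n\le M$ only the balanced regime occurs, produces the $\min\{n/\sqrt M,\sqrt M\}$ form. The asymmetric bound follows by rerunning the identical split with the asymmetric \kdcomp{2} cost $Q_{2C}(m)=\Theta(\wcost^{1/2}m^2/(MB))$ for $m>M$ (Theorem~\ref{thm:lws}) and in-cache write cost $O(\wcost m/B)$: the root-dominated term acquires a factor $\wcost^{1/2}$ and the balanced/logarithmic term a factor $\wcost$, yielding the stated two-term expression.

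The main obstacle is not solving the recurrences but justifying that the inter-quadrant updates actually attain the per-grid cost $Q_{2C}$. Because the column updates and row updates traverse $D$ along orthogonal directions, no single layout is cache-efficient for both; the remedy is to keep two copies of $D$, one row-major and one column-major, run each batch on the copy with the matching stride, and restore consistency after each quadrant by a cache-oblivious matrix transpose that combines the copies under $\min$. I would then need to check that this bookkeeping is harmless: transposing an $m\times m$ quadrant costs $O(m^2/B)$ cache and $O(m^2)$ work, so across the recursion it obeys $T(n)=4T(n/2)+O(n^2/B)$ and stays at or below the balanced/logarithmic term already present, while an induction on the recursion confirms both copies are up to date at the start of Steps~(c) and~(e). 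Establishing this consistency invariant, and verifying that the transpose and the doubled storage are genuinely lower-order, is the delicate part of the proof.
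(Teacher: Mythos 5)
Your proposal follows essentially the same route as the paper: the same quadrant divide-and-conquer with Steps (a)--(e), the same recurrences $Q(n)=4Q(n/2)+4(n/2)Q_{2C}(n/2)$ and $D(n)=3D(n/2)+2D_{2C}(n/2)$, the same two-copy row-major/column-major layout reconciled by cache-oblivious transposes, and the same split into a root-dominated regime (giving $n^3/(MB)$) and a balanced regime over $\log_2\min\{n/\sqrt{M},\sqrt{M}\}$ levels. It is correct, and if anything spells out the two-regime analysis and the asymmetric substitution slightly more explicitly than the paper does.
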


Compared to the previous results~\cite{chowdhury2006cache,chowdhury2010cache,chowdhury2016autogen,itzhaky2016deriving,tithi2015high,tang2017},
the improvement on the symmetric cache complexity is asymptotically $O(\sqrt{M})$ (i.e., $n$ approaching infinity).  For smaller range of $n$ that $O(\sqrt{M})\le n\le O(M)$, the improvement is $O(n/\sqrt{M}/\log(n/\sqrt{M}))$.
(The computation fully fit into the cache when $n<O(\sqrt{M})$.)
\smallskip

\hide{
Similar to the LWS recurrence, Galil and Park~\cite{galil1994parallel} indicate that a subtask of size $n'$ in the recursion can be solved in a matrix-multiplication-based approach, which uses $O(n'^6)$ work, $O(\log^2 n')$ \depth{} and $O(n'^4(1+n'^2/\sqrt{M})/B)$ cache complexity using the analysis in Section~\ref{sec:algorithm}.
As a result, if the \depth{} of the previous algorithm is not satisfied, we can switch to this version when a subtask has size no more than $t=O(n^{1/4})$.
\begin{corollary}
GAP recurrence can be computed in $\Theta(n^3)$ work,
$\displaystyle \Theta\left(\frac{n^2t^2}{B}+\frac{n^2t^4}{B\sqrt{M}}+\frac{n^3}{BM}\right)$
and
$\displaystyle \Theta\left(\frac{\wcost{}n^2t^2}{B}+\frac{\wcost^{1/3}n^2t^4}{B\sqrt{M}}+\frac{\wcost^{1/2}n^3}{BM}\right)$
cache complexity in symmetric and asymmetric memory respectively, and
$\tilde O((n/t)^{\log_23})$
and
$\tilde O(\wcost{}(n/t)^{\log_23})$
\depth{}, for $t=O({n}^{1/4})$.
\end{corollary}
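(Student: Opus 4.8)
The plan is to treat the five-step divide-and-conquer of Figure~\ref{fig:gap} as correct by construction --- the two copies of $D$, one row-major and one column-major, are kept consistent by the transposes so that every \kdcomp{2} in Steps~(b) and~(d) reads a contiguous input and writes a contiguous output --- and to focus entirely on solving the three recurrences already set up, namely
\[ Q(n)=4Q(n/2)+2n\cdot Q_{2C}(n/2),\qquad D(n)=3D(n/2)+2D_{2C}(n/2),\qquad W(n)=4W(n/2)+2n\cdot W_{2C}(n/2). \]
Here $Q_{2C},D_{2C},W_{2C}$ are the cache, span, and work of a single \kdcomp{2}, imported from Theorem~\ref{thm:lws} and Section~\ref{sec:algorithm}; the coefficient $2n$ counts the $2\cdot(n/2)$ independent size-$(n/2)$ grids in each of Steps~(b) and~(d), and the coefficient $3$ in the span counts the critical path $D_{00}\to(D_{01}\,\|\,D_{10})\to D_{11}$.

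First I would dispatch work and span. Using $W_{2C}(m)=\Theta(m^2)$, the work recurrence is root-dominated and gives $W(n)=\Theta(n^3)$. Using the parallel \kdcomp{2} span $D_{2C}(m)=O(\log^2 m)$ from Theorem~\ref{thm:sym-cost}, the polylogarithmic additive term is dwarfed by the branching factor $3$, so the recurrence is leaf-dominated and $D(n)=O(n^{\log_2 3})$; the $O(\log n)$-span transposes are likewise absorbed. Since the model does not distinguish the write penalty on $W$ or $D$, these bounds hold verbatim in both settings.

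The heart of the proof is the cache recurrence, which changes character at the scale where the inter-quadrant grids stop fitting the cache. I would split on $n$. For a subproblem of size $m$ with $m/2>\memsize$ the imported bound gives $Q_{2C}(m/2)=\Theta((m/2)^2/\memsize B)$, so the inter-quadrant term is $2m\cdot\Theta(m^2/\memsize B)=\Theta(m^3/\memsize B)$; summed over the $4^\ell$ subproblems at level $\ell$ this contributes $\Theta(n^3/2^\ell\memsize B)$, a geometrically decreasing (root-dominated) series totalling $\Theta(n^3/\memsize B)$. Once subproblems shrink to size $\le\memsize$, every grid fits the cache and $Q_{2C}(m)=O(m/B)$, so the inter-quadrant term becomes $O(m^2/B)$ and the recurrence $Q(m)=4Q(m/2)+O(m^2/B)$ is balanced (four subproblems each a quarter of the area), contributing $O(m^2/B)$ per level over $\log_2(m/\sqrt\memsize)$ levels down to the base case $Q(\sqrt\memsize)=O(\memsize/B)$, whose $(m/\sqrt\memsize)^2$ copies also sum to $O(m^2/B)$; hence $Q(\memsize)=O((\memsize^2/B)\log_2\sqrt\memsize)$. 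Combining, for $n>\memsize$ I bound the $(n/\memsize)^2$ size-$\memsize$ subproblems by $(n/\memsize)^2\cdot O((\memsize^2/B)\log_2\sqrt\memsize)=O((n^2/B)\log_2\sqrt\memsize)$, giving the stated symmetric bound $O\big(\tfrac{n^2}{B}(\tfrac n\memsize+\log_2\sqrt\memsize)\big)$; for $\sqrt\memsize\le n\le\memsize$ only the balanced part survives, giving $O(\tfrac{n^2}{B}\log_2(n/\sqrt\memsize))$, and the two cases merge into the stated $\min$ form.

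For the asymmetric bound I would reuse the same recursion tree but substitute the asymmetric \kdcomp{2} cost from Theorem~\ref{thm:asym-cost}, namely $Q_{2C}(m)=\Theta(\wcost^{1/2}m^2/\memsize B)$ above the cache threshold and $O(\wcost m/B)$ below it, with base case $Q(\sqrt\memsize)=O(\wcost\memsize/B)$. The root-dominated part scales by $\wcost^{1/2}$ to $\Theta(\wcost^{1/2}n^3/\memsize B)$, while the balanced part and the base cases each pick up a factor $\wcost$, yielding the $\wcost\log_2\min\{\cdot\}$ term. The main obstacle I anticipate is not the leading terms but the bookkeeping at the cache threshold: I must verify that maintaining the two copies via cache-oblivious transposes --- whose I/O obeys the balanced recurrence $T(m)=4T(m/2)+\Theta(m^2/B)$ down to size $\sqrt\memsize$, i.e.\ $\Theta((n^2/B)\log_2(n/\sqrt\memsize))$ --- is \emph{absorbed} into the stated bound rather than dominant. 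This reduces to the elementary inequality $\log_2(n/\memsize)\le n/\memsize$ for $n\ge\memsize$, which shows $\tfrac{n^2}{B}\log_2(n/\sqrt\memsize)=O\big(\tfrac{n^3}{\memsize B}+\tfrac{n^2}{B}\log_2\sqrt\memsize\big)$ and thus never exceeds the claimed complexity.
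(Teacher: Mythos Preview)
You have proved the wrong statement. Your argument establishes Theorem~\ref{thm:gap} (the main GAP bound with span $O(n^{\log_2 3})$ and cache $O\!\big(\tfrac{n^2}{B}(\tfrac{n}{M}+\log\min\{\tfrac{n}{\sqrt M},\sqrt M\})\big)$), not the corollary in question. The corollary is a \emph{span--cache tradeoff} parameterized by $t$: run the five-step GAP recursion only down to subproblem size $t$, and at that point \emph{switch} to the Galil--Park matrix-multiplication-based routine, which solves a size-$t$ GAP subproblem in $O(t^6)$ work, $O(\log^2 t)$ span, and $O\!\big(t^4(1+t^2/\sqrt{M})/B\big)$ cache complexity (a \kdcomp{3}, hence the $\wcost^{1/3}$ factor in the asymmetric bound). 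Nothing in your write-up mentions $t$, the switch, or the matrix-multiplication base case; the bounds you derive do not even have the same shape as those in the statement.

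Concretely, the corollary's three cache terms arise as follows. The top of the recursion tree (sizes $n$ down to $t$) contributes the root-dominated $\Theta(n^3/BM)$ from the inter-quadrant \kdcomp{2}s, exactly as you computed. The $(n/t)^2$ leaves, each a size-$t$ GAP subproblem solved via the $3$d-grid routine, contribute $(n/t)^2\cdot\Theta(t^4/B+t^6/(B\sqrt M))=\Theta(n^2t^2/B+n^2t^4/(B\sqrt M))$. The span recursion $D(n)=3D(n/2)$ now terminates after $\log_2(n/t)$ levels at leaves of polylogarithmic span, giving $\tilde O((n/t)^{\log_2 3})$. Finally the constraint $t=O(n^{1/4})$ is exactly what keeps the leaf work $(n/t)^2\cdot O(t^6)=O(n^2t^4)$ within the $\Theta(n^3)$ budget. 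Your proposal omits this entire mechanism.
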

}

\myparagraph{Protein accordion folding}
The recurrence for protein accordion folding~\cite{tithi2015high} is $D_{i,j}=\max_{1\le k< j-1}\{D_{j-1,k}+w(i,j,k)\}$ for $1\le j<i\le n$, with $O(n^2/B)$ cost to precompute $w(i,j,k)$.
Although there are some minor differences, from the perspective of the computation structure, the recurrence can basically be viewed as only containing the first section of the GAP recurrence.
As a result, the same lower bounds of GAP can also apply to this recurrence.

In terms of the algorithm, we can compute $n$ \kdcomp{2}s with the increasing order of $j$ from $1$ to $n$, such that the input are $D_{j-1,k}$ for $1\le k<j-1$ and the output are $D_{i,j}$ for $j<i\le n$.
For short, we refer to a \kdcomp{2} as a task.
However, the input and output arrays are in different dimensions.
To handle it, we use a imilar method to the GAP algorithm that keeps two separate copies for $D$, one in column-major and one in row-major.
They are used separately to provide the input and output for the \kdcomp{2}.
We apply the transpose in a divide-and-conquer manner---once the first half of the tasks finish, we transpose all computed values from the output matrix to the input matrix (which is a square matrix), and then compute the second half of the task.
Both matrix transposes in the first and second halves are applied recursively with geometrically decreasing sizes.
The correctness of this algorithm can be verified by checking the data dependencies so that all required values are computed and moved to the correct positions before they are used for further computations.

The cache complexity is from two subroutines: the computations of \kdcomp{2}s and matrix transpose.
The cost of \kdcomp{2}s is simply upper bounded by $n$ times the cost of each task, which is $O(n^2/B\cdot(1+n/M))$ and $O(n^2/B\cdot(\wcost{}+\wcost{}^{1/2}n/M))$ for symmetric and asymmetric cache complexity, and $O(n\log^2 n)$ \depth{}.
For matrix transpose, the cost can be verified in the following recursions.
\[Q(n)=2Q(n/2)+Q_{\smb{Tr}}(n/2)\]
\[D(n)=2D(n/2)+D_{\smb{Tr}}(n/2)\]
where $\mb{Tr}$ indicates the matrix transpose.
The base case is $Q(\sqrt{M})=O(M/B)$ and $D(1)=1$.
Applying the bound for matrix transpose~\cite{blelloch2010low} provides the following theorem.
\begin{theorem}\label{thm:paf}
Protein accordion folding can be computed in $O(n^3)$ work, symmetric and asymmetric cache complexity of $\displaystyle \Theta\left(\frac{n^2}{B}\left(1+\frac{n}{M}\right)\right)$ and $\displaystyle \Theta\left(\frac{n^2}{B}\left(\wcost{}+\frac{\wcost{}^{1/2}n}{M}\right)\right)$ respectively, and $O(n\log^2 n)$ \depth{}.
\end{theorem}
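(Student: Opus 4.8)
The plan is to realize the recurrence as a sequence of $n$ \kdcomp{2}s stitched together by a consistency-maintaining subroutine, and charge the cost to the two pieces separately. First I would fix $j$ and treat the computation of all entries $D_{i,j}$ (for $j<i\le n$) as one \kdcomp{2}: the input array is the row $\{D_{j-1,k}\}_{1\le k<j-1}$, the output array is the column $\{D_{i,j}\}_{j<i\le n}$, and the cell indexed by $(i,k)$ evaluates $D_{j-1,k}+w(i,j,k)$ and accumulates into $D_{i,j}$ under $\max$; here $w(i,j,k)$ plays the role of the constant per-cell data allowed in the \ourcomp{} definition. Since $D_{i,j}$ reads only row $j-1$, whose entries sit in columns $1,\dots,j-2$, processing $j$ in increasing order guarantees that every value a task reads has already been produced. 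I would record this as a short inductive invariant establishing correctness.

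The obstacle that makes this more than a direct application of Theorem~\ref{thm:lws} is that the input of each task is a row while its output is a column, so no single array layout makes both the reads and writes cache-efficient. I would resolve this exactly as in the GAP algorithm (Theorem~\ref{thm:gap}): maintain two copies of $D$, one row-major to supply the input rows and one column-major to receive the output columns, and keep them consistent with cache-oblivious matrix transpose~\cite{blelloch2010low} applied on a divide-and-conquer schedule. Concretely, to run a block of tasks I would recursively run its first half, transpose the newly produced columns from the column-major copy into the row-major copy, and then recursively run the second half, with the transpose size shrinking geometrically down the recursion. The correctness argument is again inductive: before entering either recursive half, every row that a task in that half will read is up to date in the row-major copy.

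For the cost I would add the two contributions. By the \kdcomp{2} bounds, a task with input row of length $a_j$ and output column of length $b_j$ costs $O\!\big((a_j+b_j)/B+a_jb_j/(MB)\big)$ symmetric and $O\!\big((a_j+\wcost{}b_j)/B+\wcost{}^{1/2}a_jb_j/(MB)\big)$ asymmetric; summing over $j$ with $\sum_j(a_j+b_j)=\Theta(n^2)$ and $\sum_j a_jb_j=\Theta(n^3)$ yields $O\!\big(\tfrac{n^2}{B}(1+\tfrac{n}{M})\big)$ and $O\!\big(\tfrac{n^2}{B}(\wcost{}+\tfrac{\wcost{}^{1/2}n}{M})\big)$. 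For the transpose I would solve
\[Q(n)=2Q(n/2)+Q_{\mathrm{Tr}}(n/2),\qquad D(n)=2D(n/2)+D_{\mathrm{Tr}}(n/2),\]
with base case $Q(\sqrt{M})=O(M/B)$ and the standard $Q_{\mathrm{Tr}}(m)=O(m^2/B)$ (symmetric) or $O(\wcost{}m^2/B)$ (asymmetric); both recurrences are root-dominated, so the transpose contributes only $O(n^2/B)$ and $O(\wcost{}n^2/B)$, which are absorbed by the grid terms. The \depth{} is dominated by the sequential chain of $n$ tasks, each of \depth{} $O(\log^2 n)$ from the parallel \ourcomp{} algorithm, giving $O(n\log^2 n)$.

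Finally, to upgrade the $O$ bounds to $\Theta$ I would supply the matching lower bound as in the GAP discussion: since protein accordion folding is the single-section analogue of GAP, it decomposes into $n$ overlaid LWS-type \kdcomp{2}s, so applying Theorem~\ref{thm:sym-lower}/\ref{thm:asym-lower} per row and summing forces $\Omega(n^3/(BM))$ (resp.\ $\Omega(\wcost{}^{1/2}n^3/(BM))$), to which the $\Omega(n^2/B)$ (resp.\ $\Omega(\wcost{}n^2/B)$) cost of materializing the $\Theta(n^2)$ output entries is added; these match the upper bounds. I expect the main technical care to lie not in the recurrences but in interleaving the transposes with the task schedule—proving that the divide-and-conquer consistency maintenance keeps both copies correct while adding only the lower-order $O(n^2/B)$ term, and (for the asymmetric case) moving the temporaries to the heap as in Section~\ref{sec:para-asym} so that the scheduling bound applies.
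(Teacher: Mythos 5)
Your proposal follows essentially the same route as the paper: decompose the recurrence into $n$ \kdcomp{2}s indexed by $j$ (row $j-1$ as input, column $j$ as output), maintain row-major and column-major copies of $D$ synchronized by divide-and-conquer transposes with geometrically decreasing sizes, bound the grid and transpose costs separately via the same recurrences, and obtain $O(n\log^2 n)$ \depth{} from the sequential chain of $n$ polylog-\depth{} tasks. Your per-task summation and the explicit lower-bound argument are slightly more detailed than the paper's presentation, but the proof is the same.
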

The cache bounds in both symmetric and asymmetric cases are optimal with respect to the recurrence.

\subsection{RNA Recurrence}\label{sec:rna}

The RNA problem~\cite{galil1994parallel} is a generalization of the GAP problem.
In this problem a weight function $w(p,q,i,j)$ is given, which is the cost to delete the substring of $X$ from $(p+1)$-th to $i$-th character and insert the substring of $Y$ from $(q+1)$-th to $j$-th character.
Similar to GAP, let $D_{i,j}$ be the minimum cost for such transformation from the prefix of $X$ with $i$ characters to the prefix of $Y$ with $j$ characters, the recurrence for $i,j>0$ is:
\[\displaystyle D_{i,j}=\min_{\substack{0\le p<i\\0\le q<j}}\{D_{p,q}+w(p,q,i,j)\}\]
with the boundary values $D_{0,0}$, $D_{0,j}$ and $D_{i,0}$.
This recurrence is widely used in computational biology, like to compute the secondary structure of RNA~\cite{waterman1978rna}.

While the cache complexity of this recurrence seems to be hard to analyze in previous papers, it fits into the framework of this paper straightforwardly.
Since each computation in the recurrence only requires one input value, the whole recurrence can be viewed as a \kdcomp{2}, with both the input and output as $D$.
The 2d grid is (1/4)-full, so we can apply the lower bounds in Section~\ref{sec:algorithm} here.

Again for a matching upper bound, we need to consider the data dependencies.
We can apply the similar technique as in the GAP algorithm to partition the output $D$ into four quadrants, compute $D_{00}$, then $D_{01}$ and $D_{10}$, and finally $D_{11}$.
Each inter-quadrant update corresponds to a 1/2-full \kdcomp{2}.
Here maintaining two copies of the array is not necessary with the tall-cache assumption $M=\Omega(B^2)$. 
Applying the similar analysis in GAP gives the following result:
\begin{theorem}\label{thm:rna}
The RNA recurrence can be computed in $\Theta(n^4)$ work, optimal symmetric and asymmetric cache complexity of $\displaystyle \Theta\left(\frac{n^4}{BM}\right)$ and $\displaystyle \Theta\left(\frac{\wcost{}^{1/2} n^4}{BM}\right)$ respectively, and $O(n^{\log_23})$ \depth{}.
\end{theorem}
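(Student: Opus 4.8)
The plan is to treat the RNA recurrence as a single $(1/4)$-full \kdcomp{2} of size $N=n^2$, read off both lower bounds immediately from Theorems~\ref{thm:sym-lower} and~\ref{thm:asym-lower}, and then exhibit a dependency-respecting divide-and-conquer algorithm whose cost recurrence is root-dominated by inter-quadrant grid updates. First I would fix the correspondence: linearizing the index pair $(i,j)$ into one of $N=n^2$ array positions turns the recurrence into the form $O_{i'}=\min_{j'} g(I_{j'},i',j')$ over $1\le i',j'\le N$, with $I=O=D$. The constraints $p<i$ and $q<j$ make roughly a quarter of the $N^2=n^4$ cells non-empty, so the grid is $(1/4)$-full and the work is $\Theta(n^4)$. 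Since $\alpha$-fullness alters neither bound, plugging $k=2$ and $N=n^2$ (so $M^{1/(k-1)}=M$) into the two theorems gives the symmetric lower bound $\Omega(n^4/(BM))$ and the asymmetric lower bound $\Omega(\wcost^{1/2}n^4/(BM))$.

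For the matching upper bound I would compute $D$ by recursion on the $n\times n$ table, splitting both index ranges at the midpoint into quadrants $D_{00},D_{01},D_{10},D_{11}$. Checking the dependence $D_{i,j}\leftarrow D_{p,q}$ with $p<i,\,q<j$ exhibits the admissible order: solve $D_{00}$; apply the updates $D_{00}\to D_{01}$ and $D_{00}\to D_{10}$; solve $D_{01}$ and $D_{10}$; apply the updates $D_{01}\to D_{11}$ and $D_{10}\to D_{11}$; solve $D_{11}$. Each inter-quadrant update has its input quadrant already finalized and no internal dependence, so it is exactly a $(1/2)$-full \kdcomp{2} of size $N'=(n/2)^2$, computed by the algorithm of Section~\ref{sec:algorithm} and parallelized as in Section~\ref{sec:para}. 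Unlike the GAP algorithm, this update is a genuine two-dimensional grid rather than a bundle of row/column recurrences, so under the tall-cache assumption $M=\Omega(B^2)$ a single copy of $D$ suffices and no row-/column-major duplication is needed.

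The cost then satisfies
\[
Q(n)=4\,Q(n/2)+4\,Q_{2C}(n^2/4),
\]
and since $Q_{2C}(N')=\Theta(N'^2/(BM))=\Theta(n^4/(BM))$ for $N'\ge M$, the work per recursion level decreases geometrically (level $\ell$ contributes $\Theta(4^{-\ell}n^4/(BM))$); the recurrence is root-dominated, and the base case $Q(\sqrt M)=O(M/B)$ aggregates to only $O(n^2/B)$, so $Q(n)=\Theta(n^4/(BM))$. Substituting the asymmetric base cases (the read footprint fits in cache one level earlier than the write footprint) into the same recurrence leaves the top level dominant and yields $\Theta(\wcost^{1/2}n^4/(BM))$, matching the lower bound and establishing optimality in both settings. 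For the span, the three sequentially dependent phases ($D_{00}$; the parallel pair $D_{01},D_{10}$; $D_{11}$) give
\[
D(n)=3\,D(n/2)+D_{2C}(n^2/4)=3\,D(n/2)+O(\log^2 n),
\]
which is leaf-dominated and solves to $O(n^{\log_2 3})$.

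The hard part will not be the algebra but verifying the two structural claims that make everything line up: (i) that the chosen quadrant order is consistent with all dependencies, so each inter-quadrant update reads only already-finalized values; and (ii) that each such update is precisely a \kdcomp{2} of size $(n/2)^2$ with the stated fullness, so the Section~\ref{sec:algorithm} bound applies term-by-term and the recurrence closes. A secondary point of care is confirming that the asymmetric base case does not disturb root-domination, so that the clean $\wcost^{1/2}$ factor is preserved.
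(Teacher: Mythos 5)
Your overall route is the same as the paper's: view the whole recurrence as a single $(1/4)$-full \kdcomp{2} of size $n^2$ to read off both lower bounds, then match them with a GAP-style quadrant decomposition ($D_{00}$, then $D_{01},D_{10}$, then $D_{11}$) whose inter-quadrant updates are realized as \kdcomp{2}s. The lower-bound instantiation, the root-dominated cost recurrence, the observation that one copy of $D$ suffices under the tall-cache assumption, and the span recurrence $D(n)=3D(n/2)+O(\log^2 n)$ all agree with the paper's (terse) argument.

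There is, however, a genuine omission in your list of inter-quadrant updates. For RNA, an entry $D_{i,j}$ with $(i,j)\in D_{11}$ depends on \emph{every} $D_{p,q}$ with $p<i$ and $q<j$, which includes the entire quadrant $D_{00}$; and since $w(p,q,i,j)$ is an arbitrary function of all four indices, the contribution of a pair $(p,q)\in D_{00}$ to an entry of $D_{11}$ cannot be routed through the chain $D_{00}\to D_{01}\to D_{11}$ or $D_{00}\to D_{10}\to D_{11}$. You therefore need a fifth update $D_{00}\to D_{11}$ (which is a \emph{full} \kdcomp{2} rather than a $1/2$-full one); as written, your algorithm never evaluates the grid cells with $(p,q)\in D_{00}$ and $(i,j)\in D_{11}$ and so computes the wrong recurrence. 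This is precisely where RNA differs from GAP: GAP's row/column dependency structure gives $D_{11}$ no direct dependence on $D_{00}$, so porting the four GAP updates verbatim silently drops a required one --- and your own checklist item (i) only asks whether updates read finalized values, which would not catch a missing update. The fix is local: insert $D_{00}\to D_{11}$ anywhere after $D_{00}$ is finalized and before $D_{11}$ is solved (it can even run concurrently with the solves of $D_{01}$ and $D_{10}$). It changes the constant $4$ to $5$ in your cost recurrence and nothing else, so all of the claimed work, cache-complexity, and span bounds survive intact.
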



\subsection{Parenthesis Recurrence}\label{sec:parenthesis}

The Parenthesis recurrence solves the following problem: given a linear sequence of objects, an associative binary operation on those objects, and the cost of performing that operation on any two given (consecutive) objects (as well as all partial results), the goal is to compute the min-cost way to group the objects by applying the operations over the sequence.
Let $D_{i,j}$ be the minimum cost to merge the objects indexed from $i+1$ to $j$ (1-based), the recurrence for $0\le i<j\le n$ is:
\[D_{i,j}=\min_{i<k<j}\{D_{i,k}+D_{k,j}+w(i,k,j)\}\]
where $w(i,k,j)$ is the cost to merge the two partial results of objects indexed from $i+1$ to $k$ and those from $k+1$ to $j$.
Here the cost function is only decided by a constant-size input associated to indices $i$, $j$ and $k$.
$D_{i,i+1}$ is initialized, usually as 0.
The applications of this recurrence include the matrix chain product, construction of optimal binary search trees, triangulation of polygons, and many others shown in~\cite{CLRS,galil1992dynamic,galil1994parallel,yao1980efficient}.

The computation of this recurrence (without considering dependencies) is a (1/3)-full \kdcomp{3}, which has the same lower bound shown in Corollary~\ref{thm:mm}.

The divide-and-conquer algorithm that computes this recurrence is usually hard to describe (e.g., it takes several pages in~\cite{chowdhury2016autogen,itzhaky2016deriving} although they also describe their systems simultaneously).
We claim that under the view of our \ourcomp{s}, this algorithm is conceptually as simple as the other algorithms.
Again this divide-and-conquer algorithm partitions the state $D$ into quadrants, but at this time one of them ($D_{10}$) is empty since $D_{i,j}$ does not make sense when $i>j$.
The quadrant $D_{01}$ depends on the other two.
The algorithm first recursively computes $D_{00}$ and $D_{11}$, then updates $D_{01}$ using the computed values in $D_{00}$ and $D_{11}$, and finally recursively computes $D_{01}$.
Here $D_{01}$ is square, so the recursive computation of $D_{01}$ is almost identical to that in RNA or GAP recurrence (although the labeling of the quadrants is slightly changed): breaking a subtask into four quadrants, recursively solving each of them in the correct order while applying inter-quadrant updates in the middle.
The only difference is when the inter-quadrant updates are processed, each update requires two values, one in $D_{01}$ and another in $D_{00}$ or $D_{11}$.
This is the reason that Parenthesis is 3d while RNA and GAP are 2d.
The correctness of this algorithm can be shown inductively.

\begin{theorem}\label{thm:paren}
The Parenthesis recurrence can be computed in $\Theta(n^3)$ work, optimal symmetric and asymmetric cache complexity of $\displaystyle \Theta\left(\frac{n^3}{B\sqrt{M}}\right)$ and $\displaystyle \Theta\left(\frac{\wcost{}^{1/3} n^3}{B\sqrt{M}}\right)$ respectively, and $O(n^{\log_23})$ \depth{}.
\end{theorem}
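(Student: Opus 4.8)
The plan is to prove the lower and upper bounds separately and then read off the work and \depth{} from the same recursion tree used for the upper bound. For the two lower bounds I would lean entirely on the reduction already noted before the statement: discarding the data dependencies, the set of updates $\{(i,k,j): i<k<j\}$ is a $(1/3)$-full \kdcomp{3}, with the three axes indexing the two read operands $D_{i,k}, D_{k,j}$ and the written entry $D_{i,j}$. Because the $\alpha$-full restriction affects neither the lower-bound argument nor the constant in the algorithm, Theorem~\ref{thm:sym-lower} and Theorem~\ref{thm:asym-lower} specialized to $k=3$ (equivalently Corollary~\ref{thm:mm}) directly yield the symmetric bound $\Omega(n^3/(B\sqrt{M}))$ and the asymmetric bound $\Omega(\wcost^{1/3} n^3/(B\sqrt{M}))$.

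For the upper bound I would make the informal divide-and-conquer precise through two mutually recursive cost functions: $Q_{\mathrm{tri}}(n)$ for a triangular block (the genuine recurrence on a contiguous range) and $Q_{\mathrm{sq}}(n)$ for a square block $D_{i,j}$ whose $i$ and $j$ are drawn from two disjoint ranges and whose two bounding triangular blocks are assumed solved. The triangular call solves $D_{00}$ and $D_{11}$ recursively and then the square block $D_{01}$, so $Q_{\mathrm{tri}}(n)=2Q_{\mathrm{tri}}(n/2)+Q_{\mathrm{sq}}(n/2)$. The square call splits into four sub-squares processed in dependency order, and the constant number of inter-quadrant updates interleaved between them are each a full \kdcomp{3} of size $n/2$ with no internal dependencies; these I charge directly to the matching $k=3$ algorithm of Theorem~\ref{thm:sym-cost} and Theorem~\ref{thm:asym-cost}, giving $Q_{\mathrm{sq}}(n)=4Q_{\mathrm{sq}}(n/2)+O(Q_{3C}(n/2))$ where $Q_{3C}(m)=\Theta(m^3/(B\sqrt{M}))$ symmetrically and $\Theta(\wcost^{1/3}m^3/(B\sqrt{M}))$ asymmetrically.

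Solving these is the routine part: both recursions are root-dominated because the $m^3$ grid term attached at each node outweighs the branching ($2$-way for $Q_{\mathrm{tri}}$, $4$-way for $Q_{\mathrm{sq}}$), so by the master theorem each collapses to its root term, $\Theta(n^3/(B\sqrt{M}))$ and $\Theta(\wcost^{1/3}n^3/(B\sqrt{M}))$ respectively, with base case $Q_{\mathrm{sq}}(\sqrt{M})=O(M/B)$ once a sub-block fits in cache. The $\Theta(n^3)$ work follows identically, or more simply by counting $\sum_{i<j}(j-i-1)$ updates. For the \depth{}, the square call runs its four sub-squares in three sequential stages (near-diagonal corner, then the two middle corners in parallel, then the far corner) separated by polylogarithmic-\depth{} grid updates, giving $D_{\mathrm{sq}}(n)=3D_{\mathrm{sq}}(n/2)+O(\log^2 n)=O(n^{\log_2 3})$; since $D_{00}$ and $D_{11}$ are independent, $D_{\mathrm{tri}}(n)=D_{\mathrm{tri}}(n/2)+D_{\mathrm{sq}}(n/2)$ also resolves to $O(n^{\log_2 3})$.

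The hard part, and where the author's one line ``shown inductively'' hides all the content, is the correctness and race-freedom of this staging. I would prove by induction on block size the invariant that whenever an inter-quadrant update fires, both operands $D_{i,k}$ and $D_{k,j}$ it reads already hold their final values; this reduces to checking, for every split index $k$, that the contributing pair lives in a block ordered strictly before the current one, which is precisely why the near-diagonal corner must be solved first and the far corner last. The same invariant shows that logically parallel stages write to disjoint output regions, so the computation is race-free under the nested-parallel model and the \depth{} recurrences above are legitimate. The one genuinely new complication relative to the RNA case of Theorem~\ref{thm:rna} is that each update now reads \emph{two} evolving operands instead of one, so the induction must certify availability of both, and it is this double bookkeeping (rather than any new cache or span estimate) that makes Parenthesis a \kdcomp{3} rather than a 2d problem.
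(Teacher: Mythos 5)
Your proposal is correct and follows essentially the same route as the paper: reduce the lower bounds to the $(1/3)$-full \kdcomp{3} (Corollary~\ref{thm:mm}), and obtain the upper bounds from the quadrant divide-and-conquer in which the triangular task spawns two triangular subtasks plus a square task, the square task spawns four square subtasks with inter-quadrant \kdcomp{3} updates, and all resulting cache recurrences are root-dominated. Your write-up is in fact more explicit than the paper's, which never writes down the mutually recursive $Q_{\mathrm{tri}}/Q_{\mathrm{sq}}$ and $D_{\mathrm{tri}}/D_{\mathrm{sq}}$ recurrences and dismisses the dependency-availability induction with one sentence; everything you add is consistent with the intended argument.
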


\subsection{2-Knapsack Recurrence}

Given $A_{i}$ and $B_i$ for $0\le i\le n$, the 2-knapsack recurrence computes:
\[D_i=\min_{0\le j\le i}\{A_{j}+B_{i-j}+w(j,i-j,i)\}\]
for $0\le i\le n$.
The cost function $w(j,i-j,i)$ relies on constant input values related on indices $i$, $i-j$ and $j$.
To the best of our knowledge, this recurrence is first discussed in this paper.
We name is the ``2-knapsack recurrence'' since it can be interpreted as the process of finding the optimal strategy in merging two knapsacks, given the optimal local arrangement of each knapsack stored in $A$ and $B$.
Although this recurrence seems trivial, the computation structure of this recurrence actually forms some more complicated DP recurrence.
For example, many problems on trees\footnote{Such problems can be: (1) computing a size-$k$ independent vertex set on a tree that maximizes overall neighbor size, total vertex weights, etc.; (2) tree properties such that the number of subtrees of certain size, tree edit-distance, etc.; (3) many approximation algorithms on tree embeddings of an arbitrary metric~\cite{blelloch2017efficient,blelloch2012parallel}; and many more.} can be solved using dynamic programming, such that the computation essentially applies the 2-knapsack recurrence a hierarchical (bottom-up) manner.

We start by analyzing the lower bound on cache complexity of the 2-knapsack recurrence.
The computational grid has two dimensions, corresponding to $i$ and $j$ in the recurrence.
If we ignore $B$ in the recurrence, then the recurrence is identical to LWS (with no data dependencies), so we can apply the lower bounds in Section~\ref{sec:lws} here.

Note that each update requires two input values $A_{j}$ and $B_{i-j}$, but they are not independent.
When computing a subtask that corresponding to $(i,j)\in[i_0,i_0+n_i]\times[j_0,j_0+n_j]$, the projection sizes on input and output arrays $A$, $B$ and $D$ are no more than $n_j$, $n_i+n_j$ and $n_i$.
This indicates that the computation of this recurrence is a variant of \kdcomp{2}, so we can use the same algorithm discussed in Section~\ref{sec:algorithm}.
\begin{corollary}\label{thm:knapsack}
2-knapsack recurrence can be computed using $O(n^2)$ work, optimal symmetric and asymmetric cache complexity of $\displaystyle \Theta\left(n^2\over B{M}\right)$ and $\displaystyle \Theta\left(\wcost^{1/2}n^2\over B{M}\right)$, and $O(\log^2 n)$ \depth{}.
\end{corollary}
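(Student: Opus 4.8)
The plan is to reduce the 2-knapsack recurrence to the machinery already developed for \ourcomp{s}, treating it as a variant of the \kdcomp{2} with a twist in how the input projections behave. The work bound $\Theta(n^2)$ is immediate: there are $n+1$ output entries $D_i$, and each is computed by a $\min$ over $i+1$ choices, giving $\sum_{i=0}^n (i+1) = \Theta(n^2)$ applications of $g(\cdot)$, each of unit cost by assumption. The interesting content is the cache complexity and the span. First I would set up the grid: the two dimensions are indexed by $i$ and $j$, each cell $(i,j)$ corresponds to the term $A_j + B_{i-j} + w(j,i-j,i)$ and updates output entry $D_i$ via the associative operator $\oplus = \min$. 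As the excerpt notes, if we ignore the $B$-array this is exactly an LWS-style \kdcomp{2} (with no inter-cell data dependencies, since each $D_i$ is a pure reduction over independent cells), so the lower bounds of Theorem~\ref{thm:sym-lower} and Theorem~\ref{thm:asym-lower} apply directly and give $\Omega(n^2/BM)$ and $\Omega(\wcost^{1/2}n^2/BM)$ respectively.

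For the matching upper bound, the key observation is the projection-size claim stated in the excerpt: for a subtask covering $(i,j) \in [i_0,i_0+n_i] \times [j_0,j_0+n_j]$, the projections onto $A$, $B$, and $D$ have sizes at most $n_j$, $n_i+n_j$, and $n_i$. The only deviation from a plain \kdcomp{2} is the second input array $B$, whose index $i-j$ is a \emph{sum} of the two coordinates rather than a single coordinate; consequently its projection is a diagonal band of width $n_i+n_j$ rather than a clean axis-aligned slab. I would verify that this only inflates the $B$-projection by a constant factor (since $n_i+n_j \le 2\max\{n_i,n_j\}$ when the subtask is kept roughly square), so the Loomis--Whitney-style counting in the proofs of Theorems~\ref{thm:sym-lower}, \ref{thm:asym-lower} and the upper-bound theorem of Section~\ref{sec:algorithm} goes through with the constants absorbed. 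Concretely, I would run Algorithm~\ref{algo:comp} with $k=2$ on this grid, partitioning $i$ and $j$ in the CBCO fashion (with the output-dimension partition weighted by $\wcost^{-(k-1)/k}=\wcost^{-1/2}$ in the asymmetric case), and invoke the upper-bound theorem verbatim to obtain $\Theta(n^2/BM)$ symmetric and $\Theta(\wcost^{1/2}n^2/BM)$ asymmetric cache complexity.

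The span bound $O(\log^2 n)$ follows from the parallel results of Section~\ref{sec:para}: since there are no data dependencies among the cells (the output is a simple reduction), the recurrence is a single \ourcomp{} computation rather than a dependent chain like LWS or GAP, so we inherit the $O(\log^2 n)$ depth of Theorem~\ref{thm:sym-cost} and Theorem~\ref{thm:asym-cost} directly, without the linear-depth blowup that data dependencies force in LWS.

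The main obstacle I expect is the \textbf{coupled, non-axis-aligned access to $B$}. Because $B$ is indexed by $i-j$, a cuboid subproblem does not project onto $B$ as a contiguous range unless the subproblem is near-square, and the lower-bound argument of Theorem~\ref{thm:asym-lower} implicitly assumes the per-array projections behave like independent slabs. I would need to confirm that the diagonal-band projection still contributes the same asymptotic entry count (width $\Theta(n_i+n_j)$, hence $\Theta(\max\{n_i,n_j\})$ entries for balanced subproblems), and that the data-layout issue—storing $B$ so that a subtask's diagonal band is read cache-efficiently—does not introduce extra misses. The natural fix, analogous to the row-major/column-major duplication used in the GAP and protein-accordion-folding algorithms, is to keep $B$ in a layout indexed so that the relevant band is contiguous per subproblem, or to rely on the tall-cache assumption to absorb the band width as a lower-order term. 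Making this layout argument precise, so that the constant-factor inflation genuinely stays constant under the scale-free recursion, is where the real care is required; everything else is a direct instantiation of the general $k$-d grid results with $k=2$.
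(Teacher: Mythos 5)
Your proposal follows essentially the same route as the paper: ignore $B$ to inherit the LWS/2-d-grid lower bounds, observe that a subtask on $[i_0,i_0+n_i]\times[j_0,j_0+n_j]$ projects onto $A$, $B$, $D$ with sizes at most $n_j$, $n_i+n_j$, $n_i$, and then run Algorithm~\ref{algo:comp} with $k=2$ and the dependency-free parallel scheme to get the matching upper bounds and $O(\log^2 n)$ span. The layout worry you flag is largely moot: $B$ is a one-dimensional array indexed by $i-j$, so every rectangular subtask projects onto it as a \emph{contiguous} interval of length $n_i+n_j+1$, which the (near-square, or fixed-ratio in the asymmetric case) recursion keeps within a constant factor of the other projections.
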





\section{Matrix Multiplication and All-Pair Shortest Paths}\label{sec:numerical}

In this section we discuss matrix multiplication, Kleene's algorithm on all pair shortest-paths, and some linear algebra algorithms including Strassen algorithm, Gaussian elimination (LU decomposition), and triangular system solver.
The common theme in these algorithms is that their computation structures are very similar to that of matrix multiplication, which is a \kdcomp{3}.
Strassen algorithm is slightly different and introduced separately in Appendix~\ref{sec:strassen}.
Other algorithms are summarized in Section~\ref{sec:linear} and the details are given in Appendix~\ref{sec:apsp}--\ref{sec:trs}.

We show improved asymmetric cache complexity for all problems.
For Gaussian elimination and triangular system solver, we show linear-depth race-free algorithms in both symmetric and asymmetric settings which are based on the parallel algorithm discussed in Section~\ref{sec:para}.
There exist work-optimal and sublinear depth algorithm for APSP~\cite{Tiskin01}, but we are unaware of how to make it I/O-efficient.
Compared to previous linear-\depth{} algorithms~\cite{tang2015cache,dinh2016extending}, our new algorithms are race-free and in the nested-parallel model.

\subsection{Matrix Multiplication}\label{sec:mm}

The combinatorial matrix multiplication (definition in Section~\ref{sec:prelim}) is one of the simplest cases of the \kdcomp{3}.
Given a semiring $(\times,+)$, in matrix multiplication each cell corresponds to a ``$\times$'' operation of the two corresponding input values and the ``$+$'' operation is associative.
Since there are no dependencies between the operations, we can simply apply Theorem~\ref{thm:sym-cost} and~\ref{thm:asym-cost} to get the following result.
\begin{corollary}\label{thm:mm}
Combinatorial matrix multiplication of size $n$ can be solved in $\Theta(n^3)$ work, optimal symmetric and asymmetric cache complexity of $\displaystyle \Theta\left(n^3\over B\sqrt{M}\right)$ and $\displaystyle \Theta\left(\wcost^{1/3}n^3\over B\sqrt{M}\right)$ respectively, and $O(\log^2 n)$ \depth{}.
\end{corollary}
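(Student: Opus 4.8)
The plan is to observe that combinatorial matrix multiplication is \emph{exactly} a 3d \sqr{} with no inter-cell data dependencies, and then invoke the general results for \ourcomp{s} essentially verbatim with $k=3$. First I would note that the recurrence $O_{i,j}=\sum_k (I_1)_{i,k}(I_2)_{k,j}$ instantiates the formal definition of a square \kdcomp{3} from Section~\ref{sec:kdcomp}: each cell $g((I_1)_{i,k},(I_2)_{k,j},i,j,k)$ performs the ``$\times$'' of its two input entries, and the associative operator $\oplus$ is the ring ``$+$'' accumulating into $O_{i,j}$. All $n^3$ cells are non-empty, so the grid is fully dense and the $\alpha$-full caveat is not even needed.

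The key simplification I would emphasize is that, unlike the DP recurrences of Section~\ref{sec:dp}, the cells here are mutually independent: the output array $O$ is distinct from the inputs, and no cell consumes a value produced by another cell. Hence there is no ordering constraint to respect, and the algorithm of Section~\ref{sec:algorithm} together with its parallelization in Section~\ref{sec:para} apply directly, with the temporary-array trick of Section~\ref{sec:para-sym} needed only to resolve write concurrence when the output ($k$-th) dimension is split. I would therefore simply apply Theorem~\ref{thm:sym-cost} and Theorem~\ref{thm:asym-cost} with $k=3$. These yield $\Theta(n^3)$ work, \depth{} $O(\log^2 n)$, symmetric cache complexity $\Theta(n^3/(M^{1/2}B))=\Theta(n^3/(B\sqrt{M}))$, and asymmetric cache complexity $\Theta(\wcost^{1/3}n^3/(M^{1/2}B))=\Theta(\wcost^{1/3}n^3/(B\sqrt{M}))$, matching the stated upper bounds.

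For optimality I would match these against the corresponding lower bounds specialized to $k=3$. The $\Theta(n^3)$ work is forced by the problem definition of the combinatorial variant ($n^2$ inner products, each of length $n$, as in Section~\ref{sec:prelim}). The symmetric cache lower bound $\Omega(n^3/(B\sqrt{M}))$ is Theorem~\ref{thm:sym-lower} with $k=3$, and the asymmetric lower bound $\Omega(\wcost^{1/3}n^3/(B\sqrt{M}))$ is Theorem~\ref{thm:asym-lower} with $k=3$, holding under the CBCO paradigm. Since the algorithm of Section~\ref{sec:algorithm} is a constant-branching, ratio-based divide-and-conquer, it lies in the CBCO paradigm and the matching upper and lower bounds coincide. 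I expect essentially no technical obstacle: the only points requiring care are confirming that matrix multiplication genuinely has no inter-cell dependencies so the unmodified grid algorithm (and hence both theorems) applies, and recording that the asymmetric optimality claim is relative to the CBCO paradigm rather than unconditional.
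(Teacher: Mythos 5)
Your proposal matches the paper's argument exactly: the paper likewise observes that combinatorial matrix multiplication is a dependency-free 3d square grid and directly applies Theorems~\ref{thm:sym-cost} and~\ref{thm:asym-cost}, with optimality following from the lower bounds of Section~\ref{sec:lower}. Your additional care in spelling out the $k=3$ specialization of Theorems~\ref{thm:sym-lower} and~\ref{thm:asym-lower} and the CBCO caveat is consistent with (and slightly more explicit than) what the paper states.
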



The result for the symmetric case is well-known, but that for the asymmetric case is new.

\subsection{All-Pair Shortest Paths, Gaussian Elimination, and Triangular System Solver}\label{sec:linear}

We now discuss the well-known cache-oblivious algorithms to solve all-pair shortest paths (APSP) on a graph, Gaussian elimination (LU decomposition), and triangular system solver.
These algorithms share similar computation structures and can usually be discussed together.
Chowdhury and Ramachandran~\cite{chowdhury2006cache,chowdhury2010cache} categorized matrix multiplication, APSP, and Gaussian Elimination into the Gaussian Elimination Paradigm (GEP) and discussed a unified framework to analyze complexity, parallelism and actual performance.
We show how the parallel depth and the asymmetric cache complexity can be improved using the algorithms we just introduced in Section~\ref{sec:algorithm} and~\ref{sec:para}.

We discuss the details of these cache-oblivious algorithms in the appendix.
The common theme in these algorithms is that, the computation takes one or two square matrix(ces) of size $n\times n$ as input, applies $n^3$ operations, and generates output as a square matrix of size $n\times n$.
Each output entry is computed by an inner product of one column and one row of either the input matrices or the output matrix in some intermediate state.
Namely, the output $A_{i,j}$ requires input $B_{i,k}$ and $C_{k,j}$ for $1\le k\le n$ ($A$, $B$ and $C$ may or may not be the same matrix).
Therefore, we can apply the results of \kdcomp{3}s on these problems.\footnote{For Gaussian Elimination $A_{k,k}$ is also required, but $A_{k,k}$ is only on the diagonal, which requires a lower-order of cache complexity to load when computing a sub-cubic of a  3d grid.}
Note that some of the grids are full (e.g., Kleene's algorithm) while others are not, but they are all $\alpha$-full and  contain $O(n^3)$ operations.

The data dependencies in these algorithms are quite different from each other, but the recursions for cache complexity $Q(n)$ and depth $D(n)$ for APSP, Gaussian elimination and triangular system solver are all in the following form:
\[
\begin{split}
Q(n)&=\beta\, Q(n/2)+\gamma\, Q_{\smb{3C}}(n/2)\\
D(n)&=2\, D(n/2)+\delta\, D_{\smb{3C}}(n/2)
\end{split}\]
where $Q_{\smb{3C}}(n)$ and $D_{\smb{3C}}(n)$ are the cache complexity and depth of a \kdcomp{3} of size $n$.
Here, as long as the the recursive subtask fits into the cache together with the \kdcomp{3} computation and the $\beta$, $\gamma$ and $\delta$ are constants and satisfy $\beta<8$, we can show the following bounds.

\begin{theorem}
Kleene's algorithm for APSP, Gaussian elimination and triangular system solver of size $n$ can be computed in $\Theta(n^3)$ work, symmetric and asymmetric cache complexity of $\displaystyle O\left(\frac{n^3}{B\sqrt{M}}\right)$ and $\displaystyle O\left(\frac{\wcost{}^{1/3} n^3}{B\sqrt{M}}\right)$ respectively, and $O(n)$ depth.
\end{theorem}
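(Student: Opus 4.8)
The plan is to solve the three stated recurrences directly, feeding in the single-grid bounds from Theorems~\ref{thm:sym-cost} and~\ref{thm:asym-cost}: for a \kdcomp{3} of size $m$ we have $Q_{\smb{3C}}(m)=O(m^3/(B\sqrt{M}))$ in the symmetric case, $Q_{\smb{3C}}(m)=O(\wcost^{1/3}m^3/(B\sqrt{M}))$ in the asymmetric case, and $D_{\smb{3C}}(m)=O(\log^2 m)$. The hypothesis $\beta<8$ is exactly what makes each recurrence root-dominated, and I would use it in all three parts. Since the theorem claims only upper bounds ($O$, not $\Theta$), no new lower-bound argument is needed here.

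First the work. Each grid update of size $n/2$ performs $\Theta(n^3)$ operations, so $T(n)=\beta\,T(n/2)+\Theta(n^3)$; since $\beta<8=2^3$ we have $\log_2\beta<3$, the recurrence is root-dominated, and $T(n)=\Theta(n^3)$. Next the cache complexity, which I would handle uniformly for the symmetric and asymmetric settings since the two grid bounds differ only by the factor $\wcost^{1/3}$. Substituting $Q_{\smb{3C}}(n/2)$ into $Q(n)=\beta\,Q(n/2)+\gamma\,Q_{\smb{3C}}(n/2)$, the grid updates at recursion depth $i$ contribute $\beta^i\cdot O((n/2^i)^3/(B\sqrt{M}))=O((\beta/8)^i\cdot n^3/(B\sqrt{M}))$ (times an extra $\wcost^{1/3}$ in the asymmetric case). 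Because $\beta<8$, the factors $(\beta/8)^i$ sum geometrically, so the total over all levels is $O(n^3/(B\sqrt{M}))$, dominated by the root. The recursion bottoms out when a subproblem of size $\sqrt{M}$ fits in cache at cost $O(M/B)$ (symmetric) or $O(\wcost M/B)$ (asymmetric), and there are $(n/\sqrt{M})^{\log_2\beta}$ such leaves; since $\log_2\beta<3$ the symmetric base-case total is at most $(n/\sqrt{M})^3\cdot O(M/B)=O(n^3/(B\sqrt{M}))$, matching the root.

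The depth recurrence is $D(n)=2\,D(n/2)+\delta\,D_{\smb{3C}}(n/2)=2\,D(n/2)+O(\log^2 n)$, where the factor $2$ reflects that the two recursive subproblems are separated by an inter-block update and hence sequentially dependent. Unrolling gives $D(n)=\sum_i 2^i\cdot O(\log^2(n/2^i))=O(n)$, since the geometrically growing terms are dominated near the bottom of the recursion.

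The step I expect to be the \emph{main obstacle} is the asymmetric base case: charging $\wcost$ per written word at the leaves naively yields $(n/\sqrt{M})^{\log_2\beta}\cdot\wcost M/B$, which for $\beta$ close to $8$ is as large as $\wcost\,n^3/(B\sqrt{M})$ and would inflate the bound by $\wcost^{2/3}$. The argument must show that in the assumed regime, where $n$ is polynomially larger than $M$ (consistent with the $n=\Omega(\wcost^{2/3}M)$ requirement of the grid algorithm), the leaf-write total is lower-order against the root term, so that the asymmetric bound stays $O(\wcost^{1/3}n^3/(B\sqrt{M}))$ rather than $O(\wcost\,n^3/(B\sqrt{M}))$; the ratio of the two is $\wcost^{2/3}n^{\log_2\beta-3}M^{(3-\log_2\beta)/2}\to 0$ since $\log_2\beta<3$. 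The remaining per-algorithm bookkeeping—checking that Kleene/APSP, Gaussian elimination, and the triangular system solver each instantiate a recurrence of the stated form with constant $\beta,\gamma,\delta$ obeying $\beta<8$, and that each recursive subtask plus its \kdcomp{3} update indeed fits in cache at the claimed scale—is deferred to the appendix.
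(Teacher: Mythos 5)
Your proposal is correct and follows essentially the same route as the paper: instantiate the root-dominated recurrences $Q(n)=\beta Q(n/2)+\gamma Q_{\smb{3C}}(n/2)$ and $D(n)=2D(n/2)+\delta D_{\smb{3C}}(n/2)$ with the \kdcomp{3} bounds from Theorems~\ref{thm:sym-cost} and~\ref{thm:asym-cost} and use $\beta<8$ to conclude the root term dominates. In fact you are more explicit than the paper on the one delicate point---verifying that the $\wcost$-weighted leaf writes stay lower-order under the standing assumption that $n$ is sufficiently larger than $M$---which the paper leaves implicit in its ``root-dominated'' claim.
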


We now discuss the cache-oblivious algorithms to solve all-pair shortest paths (APSP) on a graph with improved asymmetric cache complexity and linear span.
Regarding the \depth{}, Chowdhury and Ramachandran~\cite{chowdhury2010cache} showed an algorithm using $O(n\log^2 n)$ \depth{}.  
There exist work-optimal and sublinear \depth{} algorithm for APSP~\cite{Tiskin01}, but we are unaware of how to make it I/O-efficient while maintaining the same \depth{}.
Compared to previous linear \depth{} algorithms in~\cite{dinh2016extending}, our algorithm is race-free and in the classic nested-parallel model.
Also, we believe our algorithms are simpler.
The improvement is from plugging in the algorithms introduced in Section~\ref{sec:algorithm} and~\ref{sec:para} to Kleene's Algorithm.

An all-pair shortest-paths (APSP) problem takes a (usually directed) graph $G=(V,E)$ (with no negative cycles) as input.
Here we discuss the Kleene's algorithm (first mentioned in~\cite{kleene1951representation,munro1971efficient,fischer1971boolean,furman1970application}, discussed in full details in~\cite{Aho74}).
Kleene's algorithm has the same computational DAG as Floyd-Washall algorithm~\cite{Floyd:1962,warshall1962theorem}, but it is described in a divide-and-conquer approach, which is already I/O-efficient, cache-oblivious and highly parallelized.

\begin{algorithm}[!t]
\caption{\mf{Kleene}($A$)}
\label{alg:kleene}
\DontPrintSemicolon

\KwIn{Distance matrix $A$ initialized based on the input graph $G=(V,E)$}
\KwOut{Computed Distance matrix $A$}

\medskip

$A_{00}\gets \mf{Kleene}(A_{00})$\\
$A_{01}\gets A_{01}+A_{00}A_{01}$\\
$A_{10}\gets A_{10}+A_{10}A_{00}$\\
$A_{11}\gets A_{11}+A_{10}A_{01}$\\

\medskip

$A_{11}\gets \mf{Kleene}(A_{11})$\\
$A_{01}\gets A_{01}+A_{01}A_{11}$\\
$A_{10}\gets A_{10}+A_{11}A_{10}$\\
$A_{00}\gets A_{00}+A_{10}A_{01}$\\

\medskip

\Return{$A$}
\end{algorithm} 

The pseudocode of Kleene's algorithm is provided in Algorithm~\ref{alg:kleene}.
The matrix $A$ is partitioned into 4 submatrices indexed as $\begin{bmatrix}A_{00}&A_{01}\\A_{10}&A_{11}\end{bmatrix}$.
The matrix multiplication is defined in a closed semi-ring with $(+, \min)$.
Kleene's algorithm is a divide-and-conquer algorithm to compute APSP.
Its high-level idea is to first compute the APSP between the first half of the vertices only using the paths between these vertices.
Then by applying some matrix multiplication we update the shortest-paths between the second half of the vertices using the computed distances from the first half.
We then apply another recursive subtask on the second half vertices.
The computed distances are finalized, and we use them to again update the shortest-paths from the first-half vertices.

The cache complexity $Q(n)$ and \depth{} $D(n)$ of this algorithm follow the recursions:
\[
\begin{split}
Q(n)&=2Q(n/2)+6Q_{\smb{MM}}(n/2)\\
D(n)&=2D(n/2)+2D_{\smb{MM}}(n/2)
\end{split}\]
where $Q_{\smb{MM}}(n)$ is the I/O cost of a matrix multiplication of input size $n$.
The recursion of $Q(n)$ is root-dominated, which indicates that computing all-pair shortest paths of a graph has the same upper bound on cache complexity as matrix multiplication.

\begin{theorem}
Kleene's Algorithm to compute all-pair shortest paths of a graph of size $n$ uses $\Theta(n^3)$ work, has symmetric and asymmetric cache complexity of $\displaystyle \Theta\left(n^k\over M^{1/(k-1)}B\right)$ and $\displaystyle \Theta\left(n^k\wcost^{1/k}\over M^{1/(k-1)}B\right)$, and $O(n)$ \depth{}.
\end{theorem}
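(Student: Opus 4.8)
The plan is to establish the four claimed bounds---work $\Theta(n^3)$, the symmetric and asymmetric cache complexities, and $O(n)$ \depth{}---separately, in each case by substituting the matrix-multiplication bounds of Corollary~\ref{thm:mm} (instantiated at $k=3$, since each of the six products $A_{10}A_{01}$, $A_{00}A_{01}$, \dots\ is a \kdcomp{3}) into the two recurrences already stated for $Q(n)$ and $D(n)$, and then solving them. The work bound is immediate: Kleene's recursion gives $T(n)=2T(n/2)+\Theta(n^3)$, which is root-dominated and solves to $T(n)=\Theta(n^3)$.

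For the cache complexity I would substitute into $Q(n)=2Q(n/2)+6Q_{MM}(n/2)$. In the symmetric setting $Q_{MM}(m)=\Theta(m^3/(B\sqrt{M}))$, so the non-recursive term at the root is $\Theta(n^3/(B\sqrt{M}))$. The key observation is that the aggregate non-recursive cost at recursion depth $d$ is $2^d\cdot 6\,Q_{MM}(n/2^{d+1})$, which decreases geometrically in $d$ by the factor $2\cdot(1/2)^3=1/4<1$; hence the recurrence is root-dominated and $Q(n)=\Theta(n^3/(B\sqrt{M}))$, matching matrix multiplication. The asymmetric case is structurally identical: substituting $Q_{MM}(m)=\Theta(\wcost^{1/3}m^3/(B\sqrt{M}))$ only rescales every term by the factor $\wcost^{1/3}$, which is constant with respect to the recursion variable, so root-dominance is preserved and $Q(n)=\Theta(\wcost^{1/3}n^3/(B\sqrt{M}))$. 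In both settings the base case is a submatrix that fits in cache, contributing $O(M/B)$ reads and, in the asymmetric case, $O(\wcost M/B)$ for the final write; this is a lower-order term whenever $n^2\gg M$.

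For the \depth{} I would solve $D(n)=2D(n/2)+2D_{MM}(n/2)$ using $D_{MM}(m)=O(\log^2 m)$ from Corollary~\ref{thm:mm}. Here the recurrence is leaf-dominated: with $a=b=2$ we have $n^{\log_b a}=n$, while the additive term $O(\log^2 n)$ is polynomially smaller, so the master theorem gives $D(n)=\Theta(n)$. A point worth stating explicitly is that the factor $2$ multiplying $D(n/2)$ reflects genuine sequential composition rather than parallel branching: the second recursive call $\mf{Kleene}(A_{11})$ depends, through the updates of $A_{01}$, $A_{10}$ and then $A_{11}$, on the output of the first call $\mf{Kleene}(A_{00})$, so the two recursive subproblems cannot be overlapped, and the linear \depth{} is therefore intrinsic to this algorithm.

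The one place that needs care is tightness of the $\Theta$, i.e.\ the matching lower bounds. For these I would invoke Theorem~\ref{thm:sym-lower} and Theorem~\ref{thm:asym-lower}: Kleene's algorithm falls under the CBCO paradigm (constant fan-out, input-value-independent DAG, partition governed by the ratio of index ranges), and its computation contains genuine \kdcomp{3} matrix multiplications of size $\Theta(n)$, for instance $A_{10}A_{01}$ at the top level, so the single-grid lower bounds transfer directly. I expect the main obstacle to be arguing that these embedded multiplications already force the full $\Omega(\wcost^{1/3}n^3/(B\sqrt{M}))$ asymmetric lower bound under CBCO---one must check that the output writes of these sub-multiplications cannot be amortized away against the interleaved recursive Kleene calls that read and overwrite the same quadrants, and that the base-case write cost does not secretly dominate---rather than any difficulty in the upper-bound recurrences themselves, which are routine.
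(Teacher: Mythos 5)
Your proposal is correct and follows essentially the same route as the paper: the paper likewise plugs the \kdcomp{3} (matrix-multiplication) bounds into the recurrences $Q(n)=2Q(n/2)+6Q_{\smb{MM}}(n/2)$ and $D(n)=2D(n/2)+2D_{\smb{MM}}(n/2)$, observes that $Q$ is root-dominated so APSP inherits the cache complexity of matrix multiplication in both settings, and obtains linear \depth{} from the sequential dependence between the two recursive calls. Your extra care about the matching lower bounds is handled in the paper by the remark that Kleene's computation is an $\alpha$-full \kdcomp{3} with $O(n^3)$ operations, so Theorems~\ref{thm:sym-lower} and~\ref{thm:asym-lower} apply directly; otherwise the arguments coincide.
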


Similar to some other problems in this paper, the symmetric cache complexity is well-known, but the results in the asymmetric setting as well as the parallel approach are new.


\section{Conclusions and Future Work}\label{sec:conclusion}

In this paper, we shown improved cache-oblivious algorithm of many DP recurrences and in linear algebra, in the symmetric and asymmetric settings, both sequentially and in parallel. 
Our key approach is to show the correspondence between the recurrences and algorithms and the \ourcomp{}, and new results for computing the \ourcomp{}.
We believe that this abstraction provides a simpler and intuitive framework on better understanding these algorithms, proving lower bounds, and designing algorithms that are both I/O-efficient and highly parallelized.
It also provides a unified framework to bound the asymmetric cache complexity of these algorithms.

Based on the new perspective, we provide many new results, but we also observe many new open problems.
Among them are:
\begin{enumerate}
  \item The only non-optimal algorithm regarding cache complexity in this paper is for the GAP recurrence.  The I/O cost has an additional low-order term of $O((n^2\log M)/B)$.  Although in practice this term will not dominate the running time (the computation has $O(n^3)$ arithmetic operations), it is theoretically interesting to know if we can remove this term (even without the constraints of being cache-oblivious or based on divide-and-conquer).
  \item We show our algorithms in the asymmetric setting are optimal under the assumption of constant-branching (the CBCO paradigm).  Since the cache-oblivious algorithms discussed in this paper are leaf-dominate, we believe this assumption is always true.  We wonder if this assumption is necessary (i.e., if there exists a proof without using it, or if there are cache-oblivious algorithms on these problems with non-constant branching but still I/O-optimal).
  \hide{
  \item In Section~\ref{sec:tradeoff} we discuss how to trade higher I/O cost for lower depth.  It is interesting that all the small subtasks of different recurrences can be solved using some kind of the techniques based matrix multiplication, but checking or even trying to understand the details of these techniques is very complicated.  We wonder if we can solve the base cases in a unified framework.
  }
  \item The parallel symmetric cache complexity $Q_p$ on $p$ processors is $Q_1 + O(pDM/B)$, which is a loose upper bound when $D$ is large.  Although it might be hard to improve this bound on any general computation under randomized work-stealing, it can be a good direction to show tighter bounds on more regular computation structures like the \ourcomp{s} or other divide-and-conquer algorithms.  We conjecture that the additive term can be shown to be optimal (i.e., $O(pM/B)$) for the \ourcompfull{s}.
  \item Due to the page limit, in this paper we mainly discussed the lower bounds and algorithms for \sqrfull{s}, which is the setting of the problems in this paper (e.g., APSP, dynamic programming recurrences).  It is interesting to see a more general analysis on \ourcomp{s} with arbitrary shape, and such results may apply to other applications like the computation of tensor algebra.
\end{enumerate}

\section*{Acknowledgments}

This work is supported by the National
Science Foundation under CCF-1314590 and
CCF-1533858. Any opinions, findings, and conclusions
or recommendations expressed in this material are those of
the authors and do not necessarily reflect the views of the
National Science Foundation.
The authors thanks Yihan Sun for valuable discussions on various ideas, and Yihan Sun and Yuan Tang for the preliminary version of the algorithm in Section~\ref{sec:algorithm}. 

\bibliographystyle{plain}


\begin{thebibliography}{10}

\bibitem{Acar02}
U.~A. Acar, G.~E. Blelloch, and R.~D. Blumofe.
\newblock The data locality of work stealing.
\newblock {\em Theory Comput. Sys.}, 35(3), 2002.

\bibitem{AggarwalCS90}
Alok Aggarwal, Ashok~K. Chandra, and Marc Snir.
\newblock Communication complexity of prams.
\newblock {\em Theor. Comput. Sci.}, 71(1):3--28, 1990.

\bibitem{aggarwal1990applications}
Alok Aggarwal and Maria Klawe.
\newblock Applications of generalized matrix searching to geometric algorithms.
\newblock {\em Discrete Applied Mathematics}, 27(1-2), 1990.

\bibitem{AggarwalV88}
Alok Aggarwal and Jeffrey~S. Vitter.
\newblock The {I}nput/{O}utput complexity of sorting and related problems.
\newblock {\em Communications of the ACM}, 31(9), 1988.

\bibitem{aho1974design}
Alfred~V Aho and John~E Hopcroft.
\newblock {\em The design and analysis of computer algorithms}.
\newblock Pearson Education India, 1974.

\bibitem{Aho74}
Alfred~V. Aho, John~E. Hopcroft, and Jeffrey~D. Ullman.
\newblock {\em The Design and Analysis of Computer Algorithms}.
\newblock Addison-Wesley, Reading, MA, 1974.

\bibitem{Arge03}
Lars Arge.
\newblock The buffer tree: A technique for designing batched external data
  structures.
\newblock {\em Algorithmica}, 37(1), 2003.

\bibitem{arge2004cache}
Lars Arge, Gerth~St{\o}lting Brodal, and Rolf Fagerberg.
\newblock Cache-oblivious data structures.
\newblock {\em Handbook of Data Structures and Applications}, 27, 2004.

\bibitem{BallardDHS10}
Grey Ballard, James Demmel, Olga Holtz, and Oded Schwartz.
\newblock Communication-optimal parallel and sequential cholesky decomposition.
\newblock {\em {SIAM} J. Scientific Computing}, 32(6):3495--3523, 2010.

\bibitem{BallardDHS11}
Grey Ballard, James Demmel, Olga Holtz, and Oded Schwartz.
\newblock Minimizing communication in numerical linear algebra.
\newblock {\em {SIAM} J. Matrix Analysis Applications}, 32(3):866--901, 2011.

\bibitem{ballard2014communication}
Grey Ballard, James Demmel, Olga Holtz, and Oded Schwartz.
\newblock Communication costs of strassen's matrix multiplication.
\newblock {\em Communications of the ACM}, 57(2), 2014.

\bibitem{bellman1957dynamic}
Richard Bellman.
\newblock {\em Dynamic programming}.
\newblock Princeton University Press, 1957.

\bibitem{BBFGGMS16}
Naama Ben-David, Guy~E. Blelloch, Jeremy~T. Fineman, Phillip~B. Gibbons, Yan
  Gu, Charles McGuffey, and Julian Shun.
\newblock Parallel algorithms for asymmetric read-write costs.
\newblock In {\em ACM symposium on Parallelism in algorithms and architectures
  (SPAA)}, 2016.

\bibitem{BBFGGMS18implicit}
Naama Ben-David, Guy~E Blelloch, Jeremy~T Fineman, Phillip~B Gibbons, Yan Gu,
  Charles McGuffey, and Julian Shun.
\newblock Implicit decomposition for write-efficient connectivity algorithms.
\newblock In {\em IEEE International Parallel and Distributed Processing
  Symposium (IPDPS)}, 2018.

\bibitem{BlGi04}
G.~E. Blelloch and P.~B. Gibbons.
\newblock Effectively sharing a cache among threads.
\newblock In {\em ACM symposium on Parallelism in algorithms and architectures
  (SPAA)}, 2004.

\bibitem{BFGGS15}
Guy~E. Blelloch, Jeremy~T. Fineman, Phillip~B. Gibbons, Yan Gu, and Julian
  Shun.
\newblock Sorting with asymmetric read and write costs.
\newblock In {\em ACM symposium on Parallelism in algorithms and architectures
  (SPAA)}, 2015.

\bibitem{blelloch2016efficient}
Guy~E. Blelloch, Jeremy~T. Fineman, Phillip~B. Gibbons, Yan Gu, and Julian
  Shun.
\newblock Efficient algorithms with asymmetric read and write costs.
\newblock In {\em European Symposium on Algorithms (ESA)}, 2016.

\bibitem{BFGGS15arxiv}
Guy~E. Blelloch, Jeremy~T. Fineman, Phillip~B. Gibbons, Yan Gu, and Julian
  Shun.
\newblock Sorting with asymmetric read and write costs.
\newblock In {\em arXiv preprint:1603.03505}, 2016.

\bibitem{blelloch2018PPMM}
Guy~E Blelloch, Phillip~B Gibbons, Yan Gu, Charles McGuffey, and Julian Shun.
\newblock The parallel persistent memory model.
\newblock In {\em ACM symposium on Parallelism in algorithms and architectures
  (SPAA)}, 2018.

\bibitem{blelloch2010low}
Guy~E Blelloch, Phillip~B Gibbons, and Harsha~Vardhan Simhadri.
\newblock Low depth cache-oblivious algorithms.
\newblock In {\em ACM symposium on Parallelism in algorithms and architectures
  (SPAA)}, 2010.

\bibitem{blelloch2018parallelfull}
Guy~E Blelloch, Yan Gu, Julian Shun, and Yihan Sun.
\newblock Parallel write-efficient algorithms and data structures for
  computational geometry.
\newblock In {\em ACM symposium on Parallelism in algorithms and architectures
  (SPAA)}, 2018.

\bibitem{blelloch2017efficient}
Guy~E. Blelloch, Yan Gu, and Yihan Sun.
\newblock Efficient construction of probabilistic tree embeddings.
\newblock In {\em International Colloquium on Automata, Languages, and
  Programming (ICALP)}, 2017.

\bibitem{blelloch2016parallel}
Guy~E. Blelloch, Yan Gu, Yihan Sun, and Kanat Tangwongsan.
\newblock Parallel shortest paths using radius stepping.
\newblock In {\em ACM Symposium on Parallelism in Algorithms and Architectures
  (SPAA)}, 2016.

\bibitem{blelloch2012parallel}
Guy~E Blelloch, Anupam Gupta, and Kanat Tangwongsan.
\newblock Parallel probabilistic tree embeddings, k-median, and buy-at-bulk
  network design.
\newblock In {\em ACM symposium on Parallelism in algorithms and architectures
  (SPAA)}, 2012.

\bibitem{blumofe96}
R.~D. Blumofe, M.~Frigo, C.~F. Joerg, C.~E. Leiserson, and K.~H. Randall.
\newblock An analysis of dag-consistent distributed shared-memory algorithms.
\newblock In {\em ACM symposium on Parallelism in algorithms and architectures
  (SPAA)}, 1996.

\bibitem{blumofe1999scheduling}
Robert~D Blumofe and Charles~E Leiserson.
\newblock Scheduling multithreaded computations by work stealing.
\newblock {\em Journal of the ACM (JACM)}, 46(5):720--748, 1999.

\bibitem{bouton1901nim}
Charles~L Bouton.
\newblock Nim, a game with a complete mathematical theory.
\newblock {\em The Annals of Mathematics}, 3(1/4), 1901.

\bibitem{brodal2004cache}
Gerth~St{\o}lting Brodal.
\newblock Cache-oblivious algorithms and data structures.
\newblock In {\em Scandinavian Workshop on Algorithm Theory (SWAT)}, volume
  3111. Springer, 2004.

\bibitem{carson2016write}
Erin Carson, James Demmel, Laura Grigori, Nicholas Knight, Penporn
  Koanantakool, Oded Schwartz, and Harsha~Vardhan Simhadri.
\newblock Write-avoiding algorithms.
\newblock In {\em IEEE International Parallel and Distributed Processing
  Symposium (IPDPS)}, 2016.

\bibitem{chowdhury2017provably}
Rezaul Chowdhury, Pramod Ganapathi, Yuan Tang, and Jesmin~Jahan Tithi.
\newblock Provably efficient scheduling of cache-oblivious wavefront
  algorithms.
\newblock In {\em Proceedings of the 29th ACM Symposium on Parallelism in
  Algorithms and Architectures}, pages 339--350. ACM, 2017.

\bibitem{chowdhury2016autogen}
Rezaul Chowdhury, Pramod Ganapathi, Jesmin~Jahan Tithi, Charles Bachmeier,
  Bradley~C Kuszmaul, Charles~E Leiserson, Armando Solar-Lezama, and Yuan Tang.
\newblock Autogen: Automatic discovery of cache-oblivious parallel recursive
  algorithms for solving dynamic programs.
\newblock In {\em ACM Symposium on Principles and Practice of Parallel
  Programming (PPoPP)}, 2016.

\bibitem{Chowdhurythesis}
Rezaul~A. Chowdhury.
\newblock Cache-efficient algorithms and data structures: Theory and
  experimental evaluation.
\newblock {\em PhD Thesis, UT Austin}, 2007.

\bibitem{chowdhury2010cacheb}
Rezaul~A. Chowdhury, Hai-Son Le, and Vijaya Ramachandran.
\newblock Cache-oblivious dynamic programming for bioinformatics.
\newblock {\em IEEE/ACM Transactions on Computational Biology and
  Bioinformatics (TCBB)}, 7(3):495--510, 2010.

\bibitem{chowdhury2006cache}
Rezaul~A. Chowdhury and Vijaya Ramachandran.
\newblock Cache-oblivious dynamic programming.
\newblock In {\em ACM-SIAM symposium on Discrete algorithm (SODA)}, 2006.

\bibitem{chowdhury2008cache}
Rezaul~A. Chowdhury and Vijaya Ramachandran.
\newblock Cache-efficient dynamic programming algorithms for multicores.
\newblock In {\em ACM Symposium on Parallelism in algorithms and architectures
  (SPAA)}, pages 207--216. ACM, 2008.

\bibitem{chowdhury2010cache}
Rezaul~A. Chowdhury and Vijaya Ramachandran.
\newblock The cache-oblivious gaussian elimination paradigm: theoretical
  framework, parallelization and experimental evaluation.
\newblock {\em Theory of Computing Systems}, 47(4), 2010.

\bibitem{cole2010resource}
Richard Cole and Vijaya Ramachandran.
\newblock Resource oblivious sorting on multicores.
\newblock In {\em International Colloquium on Automata, Languages, and
  Programming}. Springer, 2010.

\bibitem{CLRS}
Thomas~H. Cormen, Charles~E. Leiserson, Ronald~L. Rivest, and Clifford Stein.
\newblock {\em Introduction to Algorithms (3rd edition)}.
\newblock MIT Press, 2009.

\bibitem{demaine2002cache}
Erik~D Demaine.
\newblock Cache-oblivious algorithms and data structures.
\newblock {\em Lecture Notes from the EEF Summer School on Massive Data Sets},
  8(4), 2002.

\bibitem{demmel2013communication}
James Demmel, David Eliahu, Armando Fox, Shoaib Kamil, Benjamin Lipshitz, Oded
  Schwartz, and Omer Spillinger.
\newblock Communication-optimal parallel recursive rectangular matrix
  multiplication.
\newblock In {\em International Parallel \& Distributed Processing Symposium
  (IPDPS)}, 2013.

\bibitem{dinh2016extending}
David Dinh, Harsha~Vardhan Simhadri, and Yuan Tang.
\newblock Extending the nested parallel model to the nested dataflow model with
  provably efficient schedulers.
\newblock In {\em ACM Symposium on Parallelism in Algorithms and Architectures
  (SPAA)}, 2016.

\bibitem{eppstein1989parallel}
David Eppstein and Zvi Galil.
\newblock Parallel algorithmic techniques for combinatorial computation.
\newblock {\em International Colloquium on Automata, Languages, and Programming
  (ICALP)}, 1989.

\bibitem{feng1999efficient}
Mingdong Feng and Charles~E Leiserson.
\newblock Efficient detection of determinacy races in cilk programs.
\newblock {\em Theory of Computing Systems}, 32(3):301--326, 1999.

\bibitem{fischer1971boolean}
Michael~J Fischer and Albert~R Meyer.
\newblock Boolean matrix multiplication and transitive closure.
\newblock In {\em IEEE Symposium on Switching and Automata Theory}, 1971.

\bibitem{Floyd:1962}
Robert~W. Floyd.
\newblock Algorithm 97: Shortest path.
\newblock {\em Commun. ACM}, 5(6), June 1962.

\bibitem{Frigo99}
Matteo Frigo, Charles~E. Leiserson, Harald Prokop, and Sridhar Ramachandran.
\newblock Cache-oblivious algorithms.
\newblock In {\em IEEE Symposium on Foundations of Computer Science (FOCS)},
  1999.

\bibitem{frigo2005cache}
Matteo Frigo and Volker Strumpen.
\newblock Cache oblivious stencil computations.
\newblock In {\em ACM International Conference on Supercomputing}, pages
  361--366, 2005.

\bibitem{furman1970application}
ME~Furman.
\newblock Application of a method of fast multiplication of matrices to problem
  of finding graph transitive closure.
\newblock {\em Doklady Akademii Nauk SSSR}, 194(3), 1970.

\bibitem{galil1989speeding}
Zvi Galil and Raffaele Giancarlo.
\newblock Speeding up dynamic programming with applications to molecular
  biology.
\newblock {\em Theoretical Computer Science}, 64(1), 1989.

\bibitem{galil1992dynamic}
Zvi Galil and Kunsoo Park.
\newblock Dynamic programming with convexity, concavity and sparsity.
\newblock {\em Theoretical Computer Science}, 92(1), 1992.

\bibitem{galil1994parallel}
Zvi Galil and Kunsoo Park.
\newblock Parallel algorithms for dynamic programming recurrences with more
  than {O}(1) dependency.
\newblock {\em Journal of Parallel and Distributed Computing}, 21(2), 1994.

\bibitem{GuThesis}
Yan Gu.
\newblock {\em Write-Efficient Algorithms}.
\newblock PhD Thesis, Carnegie Mellon University, 2019.

\bibitem{Gu2018}
Yan Gu, Yihan Sun, and Guy~E. Blelloch.
\newblock Algorithmic building blocks for asymmetric memories.
\newblock In {\em European Symposium on Algorithms (ESA)}, 2018.

\bibitem{hirschberg1975linear}
Daniel~S. Hirschberg.
\newblock A linear space algorithm for computing maximal common subsequences.
\newblock {\em Communications of the ACM}, 18(6):341--343, 1975.

\bibitem{hirschberg1987least}
Daniel~S Hirschberg and Lawrence~L Larmore.
\newblock The least weight subsequence problem.
\newblock {\em SIAM Journal on Computing}, 16(4), 1987.

\bibitem{HK81}
Jia{-}Wei Hong and H.~T. Kung.
\newblock {I/O} complexity: The red-blue pebble game.
\newblock In {\em Proc. {ACM} Symposium on Theory of Computing (STOC)}, 1981.

\bibitem{huang1994parallel}
S-HS Huang, Hongfei Liu, and Venkatraman Viswanathan.
\newblock Parallel dynamic programming.
\newblock {\em IEEE transactions on parallel and distributed systems}, 5(3),
  1994.

\bibitem{huang1992sublinear}
Shou-Hsuan~Stephen Huang, Hongfei Liu, and Venkatraman Viswanathan.
\newblock A sublinear parallel algorithm for some dynamic programming problems.
\newblock {\em Theoretical Computer Science}, 106(2), 1992.

\bibitem{IronyTT04}
Dror Irony, Sivan Toledo, and Alexandre Tiskin.
\newblock Communication lower bounds for distributed-memory matrix
  multiplication.
\newblock {\em J. Parallel Distrib. Comput.}, 64(9):1017--1026, 2004.

\bibitem{itzhaky2016deriving}
Shachar Itzhaky, Rohit Singh, Armando Solar-Lezama, Kuat Yessenov, Yongquan Lu,
  Charles Leiserson, and Rezaul Chowdhury.
\newblock Deriving divide-and-conquer dynamic programming algorithms using
  solver-aided transformations.
\newblock In {\em ACM International Conference on Object-Oriented Programming,
  Systems, Languages, and Applications}, 2016.

\bibitem{Jacob17}
Riko Jacob and Nodari Sitchinava.
\newblock Lower bounds in the asymmetric external memory model.
\newblock In {\em ACM Aymposium on Parallelism in Algorithms and Architectures
  (SPAA)}, 2017.

\bibitem{kleene1951representation}
Stephen~Cole Kleene.
\newblock Representation of events in nerve nets and finite automata.
\newblock Technical report, RAND PROJECT AIR FORCE SANTA MONICA CA, 1951.

\bibitem{kleinberg2006algorithm}
Jon Kleinberg and Eva Tardos.
\newblock {\em Algorithm design}.
\newblock Pearson Education India, 2006.

\bibitem{knuth1981breaking}
Donald~E Knuth and Michael~F Plass.
\newblock Breaking paragraphs into lines.
\newblock {\em Software: Practice and Experience}, 11(11), 1981.

\bibitem{koanantakool2016communication}
Penporn Koanantakool, Ariful Azad, Aydin Bulu{\c{c}}, Dmitriy Morozov, Sang-Yun
  Oh, Leonid Oliker, and Katherine Yelick.
\newblock Communication-avoiding parallel sparse-dense matrix-matrix
  multiplication.
\newblock In {\em IEEE International Parallel and Distributed Processing
  Symposium (IPDPS)}, 2016.

\bibitem{kunnemann2017fine}
Marvin K{\"u}nnemann, Ramamohan Paturi, and Stefan Schneider.
\newblock On the fine-grained complexity of one-dimensional dynamic
  programming.
\newblock {\em arXiv preprint arXiv:1703.00941}, 2017.

\bibitem{landau1986introducing}
Gad~M Landau and Uzi Vishkin.
\newblock Introducing efficient parallelism into approximate string matching
  and a new serial algorithm.
\newblock In {\em ACM symposium on Theory of computing (STOC)}, pages 220--230,
  1986.

\bibitem{loomis1949inequality}
Lynn~H Loomis and Hassler Whitney.
\newblock An inequality related to the isoperimetric inequality.
\newblock {\em Bulletin of the American Mathematical Society}, 55(10):961--962,
  1949.

\bibitem{maleki2016efficient}
Saeed Maleki, Madanlal Musuvathi, and Todd Mytkowicz.
\newblock Efficient parallelization using rank convergence in dynamic
  programming algorithms.
\newblock {\em Communications of the ACM}, 59(10):85--92, 2016.

\bibitem{maleki2016low}
Saeed Maleki, Madanlal Musuvathi, and Todd Mytkowicz.
\newblock Low-rank methods for parallelizing dynamic programming algorithms.
\newblock {\em ACM Transactions on Parallel Computing (TOPC)}, 2(4):26, 2016.

\bibitem{munro1971efficient}
Ian Munro.
\newblock Efficient determination of the transitive closure of a directed
  graph.
\newblock {\em Information Processing Letters}, 1(2), 1971.

\bibitem{rytter1988efficient}
Wojciech Rytter.
\newblock On efficient parallel computations for some dynamic programming
  problems.
\newblock {\em Theoretical Computer Science}, 59(3), 1988.

\bibitem{solomonik2013minimizing}
Edgar Solomonik, Aydin Buluc, and James Demmel.
\newblock Minimizing communication in all-pairs shortest paths.
\newblock In {\em IEEE International Parallel \& Distributed Processing
  Symposium (IPDPS)}, 2013.

\bibitem{tang2017}
Yuan Tang and Shiyi Wang.
\newblock Brief announcement: Star (space-time adaptive and reductive)
  algorithms for dynamic programming recurrences with more than {O}(1)
  dependency.
\newblock In {\em ACM Symposium on Parallelism in Algorithms and Architectures
  (SPAA)}, 2017.

\bibitem{tang2015cache}
Yuan Tang, Ronghui You, Haibin Kan, Jesmin~Jahan Tithi, Pramod Ganapathi, and
  Rezaul~A Chowdhury.
\newblock Cache-oblivious wavefront: improving parallelism of recursive dynamic
  programming algorithms without losing cache-efficiency.
\newblock In {\em {ACM} Symposium on Principles and Practice of Parallel
  Programming (PPOPP)}, 2015.

\bibitem{Tiskin01}
Alexandre Tiskin.
\newblock All-pairs shortest paths computation in the {BSP} model.
\newblock In {\em International Colloquium on Automata, Languages and
  Programming (ICALP)}, pages 178--189, 2001.

\bibitem{tithi2015high}
Jesmin~Jahan Tithi, Pramod Ganapathi, Aakrati Talati, Sonal Aggarwal, and
  Rezaul Chowdhury.
\newblock High-performance energy-efficient recursive dynamic programming with
  matrix-multiplication-like flexible kernels.
\newblock In {\em IEEE International Parallel and Distributed Processing
  Symposium (IPDPS)}, 2015.

\bibitem{toledo1997locality}
Sivan Toledo.
\newblock Locality of reference in lu decomposition with partial pivoting.
\newblock {\em SIAM Journal on Matrix Analysis and Applications}, 18(4), 1997.

\bibitem{van2019persistent}
Alexander van Renen, Lukas Vogel, Viktor Leis, Thomas Neumann, and Alfons
  Kemper.
\newblock Persistent memory i/o primitives.
\newblock In {\em International Workshop on Data Management on New Hardware},
  page~12. ACM, 2019.

\bibitem{warshall1962theorem}
Stephen Warshall.
\newblock A theorem on boolean matrices.
\newblock {\em Journal of the ACM (JACM)}, 9(1), 1962.

\bibitem{waterman1978rna}
Michael~S Waterman and Temple~F Smith.
\newblock Rna secondary structure: A complete mathematical analysis.
\newblock {\em Mathematical Biosciences}, 42(3-4), 1978.

\bibitem{Williams2018some}
Virginia~Vassilevska Williams.
\newblock On some fine-grained questions in algorithms and complexity.
\newblock In {\em International Congress of Mathematicians (ICM)}, 2018.

\bibitem{womble1993beyond}
David Womble, David Greenberg, Stephen Wheat, and Rolf Riesen.
\newblock Beyond core: Making parallel computer i/o practical.
\newblock In {\em Proceedings of the 1993 DAGS/PC Symposium}, 1993.

\bibitem{yao1980efficient}
F~Frances Yao.
\newblock Efficient dynamic programming using quadrangle inequalities.
\newblock In {\em Proceedings of the twelfth annual ACM symposium on Theory of
  computing}, 1980.

\end{thebibliography}

\appendix

\section{Strassen Algorithm}\label{sec:strassen}

Strassen algorithm computes matrix multiplication on a ring.
Given two input matrices $A$ and $B$ and the output matrix $C=AB$, the algorithm partitions $A$, $B$ and $C$ into quadrants, applies seven recursive matrix multiplications on the sums or the differences of the quadrants, and each quadrant of $C$ can be calculated by summing a subset of the seven intermediate matrices.
This can be done in $O(n^{\log_27})$ work, $O(n^{\log_27}/M^{\log_47-1}B)$ cache complexity and $O(\log^2 n)$ depth.

Technically the computation structure of Strassen is not a \ourcomp{}, but we can apply a similar idea in Section~\ref{sec:algorithm} to reduce asymmetric cache complexity.
We still use $r$ as the balancing factor between reads and writes (set to be $\wcost{}^{2/3}$ in classic matrix multiplication).
Given square input matrices, the algorithm also partition the output into $r$-by-$r$ submatrices, and then run the 8-way divide-and-conquer approach to compute the matrix multiplication.
This gives the following recurrences on work ($T$), reads, writes and depth based on the output size $n$:
\[
\begin{split}
T'(n)&=7T'(n/2)+O(n^2)\\
R'(n)&=7R'(n/2)+O(n^2/B)\\
W'(n)&=7W'(n/2)+O(\wcost{}n^2/B)\\
D'(n)&=D(n/2)+O(\log n)
\end{split}\]
with the base cases $T'(1)=r$, $R'(\sqrt{rM})=M/B$, $W'(\sqrt{M})=\wcost{}M/B$, and $D'(1)=1$.
They solve to
\[
\begin{split}
R(n)=r^2R'(n/r)= &\ O(n^{\log_27}r^{2-\log_47}M^{1-\log_47}/B)\\
W(n)=r^2W'(n/r)= &\ O(\wcost{}n^{\log_27}r^{2-\log_27}M^{1-\log_47}/B)\\
D(n)= &\ O(\log^2n)
\end{split}\]
The case when $r=\wcost^{\log_74}$ gives the minimized cache complexity of $$Q(n)=O(n^{\log_27}\wcost^{\log_716-1}M^{1-\log_47}/B)\approx O(n^{2.8}\wcost^{0.42}/BM^{0.4})$$ an $O(\wcost^{0.58})$ improvement over the non-write-efficient version.
In this setting the work is $O(n^{\log_27}\wcost{}^{\log_764-2})$, a factor of $O(\wcost^{0.14})$ or $O(\wcost^{1/7})$ extra work. 

\section{All-Pair Shortest-Paths (APSP)} \label{sec:apsp}

We now discuss the cache-oblivious algorithms to solve all-pair shortest paths (APSP) on a graph with improved asymmetric cache complexity and linear span.
Regarding the \depth{}, Chowdhury and Ramachandran~\cite{chowdhury2010cache} showed an algorithm using $O(n\log^2 n)$ \depth{}.  
There exist work-optimal and sublinear \depth{} algorithm for APSP~\cite{Tiskin01}, but we are unaware of how to make it I/O-efficient while maintaining the same \depth{}.
Compared to previous linear \depth{} algorithms in~\cite{dinh2016extending}, our algorithm is race-free and in the classic nested-parallel model.
Also, we believe our algorithms are simpler.
The improvement is from plugging in the algorithms introduced in Section~\ref{sec:algorithm} and~\ref{sec:para} to Kleene's Algorithm.

An all-pair shortest-paths (APSP) problem takes a (usually directed) graph $G=(V,E)$ (with no negative cycles) as input.
Here we discuss the Kleene's algorithm (first mentioned in~\cite{kleene1951representation,munro1971efficient,fischer1971boolean,furman1970application}, discussed in full details in~\cite{Aho74}).
Kleene's algorithm has the same computational DAG as Floyd-Washall algorithm~\cite{Floyd:1962,warshall1962theorem}, but it is described in a divide-and-conquer approach, which is already I/O-efficient, cache-oblivious and highly parallelized.

The pseudocode of Kleene's algorithm is provided in Algorithm~\ref{alg:kleene}.
The matrix $A$ is partitioned into 4 submatrices indexed as $\begin{bmatrix}A_{00}&A_{01}\\A_{10}&A_{11}\end{bmatrix}$.
The matrix multiplication is defined in a closed semi-ring with $(+, \min)$.
Kleene's algorithm is a divide-and-conquer algorithm to compute APSP.
Its high-level idea is to first compute the APSP between the first half of the vertices only using the paths between these vertices.
Then by applying some matrix multiplication we update the shortest-paths between the second half of the vertices using the computed distances from the first half.
We then apply another recursive subtask on the second half vertices.
The computed distances are finalized, and we use them to again update the shortest-paths from the first-half vertices.

The asymmetric cache complexity $Q(n)$ of this algorithm follows the recursion of:
\[
\begin{split}
Q(n)&=2Q(n/2)+6Q_{\smb{3C}}(n/2)\\
D(n)&=2D(n/2)+2D_{\smb{3C}}(n/2)
\end{split}\]
Considering this cost, the recursion is root-dominated, which indicates that computing all-pair shortest paths of a graph has the same upper bound on cache complexity as matrix multiplication.

\hide{
\begin{corollary}
All-pair shortest paths of a graph of size $n$ can be computed in $\displaystyle \Theta\left(n^k\over M^{1/(k-1)}B\right)$ and $\displaystyle \Theta\left(n^k\wcost^{1/k}\over M^{1/(k-1)}B\right)$ work respectively for the symmetric and \anp{} model, and $O(n)$ depth.
\end{corollary}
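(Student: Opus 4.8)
The plan is to establish all four claims---$\Theta(n^3)$ work, the symmetric and asymmetric cache complexities (both with $k=3$, so $\sqrt{M}=M^{1/(k-1)}$), and $O(n)$ \depth{}---by solving the divide-and-conquer recurrences induced by Algorithm~\ref{alg:kleene}, treating the matrix-multiplication bounds of Corollary~\ref{thm:mm} as a black box. Correctness of Kleene's algorithm for APSP in the closed $(\min,+)$ semiring is classical~\cite{Aho74}, so I would only remark that each of the six products is a min-plus multiplication, i.e.\ exactly a (possibly $\alpha$-full) \kdcomp{3} with the cell function $g$ playing the role of ``$+$'' and the associative operator $\oplus$ being $\min$. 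Hence the upper bounds of Theorem~\ref{thm:sym-cost} and~\ref{thm:asym-cost} and the lower bounds of Theorem~\ref{thm:sym-lower} and~\ref{thm:asym-lower} apply to each of these products directly.

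For the work, each level spawns two Kleene calls of size $n/2$ and six multiplications of size $n/2$, giving $W(n)=2W(n/2)+6\cdot\Theta((n/2)^3)=2W(n/2)+\Theta(n^3)$. The additive $\Theta(n^3)$ term dominates the $n^{\log_2 2}=n$ homogeneous growth, so the recurrence is root-dominated and $W(n)=\Theta(n^3)$. The cache complexity obeys $Q(n)=2Q(n/2)+6\,Q_{\mathrm{MM}}(n/2)$, with base case at $n=\Theta(\sqrt{M})$ where the whole submatrix (of $n^2\le M$ entries) fits in cache and costs only $O(n^2/B)$ reads (resp.\ $O(\wcost n^2/B)$ writes), a lower-order contribution. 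Substituting the symmetric bound $Q_{\mathrm{MM}}(n)=\Theta(n^3/(B\sqrt{M}))$, resp.\ the asymmetric $\Theta(\wcost^{1/3}n^3/(B\sqrt{M}))$, from Corollary~\ref{thm:mm}, the additive term again scales as $\Theta(n^3/(B\sqrt{M}))$ (with the extra $\wcost^{1/3}$ factor in the asymmetric case), so by the same root-domination the recurrence solves to the stated upper bounds. The matching lower bounds hold because the single product $A_{11}\gets A_{11}+A_{10}A_{01}$ alone is a full 3d grid of size $n/2$, to which Theorem~\ref{thm:sym-lower} and Theorem~\ref{thm:asym-lower} (the latter under the CBCO paradigm, which both Algorithm~\ref{algo:comp} and the constant-branching Kleene recursion satisfy) apply, forcing $\Omega(n^3/(B\sqrt{M}))$ and $\Omega(\wcost^{1/3}n^3/(B\sqrt{M}))$ respectively.

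For the \depth{}, the crucial observation is that the two recursive Kleene calls cannot overlap: the call on $A_{11}$ needs the already-updated $A_{11}$, whose update $A_{11}\gets A_{11}+A_{10}A_{01}$ depends on the output of the call on $A_{00}$ through the intermediate products $A_{01}\gets A_{01}+A_{00}A_{01}$ and $A_{10}\gets A_{10}+A_{10}A_{00}$. Thus both recursive calls sit on the critical path, so $D(n)=2D(n/2)+2\,D_{\mathrm{MM}}(n/2)$, where the parallel \kdcomp{3} algorithm of Section~\ref{sec:para} gives $D_{\mathrm{MM}}(n)=O(\log^2 n)$. Since $\log^2 n$ grows polynomially slower than the $n^{\log_2 2}=n$ homogeneous term, this recurrence is leaf-dominated and solves to $D(n)=\Theta(n)$, confirming the linear span. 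The multiplications within a level are race-free and nested-parallel by the Section~\ref{sec:para} construction (each writes into a fresh temporary output region before merging), so the whole algorithm remains in the standard model.

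I expect the main obstacle to be the \depth{} argument---specifically, justifying the coefficient $2$ in $D(n)=2D(n/2)+O(\log^2 n)$ by carefully tracing the dependence chain from $A_{00}$ through the inter-block products to $A_{11}$. This sequentialization is precisely what separates the linear span from the polylogarithmic span one might naively hope for, and it is the only genuinely non-routine step; the work and cache bounds then follow from standard root-domination of two-term recurrences with the 3d-grid costs of Corollary~\ref{thm:mm} substituted in.
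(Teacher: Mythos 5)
Your proposal is correct and follows essentially the same route as the paper: it sets up the recurrences $Q(n)=2Q(n/2)+6Q_{\mathrm{MM}}(n/2)$ and $D(n)=2D(n/2)+2D_{\mathrm{MM}}(n/2)$ for Kleene's algorithm, observes that the cache recurrence is root-dominated and the depth recurrence is leaf-dominated, and plugs in the $3$d-grid (matrix-multiplication) bounds. The only additions beyond the paper's terse argument are your explicit justification of the matching lower bound via the full product $A_{11}\gets A_{11}+A_{10}A_{01}$ and the tracing of the dependence chain forcing both recursive calls onto the critical path, both of which are consistent with what the paper leaves implicit.
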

}

\section{Gaussian Elimination}\label{sec:ge}

Gaussian elimination (without pivoting) is used in solving of systems of linear equations and computing LU
decomposition of symmetric positive-definite or diagonally dominant real matrices.
Given a linear system $AX=b$, the algorithm proceeds in two phases.
The first phase modifies $A$ into an upper triangular matrix (updates $B$ accordingly), which is discussed in this section.
The second phase solves the values of the variables using back substitution, which is shown in Section~\ref{sec:trs}.

The process of Gaussian elimination can be viewed as a three nested-loops and computing the value of $A_{i,j}$ requires $A_{i,k}$, $A_{k,j}$ and $A_{k,k}$ for all $1\le k<i$.
If required, the corresponding value of the LU decomposition matrix can be computed simultaneously.
The underlying idea of divide-and-conquer approach is almost identical to Kleene's algorithm,
which partitions $A$ into four quadrants $\begin{bmatrix}A_{00}&A_{01}\\A_{10}&A_{11}\end{bmatrix}$.
The algorithm: (1) recursively computes $A_{00}$; (2) updates $A_{10}$ and $A_{01}$ using $A_{00}$; (3) updates $A_{11}$ using $A_{10}$ and $A_{01}$; and (4) recursively computes $A_{11}$.

Note that each inter-quadrant update in step (2) and (3) is a \kdcomp{3}, which gives the following recurrence:
\begin{align*}
Q(n)&=2Q(n/2)+4Q_{\smb{3C}}(n/2)\\
D(n)&=2D(n/2)+3D_{\smb{3C}}(n/2)
\end{align*}

\section{Triangular System Solver}\label{sec:trs}

A Triangular System Solver computes the back substitution step in solving the linear system.
Here we assume that it takes as input a lower triangular $n\times n$
matrix $T$ (can be computed using the algorithm discussed in Section~\ref{sec:ge}) and a square matrix $B$ and outputs a square matrix $X$ such
that $TX = B$. A triangular system can be recursively decomposed
as:
\[
\begin{split}
\begin{bmatrix}B_{00} & B_{01} \\ B_{10} & B_{11} \end{bmatrix}&=
\begin{bmatrix}T_{00} & 0 \\ T_{10} & T_{11} \end{bmatrix}
\begin{bmatrix}X_{00} & X_{01} \\ X_{10} & X_{11} \end{bmatrix}\\&=
\begin{bmatrix}T_{00}X_{00} & T_{00}X_{01} \\ T_{10}X_{00}+T_{11}X_{10} & T_{10}X_{01}+T_{11}X_{11} \end{bmatrix}
\end{split}\]
such that four equally sized subquadrants $X_{00}$, $X_{01}$, $X_{10}$, and $X_{11}$ can be solved recursively.
In terms of parallelism, the two subtasks of $X_{00}$ and $X_{01}$ are independent, and need to be solved prior to the other independent subtasks $X_{10}$, and $X_{11}$.

The asymmetric cache complexity $Q(n)$ of this algorithm follows the recursion of:
\begin{align*}
Q(n)&=4Q(n/2)+2Q_{\smb{3C}}(n/2)\\
D(n)&=2D(n/2)+D_{\smb{3C}}(n/2)
\end{align*}
\hide{
where $Q_{\smb{MM}}(n)$ is the I/O cost of a matrix multiplication of input size $n$.
Considering this cost, the recursion is root-dominated, which indicates that a triangular system solver has the same upper bound on I/O cost as matrix multiplication.
}

\hide{
\subsection{Cholesky Decomposition}

Cholesky Decomposition takes the input as an $n\times n$ Hermitian,
positive-definite matrix $A$, and computes another $n\times n$ lower triangular matrix $L$ such that $A = LL^T$.
The following 2-way divide-and-conquer recurrence solves the problem:
\[
\begin{bmatrix}A_{00} & A_{10}^T \\ A_{10} & A_{11} \end{bmatrix}=
\begin{bmatrix}L_{00}L^T_{00} & L_{00}L^T_{10} \\ L_{10}L^T_{00} & L_{10}L^T_{01}+L_{11}L^T_{11} \end{bmatrix}
\]
To compute this, the top-left quadrant can be solved recursively, then the top-right quadrant can be computed using a Triangular System Solver, and lastly the bottom-right quadrant can be solved recursively again after a matrix multiplication and another substraction.

The asymmetric cache complexity $Q(n)$ of this algorithm follows the recursion of:
\begin{align*}
Q(n)=2Q(n/2)+Q_{\smb{TRS}}(n/2)+Q_{\smb{3C}}(n/2)\\
D(n)=2D(n/2)+D_{\smb{TRS}}(n/2)+D_{\smb{3C}}(n/2)
\end{align*}
}
\hide{
where $Q_{\smb{TRS}}(n)$ and $Q_{\smb{MM}}(n)$ are the I/O costs of a Triangular System Solver and matrix multiplication of input size $n$.
The recursion is root-dominated, which indicates that Cholesky Decomposition has the same upper bound on I/O cost as matrix multiplication.
}

\section{Discussions of Carson et al.~\cite{carson2016write}}\label{sec:carson}

Carson et al.\ also analyze algorithms that reduce the number of writes~\cite{carson2016write}.
They concluded that many cache-oblivious algorithms like matrix multiplication could not be write-avoiding.
Their definition of write-avoiding is different from our write-efficiency, and it requires the algorithm to reduce writes without asymptotically increasing reads.
Hence, their negative conclusion does not contradict the result in this paper.   

We now use matrix multiplication as an example that optimal number of reads leads to worse overall asymmetric cache complexity.
Let's say a write is $n$ times more expansive than a read.
One algorithm can apply $n^2$ inner products, and the asymmetric cache complexity is $O(n^3/B)$: $O(n/B)$ reads and $O(1/B)$ writes per inner product.
However, the algorithms using optimal number of reads requires $O(n^3/B\sqrt{M})$ reads and writes, so the overall cache complexity is $O(n\cdot n^3/B\sqrt{M})$.
The first algorithm requiring more reads is a factor of $O(n/\sqrt{M})$ better on the asymmetric cache complexity.
Notice that we always assume $n=\omega(\sqrt{M})$ since otherwise the whole computation is trivially in the cache and has no cost.
As a result, an algorithm with a good asymmetric cache complexity does not always need to be write-avoiding.

The algorithms on asymmetric memory in this paper all require extra reads, but can greatly reduce the overall asymmetric cache complexity compared to the previous cache-oblivious algorithms.
The goal of this paper is to find the optimal cache-oblivious algorithms for any given write-read asymmetry $\wcost{}$.

\end{document}